\documentclass[11pt]{article}

\usepackage[english]{babel} 
\usepackage[T1]{fontenc}    
\usepackage{amsmath,amsfonts,amsthm,amssymb}
\usepackage{bbm}
\usepackage[bookmarksnumbered=true]{hyperref}       
\hypersetup{
	colorlinks   = true,    
	urlcolor     = black,   
	linkcolor    = black,   
	citecolor    = black,   
}
\usepackage[a4paper, total={6in, 9in}]{geometry}
\usepackage[nameinlink]{cleveref}     
\usepackage[final]{microtype}
\numberwithin{equation}{section}

\newtheorem{theorem}{Theorem}[section]
\newtheorem{T}[theorem]{Theorem}

\newtheorem{Le}[theorem]{Lemma}
\newtheorem{Prop}[theorem]{Proposition}
\theoremstyle{definition}
\newtheorem{D}[theorem]{Definition}

\newtheorem{A}[theorem]{Assumption}

\newtheorem{Rs}[theorem]{Remarks}

\theoremstyle{remark}

\crefname{T}{Theorem}{Theorems}
\Crefname{T}{Theorem}{Theorems}
\crefname{C}{Corollary}{Corollaries}
\Crefname{C}{Corollary}{Corollaries}
\crefname{Con}{conjecture}{Conjectures}
\Crefname{Con}{Conjecture}{Conjectures}
\crefname{Le}{Lemma}{Lemmas}
\Crefname{Le}{Lemma}{Lemmas}
\crefname{Prop}{Proposition}{Propositions}
\Crefname{Prop}{Proposition}{Propositions}
\crefname{D}{Definition}{Definitions}
\Crefname{D}{Definition}{Definitions}
\crefname{E}{Example}{Examples}
\Crefname{A}{Assumption}{Assumptions}
\crefname{A}{Assumption}{Assumptions}
\Crefname{N}{Notation}{Notations}
\crefname{N}{Notation}{Notations}
\crefname{cond}{Condition}{Conditions}
\crefname{Appendix}{Appendix}{Appendices}

\newcommand\cE{{\mathcal E}}
\newcommand\cF{{\mathcal F}}

\newcommand\cH{{\mathcal H}}

\newcommand\cN{{\mathcal N}}

\newcommand\cR{{\mathcal R}}

\newcommand\frh{{\mathfrak H}}

\newcommand{\bQ}{\mathbb{Q}}
\newcommand{\bR}{\mathbb{R}}
\newcommand{\bX}{\mathbb{X}}
\newcommand{\bC}{\mathbb{C}}
\newcommand{\bH}{\mathbb{H}}
\newcommand{\bZ}{\mathbb{Z}}
\newcommand{\bS}{\mathbb{S}}

\newcommand{\ee}{{\rm e}}
\newcommand{\ii}{{\rm i}}
\newcommand{\dd}{{\rm d}}

\begin{document}
\renewcommand{\theenumi}{\roman{enumi})}
\renewcommand{\thefootnote}{\fnsymbol{footnote}}

\title{The Huang--Yang formula for a two-dimensional Fermi gas: upper bound}

\author{Christian Hainzl, Fabian Saxler, Robert Seiringer}
\date{\footnotesize \today}

\renewcommand{\thefootnote}{\arabic{footnote}}
\setcounter{footnote}{0}

\maketitle
\abstract{We compute an upper bound on the ground state energy of a dilute two-dimensional Fermi gas with repulsive short-range interactions. Our bound can be viewed as the two-dimensional analogue of a formula derived by Huang and Yang in the three-dimensional case. It captures the first three terms in an asymptotic expansion for small $\varrho a^2$, where $\varrho$ denotes the density and $a$ the scattering length of the interaction potential.}

\tableofcontents
\newpage
\section{Introduction and main result}
Consider \(N\) identical fermions of spin \(\frac 12\) in a box \(\Lambda=[-L/2,L/2]^2\subset \bR^2\), with periodic boundary conditions for concreteness. The fermions interact via a non-negative potential $V$, which we write as the  periodization   \(V(x)=\sum_{z\in L\bZ^2}V_\infty(x+z)\) of a function  \(V_\infty\) that we assume to be nonnegative, radial and decaying sufficiently fast at infinity.
The relevant Hamiltonian  acts on  \(\bigwedge^NL^2(\Lambda,\bC^2)\), the Hilbert space of square integrable functions of variables \((x,\sigma)\in\Lambda\times\{\uparrow,\downarrow\}\) that are antisymmetric under permutations. In suitable units, it is given by
\begin{equation}\label{def:H}
	H_N=-\sum_{i=1}^{N}\Delta_{x_i}+\sum_{1\le i< j\le N} V(x_i-x_j)
\end{equation}
where \(\Delta_{x_i}\) is the Laplacian for the \(i\)'th (configuration space) coordinate. The Hamiltonian $H_N$ is bounded from below (in fact, non-negative) 
and we are interested in the ground state energy
\begin{equation}\label{eq:def enl}
	E(N_\uparrow,N_\downarrow,L)= \inf_{\psi\in\frh(N_\uparrow,N_\downarrow)}\frac{\langle\psi,H_N\psi\rangle}{\langle\psi,\psi\rangle}
\end{equation}
where \(\frh(N_\uparrow,N_\downarrow)\) denotes the subspace of \(\bigwedge^NL^2(\Lambda,\bC^2)\) with \(N_\uparrow\) particles of spin \(\uparrow\) and \(N_\downarrow=N-N_\uparrow\) particles of spin \(\downarrow\), which is left invariant by \(H_N\).
It follows from standard methods \cite{robinsonpressure} that the ground state energy density in the thermodynamic limit
\begin{equation}\label{eq:endensthl}
	e(\varrho_\uparrow,\varrho_\downarrow)=\lim_{\substack{L\rightarrow\infty\\
			N_\sigma/L^2=\varrho_\sigma,\ \sigma\in\{\uparrow,\downarrow\}
		}}     \frac{E(N_\uparrow,N_\downarrow,L)}{L^2}
\end{equation}
exists and is independent of boundary conditions. 

Our precise assumptions on $V_\infty$ are as follows. For the definition of the scattering length, we refer to \cite{LY2d}, \cite[Appendix C]{mathbosgas} or \Cref{sec:scatt eq} below.

\begin{A}\label[A]{ass:v-infty I}
	We assume \(V_\infty:\bR^2\rightarrow [0,\infty]\) to be measurable and radial, with scattering length \(a>0\).
	Moreover, there exists an \(R_0> 0\) such that
	\(	V_\infty(x)\le C a^{\varepsilon_0} /|x|^{2+\varepsilon_0}\)\ for all \(| x|\ge R_0\)
	for some \(\varepsilon_0>0\) and $C>0$.
\end{A}

In particular, we allow for an interaction between hard discs, corresponding to $V_\infty(x) = \infty$ for $|x| \leq a$, and zero otherwise.

To state our main result, we introduce the notation  \(\chi(\cdot)\) and \(\chi^c({\cdot})\) for the characteristic function of the (centered) unit ball and its complement, respectively.

\begin{T}\label[T]{thm:main-res}
	Let \(V_\infty\) satisfy  \Cref{ass:v-infty I}. There exists a $C>0$ such that the ground state energy density defined in \eqref{eq:endensthl} satisfies
	\begin{equation}\label{eq:main thm}
		e(\varrho_\uparrow,\varrho_\downarrow)\le 2\pi (\varrho_\uparrow^2+\varrho_\downarrow^2)+
		\frac{8\pi}{|\ln(\varrho a^2)|}\varrho_\uparrow\varrho_\downarrow+
		\frac{16\pi}{|\ln(\varrho a^2)|^2}F(\varrho_\downarrow/\varrho_\uparrow)\varrho_\uparrow\varrho_\downarrow
		+C \frac{\left( \ln |\ln a^2 \varrho| \right)^2}{|\ln(\varrho a^2)|^{3}} \varrho^2
	\end{equation}
	for small \(\varrho a^2=(\varrho_\uparrow+\varrho_\downarrow)a^2\), 
	where
	\begin{align}\label{eq:def F mein thm}
		F (t) &=  \gamma
		+\frac{1}{2}\ln\pi                                                     
		  -  \frac{(1+t)^2}{ \pi^3 t}
		\int_{\bR^{2\times 3}}
		\chi(k \sqrt{1+t})\chi(q \sqrt{1+1/t}) 
		\\ & \qquad \qquad \qquad \quad\times
		\left(\frac{
				\chi^c((k+p)\sqrt{1+t})\chi^c((q-p)\sqrt{1+1/t})}{ (k+p)^2-k^2+(q-p)^2-q^2}
		-\frac{\chi^c(p)  
		}{2|p|^2}\right)
		\dd k\dd q\dd p \nonumber
	\end{align}
	and \(\gamma\approx 0.577\) denotes the Euler--Mascheroni constant.
\end{T}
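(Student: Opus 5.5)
A variational argument with a quasi-free (BCS-type) trial state, modified by a Jastrow-type short-distance correction, should suffice. Following the strategy of the three-dimensional Huang--Yang upper bound, I would first reduce the problem. A box-localization argument (Dirichlet or periodic boxes of side $\ell$, glued together with a small error) lets me work on a torus of size $L$ in the regime $L\gg \varrho^{-1/2}$. There I construct a trial state in Fock space, which is allowed because the energy density can be recovered from the grand-canonical minimization via the convexity of $e$ in $(\varrho_\uparrow,\varrho_\downarrow)$. I also replace $V$ by a potential with compact support of radius $R\sim a|\ln \varrho a^2|^{\kappa}$. The tail condition in \Cref{ass:v-infty I} controls the resulting change in the scattering length, so for a nonnegative potential the truncation only lowers the energy. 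For hard-core potentials I would work directly with the scattering solution.

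The trial state is $\Psi = \prod_{i<j} f(x_i-x_j)$ (or a suitable two-body correlation acting only between opposite spins) applied to a quasi-free state $\Gamma$. The state $\Gamma$ is the Fermi sea of the two spin species, perturbed by a pair-correlation (Bogoliubov/BCS) kernel $\alpha(p)$ that couples $\uparrow$-momenta $k+p$ and $\downarrow$-momenta $q-p$ outside the respective Fermi balls. The Fermi radii are $k_F^\sigma=\sqrt{4\pi\varrho_\sigma}$, which after rescaling produces the $\chi(k\sqrt{1+t})$, $\chi(q\sqrt{1+1/t})$ factors. Here $f$ is the two-dimensional zero-energy scattering solution, cut off logarithmically at a scale $b$ with $a\ll b\ll \varrho^{-1/2}$ and normalized so that $f\to1$. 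Its energy contribution is $2\pi/\ln(b/a)$ per pair. I would instead implement the correlation second-quantized as $\exp(B-B^*)$ with $B=\sum \varphi(p)\,a^*_{k+p,\uparrow}a^*_{q-p,\downarrow}a_{q,\downarrow}a_{k,\uparrow}$. This is the approach I would commit to, since it makes exact computation of the second-order term tractable: $\varphi$ solves the scattering equation with the Fermi-sea kinetic denominator $(k+p)^2-k^2+(q-p)^2-q^2$.

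In order, the energy computation goes as follows. (i) The kinetic energy of the Fermi sea gives $2\pi(\varrho_\uparrow^2+\varrho_\downarrow^2)$. In two dimensions this is exactly $\sum_\sigma \int_{|k|<k_F^\sigma}k^2\,dk/(2\pi)^2 = 2\pi\varrho_\sigma^2$. (ii) The interaction energy with the Jastrow/Bogoliubov correction yields $\varrho_\uparrow\varrho_\downarrow\,\hat g(0)$, where $\hat g(0)=\int V f\approx 4\pi/\ln(b/a)$ in the normalization that gives $8\pi/|\ln\varrho a^2|$ to leading order. Same-spin interactions contribute only $O(\varrho^3 a^2\cdot\mathrm{polylog})$ by antisymmetry. (iii) The choice of $b$ must then be refined. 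Expanding $1/\ln(b/a)$ with $b=\varrho^{-1/2}e^{\text{const}}$ around $1/|\ln \varrho a^2|$ gives a $1/|\ln|^2$ correction. This correction combines with the second-order term from the pair excitations, $-\sum |\hat g(p)|^2/(\text{denominator})$ restricted to momenta outside the Fermi seas, where the free part $\frac{\chi^c(p)}{2p^2}$ has been subtracted and absorbed into the renormalized coupling. The constants $\gamma+\frac12\ln\pi$ arise from the Fourier transform of $\ln|x|$ cut at scale $b$ together with $k_F=\sqrt{4\pi\varrho}$.

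The main obstacle is controlling the error terms to precision $(\ln|\ln \varrho a^2|)^2/|\ln\varrho a^2|^3$. This requires bounding the quartic (non-quadratic) terms generated by $e^{B-B^*}$, the cross terms between $\varphi$ and the Fermi surface, and the difference between the true $\varphi$ and its large-$p$ asymptotics $\hat g(p)/2p^2$. Because everything is only logarithmically small, the cutoff $b$ and the Fermi-surface smearing must be chosen at scales like $\varrho^{-1/2}|\ln\varrho a^2|^{\pm c}$; the $\ln|\ln|$ loss comes from exactly this choice. I would first try Gronwall-type estimates on $\langle e^{-s(B-B^*)}\mathcal N e^{s(B-B^*)}\rangle$ together with $\|\varphi\|_2^2\lesssim \varrho/|\ln\varrho a^2|^2$ to bound all such terms.
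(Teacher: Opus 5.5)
\paragraph{Genuine gap: no softening of the potential.} The route you commit to is a single particle-number-conserving quasi-Bogoliubov transformation $\exp(B-B^*)$ with kernel built from the scattering solution of the original $V$, with the Jastrow factor dropped. The paper says explicitly that this direct approach cannot be used in two dimensions, and your proposal contains nothing that gets around the obstruction.

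After conjugation, the terms $\bX$, $\bQ_1$, $\bQ_4$ and the non-constant parts of $[\bQ_2^{\uparrow\downarrow},B]$ and $[\bQ_4,B]$ are all bounded by factors of $\int V$. In three dimensions $\int V$ is of the same order as the effective coupling $8\pi a$, so this loss is affordable. In two dimensions $\int V$ is of order one, while the effective coupling is $\delta=|\ln(\varrho a^2)|^{-1}$, so every such estimate loses a full factor $\delta^{-1}$. For example, $|\langle\bX\rangle|+|\langle\bQ_1\rangle|\lesssim \hat V(0)\,\varrho\,\langle\cN\rangle\sim L^2\varrho^2\delta^2(\ln\delta^{-1})^2$, which is not below the Huang--Yang term you want to resolve. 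Likewise $\bQ_4$ stops being negligible. For the hard-disc potential, which the theorem explicitly covers, $\hat V$ does not exist at all. Gr\"onwall bounds on $\cN$ with a small $\|\varphi\|_2$ cannot repair this: the smallness has to sit in the potential itself.

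\paragraph{The missing two-step construction.} First multiply by the Jastrow factor $\prod_{i<j}f_b(x_i-x_j)$, with $f_b$ the scattering solution on a ball of radius $b=\varrho^{-1/2}\delta^{\gamma_b}\ll\varrho^{-1/2}$. By \Cref{lem:restrsoft pot}, this trades $V$ for the soft potential $W_\infty=\frac{2}{b\ln(b/a)}\delta_{|x|=b}$. This $W$ satisfies $\int W\le C\delta$ and $L^{-2}\sum_p|\hat W(p)|/(p^2+\lambda)\le C\delta\ln(\lambda b^2)^{-1}$.

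Only for this $W$ is $B$ built, with the first-order kernel $\hat W(p)/(\text{denominator}+2\varepsilon)$. Then every error term carries enough powers of $\delta$, and $\bQ_4$ is negligible. The second-order term combines with $\hat W(0)$ through $\hat W(0)-(2\pi)^{-2}\int_{|p|\ge 2\ee^{-\gamma}\varrho^{1/2}}\frac{\hat W(p)^2}{2|p|^2}\,\dd p=8\pi\delta\,(1+O(\delta^2(\ln\delta^{-1})^2))$. This is $b$-independent to the required order, and the remainder yields $F$. So no ``refined choice'' $b\sim\varrho^{-1/2}$ is involved.

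\paragraph{Cost of the Jastrow factor.} It produces the norm error $\cR_2$ and the three-body error $\cR_3$, contributing $\varrho^2(b\ell\varrho)^2\delta$ and $\varrho^3b^2\delta^2$ to the energy density. These are what force localization to boxes of side $\ell=\varrho^{-1/2}\delta^{-\gamma_\ell}$ with $\gamma_b\ge\gamma_\ell+1$. The box reduction is therefore essential, not a thermodynamic-limit formality.

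\paragraph{Smaller points.} Truncating $V$ lowers the energy, which is the wrong direction for an upper bound. The tail must instead be estimated in the trial state, as in \Cref{lem:restrsoft pot}. Also, since your $B$ conserves each $\cN_\sigma$, the grand-canonical/convexity detour is unnecessary.
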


\begin{Rs}\label[R]{rem:main thm}
	\begin{enumerate}
		\item We expect that the upper bound captures that  energy density correctly to the third order, i.e., that \eqref{eq:main thm} is valid as a lower bound as  well (for a suitable choice of the constant $C$). 
		\item The constant \(C\) depends on \(V_\infty\) only through 
		the decay at infinity as in \Cref{ass:v-infty I}. In fact, 
		an inspection of the proof shows that the decay assumption can be relaxed to 
		      \begin{equation}
			      V_\infty( x)< C
			    |\ln(|x| a^{-1})|^{-23}|x|^{-2}  
		      \end{equation}
		      for large $|x|$ 
		      to obtain an error of the same magnitude as the one in \Cref{thm:main-res}. This condition is slightly stronger (by a power of a logarithm) than 
			 the condition necessary for the scattering length to be well-defined (which, for non-negative potentials, is the case  if and only if \(\int_{|x|>c}V_\infty(x)\ln(x/c)^2\dd x<\infty\) for some \(c>0\), see \cite{Landon2012}).
		\item   Up to the order \(\frac{\varrho^2}{|\ln(\varrho a^2)|}\)  the correct upper (and lower) bound was already obtained in \cite{lieb2005groundstateenergydilutefermi}. (For an extension to non-zero  
		 temperature see \cite{Seiringer_2005}.) 
		The novelty of the present bound is the higher precision, i.e., the next order term, which is the two-dimensional analogue of the Huang--Yang term \cite{HY} in three dimensions. 
		\item   In the fully spin-polarized case where either \(\varrho_\uparrow=0\) or \(\varrho_\downarrow=0\), the energy density reads \cite{Lauritsen_2024,lauritsen2024groundstateenergydilute}
		      \begin{equation}
			      e(\varrho,0)= 2\pi \varrho ^2+4\pi^2 a_p^2\varrho^3 +O(\varrho^4 |\ln(\varrho a_p^2)|^2)
		      \end{equation}
		      and the next-to-leading-order contribution is much smaller than  the new higher order contribution or the error term of \Cref{thm:main-res}. Here  \(a_p\) denotes the \(p\)-wave scattering length of $V_\infty$, 		      which is the relevant length scale for the interactions  between particles of equal spin.
		      (Results for non-zero temperatures are available in the fully polarized case as well \cite{Lauritsen_2024preslo,lauritsen2024pressuredilutespinpolarizedfermi}.)
		\item  For a two-dimensional Bose gas, where the Hilbert space consists of functions that are symmetric under permutations of the coordinates, the ground state energy density is given by \cite{fournais2022groundstateenergytwodimensional,LY2d}
		      \begin{equation}
			      e_\mathrm{Bos}(\varrho)= 4\pi \varrho^2 \delta \left(1-\delta|\log(\delta)|
			      + (2\gamma +\tfrac{1}{2} + \log(\pi))\delta
			      \right)
			      +o(\varrho^2\delta^{2}),
		      \end{equation}
		      where \(\delta=|\log(\varrho a^2)|^{-1}\), and the corrections differ from the fermionic energy density beginning from the first term. (See \cite{DEUCHERT_2020,Mayer_2020}    
		      for the corresponding problem at non-zero temperature, and \cite{Dyson,lieb2000groundstateenergydilute,FS1,FS2} for the three-dimensional case.)
		\item Our proof follows \cite{giacomelli2024huangyangformulalowdensityfermi} to some extent, where the corresponding problem in three dimensions is investigated. In fact, in \cite{giacomelli2024huangyangformulalowdensityfermi, giacomelli2025huangyangconjecturelowdensityfermi}  it was shown for the analogous three-dimensional problem that
		      \begin{equation}
			      e_\mathrm{3D}(\varrho_\uparrow,\varrho_\downarrow)= \frac{3}{5}(6\pi^2)^{\frac{2}{3}}(\varrho_\uparrow^{\frac{5}{3}}+\varrho_\downarrow^{\frac{5}{3}})+8\pi a\varrho_\uparrow\varrho_\downarrow+a^2\varrho_\uparrow^{\frac{7}{3}}F_\mathrm{3D}(\varrho_\downarrow/\varrho_\uparrow)+o(a^2\varrho^{7/3}) 
	            \end{equation}
		      for a suitable function \(F_\mathrm{3D}\).  Notably, \(F_\mathrm{3D}\) can be computed explicitly, but we are not aware of an explicit expression for the corresponding function in two dimensions given in  \eqref{eq:def F mein thm}. Compared to  \cite{giacomelli2024huangyangformulalowdensityfermi}, the two-dimensional case comes with additional difficulties, which we shall outline in the following.
	\end{enumerate}
\end{Rs}

\subsection{Strategy of the proof}

As in previous work on either the bosonic \cite{Basti_2026, Basti_2022, basti2021newsecondorderupper, Dyson, fournais2022groundstateenergytwodimensional, LY2d, Mayer_2020,YY} or fermionic case \cite{CWZ1,CWZ2,Falconi2021, giacomelli2024huangyangformulalowdensityfermi, Lauritsen_2024, lieb2005groundstateenergydilutefermi}, the basic strategy consists of constructing a trial state that has the correct correlation structure at small distances, taking the two-body interactions into account. Without it, one would obtain a wrong expression for the energy. In three dimensions, the effect is less severe, effectively the correlation structure replaces the appearance of $\int V$ in the formulas by $8 \pi a$ (with $a$ the three-dimensional scattering length), as already remarked in the famous footnote in Bogoliubov's original paper \cite{Bogolyubov1947OnTT}. In two dimensions, however, the effect is more pronounced, as $\int V$ gets replaced by 
 \(\frac{8\pi}{|\ln(\varrho a^2)|}\) which goes to zero in the dilute limit $a^2\varrho\to 0$. 

On the technical level, this means that if one tries to estimate the interaction by $\int V$ in some error terms, one effectively loses a factor $|\ln a^2\varrho|^{-1}$, which may render the estimate useless. Consequently, in the two-dimensional case one faces similar difficulties as one encounters  in three dimensions for interaction potentials that are not integrable.

To remedy this, we follow the method in \cite{Basti_2022,fournais2022groundstateenergytwodimensional} and construct the trial state via a two-step procedure. In the first step, one introduces a Jastrow factor that effectively leads to a problem where the original interaction $V_\infty$ has been replaced by a softer one $W_\infty$ 
with the same scattering length, but whose integral is of the right order $|\ln a^2\varrho|^{-1}$. In a second step, one constructs a trial state inspired by second order perturbation theory (for the soft potential $W_\infty$), applying an analysis similar to \cite{giacomelli2024huangyangformulalowdensityfermi}. 

Due to lack of a good control on the norm of the trial function because of the Jastrow factor, we need to localize particles to small boxes and cannot directly work in the thermodynamic limit, similarly as in \cite{Mayer_2020,fournais2022groundstateenergytwodimensional}. This is done in the first step in 
 \Cref{sec:2}. In the second step, we introduce the Jastrow factor in \Cref{sec:3}, with the estimate on its effect on the norm of the trial state postponed to \Cref{sec:errJ}. This effectively replaces the original interaction $V$ by a softer one $W$; the main bound on the ground energy of the resulting Hamiltonian will be given in \Cref{sec:setup}. Finally, we collect all estimates and complete the proof of \Cref{thm:main-res} in \Cref{sec:proof main res}.

For convenience, we shall in the following sections use the notation 
\begin{equation}\label{eq:def delta}
	\delta=|\ln(\varrho a^2)|^{-1}
\end{equation}
which we shall always assume to be small.

\section{Step 1. Reduction to smaller boxes} \label{sec:2}
A standard method \cite{robinsonpressure} to obtain an upper bound is to employ a trial state that considers particles localized in individual boxes. Small corridors are introduced to limit the interaction between particles in different boxes. Moreover, Dirichlet boundary conditions are needed to be able to glue together the state on small boxes to a state on the big box, but the effect of the change in boundary conditions can be quantified, as follows.

 Recall the definition \eqref{eq:def enl} for the ground state energy $E(N_\uparrow,N_\downarrow,L)$ of the Hamiltonian $H_N$ in \eqref{def:H}, defined with periodic boundary conditions on $[-L/2,L/2]^2$.  Following \cite[Sects.~II.C and II.D]{Mayer_2020} (and using the monotonicity in the particle number), one finds under  \Cref{ass:v-infty I} that
\begin{equation}\label{eq:step1}
e(\varrho_\uparrow,\varrho_\downarrow) \leq \ell^{-2} E(N_\uparrow, N_\downarrow,\ell - R_1 - 2R_2) + \frac {4\varrho}{R_2^2} + C \frac{\varrho^2 (1-R_1/\ell)^2}{(1-R_1/\ell - 2R_2/\ell)^4 } (a/R_1)^{\varepsilon_0}
\end{equation}
for any $R_2>0$, $R_1\geq R_0$  and $\ell > R_1 + 2 R_2$, and particle numbers $N_\uparrow\geq \varrho_\uparrow \ell^2$, $N_\downarrow\geq \varrho_\downarrow \ell^2$.  Here the localization error $4\varrho/R_2^2$ comes from switching between Dirichlet and periodic boundary conditions, and the last term bounds the interaction between different boxes that are separated by a distance $R_1$, taking our decay assumption on $V_\infty$ into account. 

The reduction in box size from $\ell$ to $\ell-R_1-2R_2$, and the resulting increase in density, leads to an error of order $\varrho^2 (R_1+R_2)\ell^{-1}$ in the energy density. In combination with the error term $\varrho R_2^{-2}$, the optimal choice of $R_2$ is thus $R_2 \sim (\ell/\varrho)^{\frac 13}$ and the resulting error is of oder $\varrho^2 (\varrho \ell^2)^{- \frac 13}$. Given  the last term in \eqref{eq:step1}, the optimal choice of $R_1$ is  $R_1 \sim (a^{\varepsilon_0} \ell)^{\frac 1{1+\varepsilon_0}}$, with a resulting error order $\varrho^2 (a/\ell)^{\frac {\varepsilon_0}{1+\varepsilon_0}}$. These error terms will thus be negligible if we choose $\ell \sim \varrho^{-\frac 12} \delta^{-\gamma_\ell}$ for some $\gamma_\ell \geq 9/2$.

\section{Step 2. Jastrow factor} \label{sec:3}
As already mentioned in the proof outline above, we cannot directly employ the method of \cite{giacomelli2024huangyangformulalowdensityfermi} for an upper bound on $E(N_\uparrow,N_\downarrow,L)$. Instead we first introduce a Jastrow factor in our trial function, which will effectively replace the interaction by a more regular and longer-ranged one, for which we can then construct a trial state as in \cite{giacomelli2024huangyangformulalowdensityfermi}. 

\subsection{The scattering equation in two dimensions}\label{sec:scatt eq}

We start by recalling  the definition of the scattering length and corresponding solution of the scattering equation. For details, we refer to \cite{LY2d} or \cite[Appendix~C]{mathbosgas}. 
We shall assume that $V_\infty$ is non-negative and radial. 

\begin{D}
\label[D]{def:scatt length}
	Let \(V_\infty:\bR^2\rightarrow [0,\infty]\) be measurable and radial. For $b>0$, define $a_b$ by  
	\begin{equation}\label{eq:scatt fucntional}
		\frac{4\pi}{\ln(b/a_b)}=\inf
		\left\{\int_{B_b}
		\left( 2|\nabla f |^2+V_\infty |f|^2\right) \ \Big|\ f\in H^1(B_b), \ f|_{\partial B_b}=1
		\right\}
	\end{equation}
	where $B_b$ denotes the (centered) ball of radius $b$. If $V_\infty$ is compactly supported, then $a_b$ is independent of $b$ for $b$ larger than the range of $V_\infty$, and equals its scattering length $a$. More generally, $a_b$ is increasing in $b$, and $a = \lim_{b \to \infty} a_b$ is finite  if and only if \(\int_{|x|>c}V_\infty(x)\ln(x/c)^2\dd x<\infty\) for some \(c>0\) \cite{Landon2012}. 
\end{D}

The minimizer \(f_{\infty,b}\) for \eqref{eq:scatt fucntional} is unique, and satisfies
\begin{equation}
	-2\Delta f_{\infty,b} +V_\infty f_{\infty,b} =0
\end{equation}
as a distribution on \(B_b\). 
We have  \(0\le f_{\infty,b}\le 1\), and \(f_{\infty,b}(x)= \frac{\ln(|x|/a)}{\ln(b/a)}\) outside a ball containing the support of \(V_\infty\). Moreover, \(f_{\infty,b}\) is radial and increasing in \(|x|\).

In the following, we shall consider
\(\varphi_{\infty,b}=1-f_{\infty,b}\), which satisfies
\(\varphi_{\infty,b}|_{\partial B_b}=0\).
We will extend \(\varphi_{\infty,b}\) by zero outside \(B_b\) and denote this function (now defined on all of \(\bR^2\)) by \(\varphi_{\infty,b}\) as well.
It satisfies 
\begin{equation}\label{eq:scatt eq whole space}
	2\Delta \varphi_{\infty,b} +V_\infty(1-\varphi_{\infty,b})=\frac{2}{R\ln(b/a_b)} \delta_{|x|=b} \quad  \mathrm{on}\ \bR^2.
\end{equation}
Importantly, we find via the divergence theorem that
\begin{equation}
	\int_{B_b} V_\infty (1-\varphi_{\infty,b})= \frac{4\pi}{\ln(b/a_b)}.
\end{equation}

\subsection{The Jastrow factor}

For $b>0$, consider the function 
 \(\varphi_{\infty,b}\) defined in the previous subsection, satisfying \eqref{eq:scatt eq whole space} (and being equal to zero outside \(B_b\)). Let $\varphi_b$ denote its periodization over the box \(\Lambda=[-L/2,L/2]^2\), and let $f_b = 1 - \varphi_b$.  
For any \(\Phi\in \bigwedge^NL^2(\Lambda,\bC^2)\), consider
the state \(\Psi\) defined by 
\begin{equation}\label{def:psi}
	\Psi(x_1,\ldots,x_N)= \prod_{1\le i<j\le N}f_b(x_i-x_j) \Phi(x_1,\ldots,x_n).
\end{equation}
For brevity, we shall denote $f=f_b$ and  \(f_{ij}=f_b(x_i-x_j)\) in the following.

For convenience, we shall denote by \(\cH_V\) the Hamiltonian \eqref{def:H} with interaction defined by $V_\infty$. Let 
\begin{equation}\label{def:Vtilde}
W_\infty(x) =\frac{2}{b\ln(b/a_b)} \delta_{|x|=b}
\end{equation}
which is the effective potential appearing on the right-hand side of \eqref{eq:scatt eq whole space}. 	The new potential \(W_\infty\) has a slightly decreased scattering length \( a_b\le a\) compared to the original interaction \(V_\infty\).
	By monotonicity, we may replace \(a_b\) by \(a\) for an upper bound, however. As for $V$ above, we shall write $W(x)=\sum_{z\in L\bZ^2}W_\infty(x+z)$ for the periodization of $W_\infty$ on the box $\Lambda$.

\begin{Le}
\label[Le]{lem:restrsoft pot}
	Let \(V_\infty\) satisfy \Cref{ass:v-infty I}, and let $b\geq R_0$. Then, with the definitions above, 
	\begin{equation}\label{lem1:eq}
		\langle \Psi,\cH_{ V}\Psi\rangle\le \langle \Phi,\cH_{W}\Phi\rangle- \langle\Phi, \cR_3\Phi\rangle + C\frac{ N(N-1)}{2 b^2} \left(\frac a b\right)^{\varepsilon_0} \| \Psi\|^2
	\end{equation}
	with 
	\begin{equation}\label{def:R3}
		\cR_3=\sum_{i,j,k}' \frac{\nabla f_{ij}}{f_{ij}} \cdot\frac{\nabla f_{ik}}{f_{ik}} \prod_{1\le l<m\le N}f_{lm}^2
	\end{equation}
	where the primed sum runs over all pairwise distinct indices \(i,j,k\) between \(1\) and \(N\).
\end{Le}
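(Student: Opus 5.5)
The plan is to compute $\langle \Psi,\cH_V\Psi\rangle$ exactly via the standard Jastrow (ground-state substitution) identity, and then to use the scattering equation \eqref{eq:scatt eq whole space} to trade $V$ for $W$. Write $\Psi = F\Phi$ with $F=\prod_{l<m} f_{lm}$, which is real and satisfies $0\le F\le 1$.

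\emph{Kinetic term.} For each particle $i$, integration by parts gives
\[
\int |\nabla_i(F\Phi)|^2 = \int F^2 |\nabla_i \Phi|^2 - \int F(\Delta_i F)|\Phi|^2 .
\]
I would check this first for smooth $F$, and then extend it by approximation to $f_b$, which is only Lipschitz away from the core and whose Laplacian carries the singular layer on $|x|=b$. Next I expand $\Delta_i F/F$ by the product rule:
\[
\frac{\Delta_i F}{F}=\sum_{j\ne i}\frac{\Delta f_{ij}}{f_{ij}}+\sum_{j\ne k,\ j,k\ne i}\frac{\nabla f_{ij}\cdot\nabla f_{ik}}{f_{ij}f_{ik}} .
\]
Multiplied by $F^2$, the second sum summed over $i$ produces precisely the three-body term $-\langle\Phi,\cR_3\Phi\rangle$. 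Here $F^2/(f_{ij}f_{ik})$ is a bounded product of the remaining factors, so $\cR_3$ makes sense even where $f$ vanishes. In the first sum, each pair $\{i,j\}$ appears twice, which yields $-2\Delta f(x_i-x_j)/f_{ij}$ per pair.

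\emph{Two-body term.} Since $f=1-\varphi_b$, equation \eqref{eq:scatt eq whole space} reads $-2\Delta f + V_\infty f = W_\infty$ on $\bR^2$. This holds for the periodized versions as well, provided $b<L/2$, so that the balls of radius $b$ do not overlap their periodic images. Combining the kinetic pair term with the part $V^{(b)}_{ij}$ of the potential coming from the unshifted copy of $V_\infty$ restricted to $B_b$, I get
\[
\bigl(-2\Delta f_{ij} + V^{(b)}_{ij} f_{ij}\bigr)\,\frac{F^2}{f_{ij}} = W_{ij}\,\frac{F^2}{f_{ij}} .
\]
Because $W_\infty$ is supported on $|x|=b$, where $f=1$, the factor $1/f_{ij}$ equals $1$ there, so this term is exactly $W_{ij}F^2$. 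Altogether,
\[
\langle \Psi,\cH_V\Psi\rangle=\int F^2\Bigl(\sum_i|\nabla_i\Phi|^2+\sum_{i<j}W_{ij}|\Phi|^2\Bigr)-\langle\Phi,\cR_3\Phi\rangle+\sum_{i<j}\int (V_{ij}-V^{(b)}_{ij})|\Psi|^2 .
\]
Since the integrand of the first term is nonnegative and $F^2\le 1$, the first term is at most $\langle\Phi,\cH_W\Phi\rangle$. If the paper intends the $F^2$-weighted form, the first term is exactly that quantity instead.

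\emph{Tail error.} The remainder $V-V^{(b)}$ consists of $V_\infty$ on $|x|>b$ together with the periodic images. Using \Cref{ass:v-infty I} with $b\ge R_0$, it is pointwise bounded by $\sum_{z} C a^{\varepsilon_0}|x+z|^{-2-\varepsilon_0}\mathbbm 1_{|x+z|>b}\le C a^{\varepsilon_0} b^{-2-\varepsilon_0}$, uniformly in $x$, for $L\gtrsim b$. Hence each of the $N(N-1)/2$ pairs contributes at most $C b^{-2}(a/b)^{\varepsilon_0}\|\Psi\|^2$, which gives the error term in \eqref{lem1:eq}.

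\emph{Main obstacle.} The delicate step is the rigorous justification of the integration by parts. This requires care with a possibly infinite (hard-core) $V_\infty$, where $f$ and hence $\Psi$ vanish; with the division by $f_{ij}$; and with the distributional Laplacian of $f$ on $|x|=b$, which requires $\Phi$ to have traces, fine for $\Phi\in H^1$. I would handle this by first truncating $V_\infty$ to $\min(V_\infty,n)$, for which $f>0$ everywhere, and then passing to the limit using monotone convergence of the scattering solutions.
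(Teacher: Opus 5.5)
Your proposal is correct and follows the paper's own argument. The paper's proof cites the standard Jastrow integration-by-parts computation (Lemma 5.2 of Fournais et al.): it uses the scattering equation to trade the pair part of $V$ for $W$, uses $0\le f\le 1$ to drop the weight $F^2$, and bounds the remaining tail of $V_\infty$ outside $B_b$, together with its periodic images, by the decay assumption with $b\ge R_0$, exactly as you do. You are also right that the scattering equation only involves $V_\infty$ restricted to $B_b$, so the remainder $V-V^{(b)}$ is what produces the $C\,b^{-2}(a/b)^{\varepsilon_0}$ error per pair.
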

\begin{proof}
The last term in \eqref{lem1:eq} comes from estimating the size of $V_\infty$ outside a ball of radius $b$, using the decay assumption in 	 \Cref{ass:v-infty I}. The remaining part is a simple calculation using $|f|\leq 1$, see \cite[Lemma 5.2]{fournais2022groundstateenergytwodimensional}.
\end{proof}

The error term involving the three-particle potential $\mathcal{R}_3$ will be bounded in \Cref{lem:error small boxes jastrow} in \Cref{sec:errJ}.
The necessity to bound the norm of \(\Psi\) from below in terms of the norm of $\Phi$ forces us to choose the box \(\Lambda\) 
small enough, hence necessitating Step 1 above. In fact, we have the following bound.

\begin{Le}\label[Le]{lem:norm  jastrow}
With \(\Psi\) defined as in \eqref{def:psi}, 
	\begin{align}\label{def:R2}
		\|\Psi\|^2  
		            \ge \|\Phi\|^2 
		-\sum_{1\le i<j\le N} \int_{\Lambda^N} (1-f_{ij}^2)|\Phi|^2.
	\end{align}
\end{Le}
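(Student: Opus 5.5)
The bound follows from a pointwise inequality for the Jastrow factor, integrated against $|\Phi|^2$. Since $\|\Psi\|^2=\int_{\Lambda^N}\prod_{i<j}f_{ij}^2\,|\Phi|^2$, it suffices to show, for every configuration $(x_1,\ldots,x_N)$,
\begin{equation*}
\prod_{1\le i<j\le N} f_{ij}^2 \;\ge\; 1-\sum_{1\le i<j\le N}\bigl(1-f_{ij}^2\bigr).
\end{equation*}
Multiplying this by $|\Phi|^2$, integrating over $\Lambda^N$ (summing over spin variables as well) and using that $\int|\Phi|^2=\|\Phi\|^2$ gives \eqref{def:R2} directly.

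For the pointwise inequality we use only that $0\le f_b\le 1$. This holds because $0\le f_{\infty,b}\le 1$ and $\varphi_{\infty,b}=1-f_{\infty,b}$ is non-negative and supported in $B_b$. The periodization $\varphi_b$ may be a sum of several translates if $b$ is comparable to $L$; in that case one either assumes $2b<L$, which is the relevant regime, or notes that the product form is unaffected as long as $f_b\in[0,1]$. Writing $t_{ij}=f_{ij}^2\in[0,1]$, the claim becomes the elementary Weierstrass-type inequality $\prod_k t_k\ge 1-\sum_k(1-t_k)$ for finitely many $t_k\in[0,1]$.

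We prove this inequality by induction on the number $M$ of factors. The case $M=1$ is an equality. For the inductive step, set $P=\prod_{k\le M}t_k\in[0,1]$. Then
\begin{equation*}
P\,t_{M+1}=P-P(1-t_{M+1})\ge P-(1-t_{M+1}),
\end{equation*}
and the induction hypothesis applied to $P$ completes the step. Equivalently, one can expand $1-\prod_k t_k$ as a telescoping sum, each term of which is bounded by $1-t_k$.

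The only point requiring care is ensuring $0\le f_b\le 1$ after periodization, i.e.\ that the periodized $\varphi_b$ stays in $[0,1]$. This holds as long as the translates of $B_b$ do not overlap, which is the case for $b<L/2$ as in the intended application. Otherwise the argument is routine.
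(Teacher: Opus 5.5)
Your proposal is correct and follows the paper's proof: integrate the pointwise bound $\prod_{i<j} f_{ij}^2 \ge 1-\sum_{i<j}(1-f_{ij}^2)$ against $|\Phi|^2$. Your induction proof of this elementary inequality, and your remark that $0\le f_b\le 1$ survives periodization because $b<L/2$ in the relevant regime, make explicit what the paper leaves implicit.
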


\begin{proof}
	This follows immediately from 
	\begin{equation}
				\prod_{1\le i<j\le N}f_{ij}^2\ge 1-\sum_{1\le i<j\le N}(1-f_{ij}^2).
	\end{equation}
\end{proof}

The error term can be viewed as an effective two-body interaction $\mathcal{R}_2$, which will be bounded in \Cref{prop:conj-q4-new} in the \Cref{sec:errJ}.

\section{Step 3. Upper bound for a soft interaction}
\label{sec:setup}
In this section 
we construct a trial state $\Phi$ to estimate   $\langle \Phi,\cH_{W}\Phi\rangle$ for the Hamiltonian with interaction potential $W$ given in \eqref{def:Vtilde}. The error terms $\langle\Phi, \cR_3\Phi\rangle$ and the one from \Cref{lem:norm jastrow} will be bounded in the next section. We start by introducing basic notions used throughout the proof.

\subsection{Notation}\label{sec:notation}
We denote by
\begin{equation}
	\hat{f}(p)= \int_\Lambda \dd x\, f(x)\ee^{-\ii p\cdot x},\qquad f(x)=\frac{1}{L^2}\sum_{\Lambda^*} \hat{f}(p)\ee^{\ii p\cdot x}
\end{equation}
the Fourier transform on the box \(\Lambda\),  
with  \(\Lambda^*= \frac{2\pi}{L} \bZ^2\). 
For integrals over \(\Lambda\) and sums over \(\Lambda^*\) we will frequently omit the explicit designation of the domain of integration or summation, respectively. 

For $\sigma \in \{ \uparrow,\downarrow\}$, let \(\hat v_\sigma,\hat u_\sigma: \Lambda^*\rightarrow \bR\),
\begin{equation}\label{eq:def hat v hat u}
	\hat v_\sigma(k)=
	\begin{cases}
		1\ \text{if}\ |k|\le k^\sigma_\text{F} \\
		0\ \text{if}\ |k|> k^\sigma_\text{F}
	\end{cases},
	\qquad \hat{u}_\sigma (k)=
	\begin{cases}
		0\ \text{if}\ |k|\le k^\sigma_\text{F} \\
		1\ \text{if}\ |k|> k^\sigma_\text{F}
	\end{cases}
	,
\end{equation}
denote the characteristic function of the Fermi ball \(B_\sigma\) and its complement, respectively. Here, \(B_\sigma=\{k\in \Lambda^*: |k|\le k_\mathrm{F}^\sigma\}\). The \(k_\mathrm{F}^\sigma\) will be suitably chosen to satisfy the particle number constraints $N_\sigma = |B_\sigma| \geq \varrho_\sigma \ell^2$ for the validity of \eqref{eq:step1}
(see also the discussion at the end of \Cref{sec:proof main res}).
We further define
\begin{equation}\label{eq:v-hat-t}
	\hat v_{t,\sigma} (p)= \ee^{t|p|^2} \hat v_\sigma(p),
	\qquad
	\hat u_{t,\sigma}(p)= \ee^{-t|p|^2} \hat u_\sigma (p).
\end{equation}

\subsection{Second quantization}\label{sec:2nd quant}
It will be convenient to work with in the formalism of second quantization. For the one-particle Hilbert space \(L^2(\Lambda,\bC^2)\), the corresponding Fock space \(\cF\)  is given by 
\begin{equation}
	\cF = \bigoplus_{n=0}^{\infty} \bigwedge^nL^2(\Lambda,\bC^2)
\end{equation}
where we identify \(\bigwedge^0 L^2(\Lambda,\bC^2)=\bC\). As is customary, the vacuum \((1,0,0,\ldots)\) will be denoted by \(\Omega\).

We denote by \(a^*(f),a(f)\)  the standard creation and annihilation operators satisfying  canonical anti-commutation relations
\begin{equation}
	\{a(f),a^*(g)\}= \langle f,g\rangle_{L^2(\Lambda,\bC^2)},\qquad
	\{a(f),a(g)\}=0
\end{equation}
for all \(f,g\) in \(L^2(\Lambda,\bC^2)\). They are bounded operators, satisfying
\begin{equation}\label{eq:norm ann}
	\|a(f)\|=\|f\|_{L^2(\Lambda,\bC^2)}.
\end{equation}

For \(\sigma\in\{\uparrow,\downarrow\},x\in\Lambda\) and \(k\in\Lambda^*=\frac{2\pi}{L}\bZ^2\), let \(a_{x,\sigma}\) denote the operator-valued distribution \( a(\delta_{\sigma,\cdot}\delta(x-\cdot))\). For
\(f_k(x)=\frac{1}{L}\ee^{\ii k\cdot x}\), we shall  write
\begin{equation}
	\hat a_{k,\sigma}= a(\delta_{\sigma,\cdot}f_k)= \frac{1}{L}\int_\Lambda \dd x\, a_{x,\sigma}\ee^{-\ii k\cdot x}.
\end{equation}
The number of particles with given spin $\sigma$ will be denoted by $\cN_\sigma$; it can be written as 
\begin{equation}
	\cN_\sigma=\sum_{\Lambda^*} \hat a^*_{k,\sigma}\hat a_{k,\sigma}
	=\int_\Lambda\dd x\, a^*_{x,\sigma} a_{x,\sigma}
	,\qquad \cN=\sum_{\sigma\in\{\uparrow,\downarrow\}} \cN_\sigma.
\end{equation}
With this notation, the second quantized extension \(\cH_V\) of \(H_N\) to \(\cF \) is 
\begin{equation}\label{eq:2nd quantized Hamiltonian}
	\cH= \sum_{\sigma\in\{\uparrow,\downarrow\}} \sum_{k\in\Lambda^*}|k|^2\hat a^*_k\hat a_k
	+\frac{1}{2L^2}
	\sum_{\sigma,\sigma'\in\{\uparrow,\downarrow\}} \sum_{k,p,q\in\Lambda^*}\hat V(k)\hat a^*_{p+k,\sigma}\hat a^*_{q-k,\sigma'} \hat a_{q,\sigma'}\hat a_{p,\sigma}.
\end{equation}

An important ingredient for the trial state is the particle-hole transformation \(R\) that plays the role of factoring out the energy of the free (non-interacting) Fermi gas. It acts on the creation/annihilation operators according to
\begin{equation}\label{eq:part hole}
	R^* a^*_{x,\sigma}R=a^*_\sigma(u_x)+ a_\sigma(v_x)
\end{equation}
where
\begin{equation}
	a^*_\sigma(u_x)=\int_\Lambda \dd y\, u_\sigma(x-y)a_{y,\sigma}^*,\qquad a_\sigma(v_x)=\int_\Lambda \dd y\, v_\sigma(x-y) a_{y,\sigma}
\end{equation}
and \(u_\sigma,v_\sigma,\) are the functions on \(\Lambda\) with Fourier coefficients defined in \eqref{eq:def hat v hat u}.
The ground state \(\psi_\mathrm{FFG}\) of the free Fermi gas of \(N=|B_\uparrow|+|B_\downarrow|\) particles is given by 
\begin{equation}\label{eq:ffg}
	\psi_\mathrm{FFG}= \prod_{\sigma\in\{\uparrow,\downarrow\}}\prod_{k\in B_\sigma} \hat{a}^*_{k,\sigma}\Omega = R\Omega.
\end{equation}

\subsection{Main result for soft potentials}\label{sec:upper bound for soft potential}

Our main result of this section is as follows. Recall the definition of $W$ in \eqref{def:Vtilde}, and the one of $\delta$ in \eqref{eq:def delta}.

\begin{T}\label[T]{thm:bound small box}
	Assume that \(L \geq \delta^{-3}\varrho^{-{\frac 12}}\) and \( b =\delta^{\gamma_b}\varrho^{-\frac 12}\) for some $\gamma_b \geq 1$. 
 For any \(\varrho_\uparrow\), \(\varrho_\downarrow\) such that
	\(N_\sigma=\varrho_\sigma L^2= |B_\sigma|\)  for \(\sigma\in\{\uparrow,\downarrow\}\), there is a normalized state \(\Phi\) in \( \bigwedge^N  L^2([-L/2,L/2]^2) \) satisfying
	$\cN_\sigma\Phi =N_\sigma \Phi$
	and
	\begin{align}
		\frac{\langle \Phi, \mathcal{H}_{W} \Phi\rangle}{L^2}
		\le  2\pi (\varrho_\uparrow^2+\varrho_\downarrow^2)+ 8\pi \delta \varrho_\uparrow\varrho_\downarrow + 
				16 \pi \delta^2 F(\varrho_\downarrow/\varrho_\uparrow)\varrho_\uparrow\varrho_\downarrow
		     + C \varrho^2 \delta^3 \left( \ln \delta^{-1} \right)^2 
			\label{eq:bound small box}
	\end{align}
	for \(\varrho a^2\) small enough.
	Here \(F\) is the function defined in \eqref{eq:def F mein thm}. 
\end{T}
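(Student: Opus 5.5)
We construct $\Phi$ as a quasi-free-plus-pair-correction state, following \cite{giacomelli2024huangyangformulalowdensityfermi}: $\Phi = R\, T\,\Omega$ (suitably normalized), with $R$ the particle-hole transformation \eqref{eq:part hole} and $T=\ee^{B-B^*}$ a fermionic Bogoliubov-type transformation built from
\begin{equation*}
B=\frac{1}{L^2}\sum_{k,p,q}\hat\eta(p)\,\hat u_{\uparrow}(k+p)\hat u_{\downarrow}(q-p)\hat v_{\uparrow}(k)\hat v_{\downarrow}(q)\,\hat a^*_{k+p,\uparrow}\hat a^*_{q-p,\downarrow}\hat a^*_{-q,\downarrow}\hat a^*_{-k,\uparrow}
\end{equation*}
(in particle-hole variables). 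Only opposite-spin pairs are correlated, since the equal-spin interaction already contributes only at order $\varrho^2\delta^3$ or smaller. The kernel $\hat\eta$ is chosen from second-order perturbation theory: at leading order $\hat\eta(p)\approx -\hat W(p)/(2|p|^2)$ for $|p|$ large, corrected to $-\hat W(p)/\big((k+p)^2-k^2+(q-p)^2-q^2\big)$ near the Fermi surfaces. To make the expansion tractable we instead take $\hat\eta$ to solve the zero-density scattering equation for $W$ (so that $\hat W+2|p|^2\hat\eta +\hat W*\hat\eta/L^2\approx 0$ outside a ball of radius $\sim\sqrt\varrho$) and cut it off at momenta of order $\varrho^{1/2}$; the resulting phase-space mismatch is what produces $F$. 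Because $\int W_\infty = 4\pi/\ln(b/a_b)\sim 4\pi\delta$ (using $b=\delta^{\gamma_b}\varrho^{-1/2}$, so $\ln(b/a)=\tfrac12\delta^{-1}(1+O(\delta\ln\delta^{-1}))$), $\eta$ is small in the relevant norms, $\|\eta\|_2\lesssim\delta$ uniformly. This lets us expand $T$ to second order with errors carrying an extra factor $\delta$. The normalization and the particle-number constraints $\cN_\sigma\Phi=N_\sigma\Phi$ hold automatically, because $B$ conserves the number of particles of each spin.

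Next we compute $\langle\Phi,\cH_W\Phi\rangle$ by conjugating $\cH_W$ with $R$. This yields the free energy $\sum_\sigma\sum_{|k|\le k_F^\sigma}|k|^2$, whose box-to-continuum correction is $O(\varrho^{3/2}L^{-1})$, negligible for $L\ge\delta^{-3}\varrho^{-1/2}$. It also yields the direct and exchange terms $\hat W(0)\varrho_\uparrow\varrho_\downarrow + (\text{same-spin}, O(\delta\varrho^2 (k_F b)^2))$, plus normal-ordered quadratic and quartic terms. We then apply the Duhamel expansion $T^*AT = A+\int_0^1 T_s^*[A,B-B^*]T_s\,\dd s$ twice to the kinetic energy and to the quartic pair-creation part $\mathbb{Q}$ of the interaction. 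The scattering equation ensures that the terms linear in $\eta$ combine into $\frac{1}{L^2}\sum(\hat W(p)+\ldots)\hat\eta(p)\ldots$, producing the renormalized coupling $8\pi\delta\varrho_\uparrow\varrho_\downarrow$: the energy $\hat W(0)$ plus the $\hat W\eta$ correction equals $4\pi/\ln(b/a_b)$ reorganized at scale $\varrho^{-1/2}$. The second-order remainder is the Huang--Yang-type sum
\begin{equation*}
-\frac{1}{L^6}\sum_{k,q,p}\hat v_\uparrow(k)\hat v_\downarrow(q)\hat u_\uparrow(k+p)\hat u_\downarrow(q-p)\,|\hat W(p)|^2\Big(\frac{1}{(k+p)^2-k^2+(q-p)^2-q^2}-\frac{1}{2|p|^2}\Big)
\end{equation*}
together with the contribution from replacing $\ln(b/a)$ by $\tfrac12|\ln\varrho a^2|$. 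Rescaling momenta by $k_F^\sigma=\sqrt{4\pi\varrho_\sigma}$, replacing $\hat W(p)$ by $4\pi\delta$ for $|p|\ll b^{-1}$, and converting Riemann sums to integrals, this becomes $16\pi\delta^2F(\varrho_\downarrow/\varrho_\uparrow)\varrho_\uparrow\varrho_\downarrow$. The constant $\gamma+\tfrac12\ln\pi$ arises from the logarithmic integral $\int \chi^c(p)\hat W(p)^2/|p|^2$, computed with the Bessel-type explicit form of $\hat W_\infty(p)= 4\pi\delta' J_0(b|p|)$.

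The remaining terms are errors: cubic terms, non-normal-ordered commutator remainders, the equal-spin contributions, and the difference between sums and integrals. We bound them using the standard estimates $\|b(\eta_x)\|\lesssim \|\eta\|_2$ and the control of $\langle T_s\Omega,\cN T_s\Omega\rangle\lesssim L^2\varrho\delta^2$ and of the kinetic energy along $T_s$ via Grönwall. Each such error must come out as $O(L^2\varrho^2\delta^3(\ln\delta^{-1})^2)$. The logarithms come from $\int_{\sqrt\varrho\lesssim|p|\lesssim b^{-1}}|p|^{-2}\,\dd p\sim\ln\delta^{-1}$ and from the regions near the Fermi surface.

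The main obstacle will be the error analysis at this precision in two dimensions. Quantities like $\|\hat\eta\|_1$, or the sums $\frac1{L^2}\sum|\hat W(p)|/|p|^2$, carry logarithmic divergences, and the singular energy denominator near the Fermi surfaces in the definition of $F$ must be controlled uniformly in the discretization. Our first attempt will follow \cite{giacomelli2024huangyangformulalowdensityfermi}: split $\eta$ into a high-momentum part ($|p|\gg\sqrt\varrho\,\delta^{-c}$), treated with the decoupled denominator $2|p|^2$, and a low-momentum part. We then bound the difference of denominators by the spherical-shell (Fermi-surface) estimates, which in 2D cost only powers of $\ln\delta^{-1}$.
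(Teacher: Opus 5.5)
Your overall architecture matches the paper: $\Phi=RT\Omega$ with an opposite-spin quartic pair operator, a Duhamel expansion of $\bH_0+\bQ_2^{\uparrow\downarrow}$, and the $J_0$-integral producing $\gamma+\frac12\ln\pi$. The gap is that the kernel you actually commit to cannot produce $F$. Write $D=|k+p|^2-|k|^2+|q-p|^2-|q|^2$. Up to constants and third-order terms such as $\bQ_4$, the $\eta$-dependent part of the energy of $T\Omega$ is the quadratic form $\frac{1}{L^4}\sum_{k,q,p}\hat v_\uparrow(k)\hat v_\downarrow(q)\hat u_\uparrow(k+p)\hat u_\downarrow(q-p)\,\bigl(D\eta^2+2\hat W(p)\eta\bigr)$ (your sign convention). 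Its minimizer is $\eta=-\hat W(p)/D$, which depends on $k$ and $q$. Any other kernel loses exactly $\frac{1}{L^4}\sum\hat v_\uparrow\hat v_\downarrow\hat u_\uparrow\hat u_\downarrow\,D\,(\eta+\hat W(p)/D)^2\ge 0$.

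For your $p$-only choice (the cut-off zero-density scattering solution, $\approx-\hat W(p)/(2|p|^2)$ above $|p|\sim\sqrt\varrho$ and zero below), $D$ and $2|p|^2$ differ at relative order one when $|p|\sim\sqrt\varrho$. So the loss is of order $L^2\delta^2\varrho^2$, exactly the order of the $F$-term, and you would end up with some $\tilde F>F$. The sum with $1/D-1/(2|p|^2)$ that you display appears only if $B$ carries the kernel $\hat W(p)/D$; $F$ is not a phase-space mismatch of the scattering solution. Your last paragraph's high/low split would repair this only if the low-momentum kernel is exactly $-\hat W(p)/D$ up to $|p|\sim\sqrt\varrho\,\delta^{-c}$. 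You never state this, and your displayed $B$ has a kernel in $p$ alone.

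The scattering equation is also unnecessary here. Since $\int W\lesssim\delta$, its convolution term (the $\bQ_4$ contribution) is third order, $O(L^2\varrho^2\delta^3(\ln\delta^{-1})^2)$ by \Cref{prop:conj-q4}. The paper simply takes $\hat\eta^\varepsilon_{r,r'}(p)=\hat W(p)\hat u_\uparrow(r+p)\hat v_\uparrow(r)\hat u_\downarrow(r'-p)\hat v_\downarrow(r')/(D+2\varepsilon)$ at all momenta. Then $[\bH_0,B-B^*]=-\bQ_2^{\uparrow\downarrow}+2\varepsilon(B+B^*)$ holds exactly. The coupling $8\pi\delta$ comes out of the second-order term itself, by splitting off $\frac{1}{(2\pi)^2}\int_{|p|\ge2\ee^{-\gamma}\sqrt\varrho}\hat W(p)^2/(2|p|^2)\,\dd p$ and using $\int_x^\infty J_0^2\,\dd p/p=\ln(2/x)-\gamma+O(x^2)$.

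The second missing piece is the tool for the error terms, which you flag as the main obstacle but leave open. Estimates such as $\|b(\eta_x)\|\lesssim\|\eta\|_2$ apply to pair operators whose kernel depends on one relative variable, whereas the correct kernel couples $k,q,p$ through $1/D$. The paper inserts the gap $\varepsilon=\delta\varrho$ and writes $1/(D+2\varepsilon)=\int_0^\infty \ee^{-t(D+2\varepsilon)}\dd t$. This factorizes $B$ as $\int_0^\infty\dd t\,\ee^{-2t\varepsilon}\int W(x-y)\,a_\uparrow(u_{t,x})a_\uparrow(v_{t,x})a_\downarrow(u_{t,y})a_\downarrow(v_{t,y})$. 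Every commutator term is then bounded using $\|a(f)\|=\|f\|_2$ together with $\int_0^\infty\ee^{-2\varepsilon t}\|u_{t,\sigma}\|_2\|v_{t,\sigma}\|_2\,\dd t\le C\ln(1+(k_{\rm F}^\sigma)^2/\varepsilon)$ (\Cref{lem:integration-t-new}). This is where the $(\ln\delta^{-1})^2$ comes from, and it is uniform in $L\ge\delta^{-3}\varrho^{-1/2}$. The gap is removed at the end at cost $O(\delta^3\varrho^2)$ (\Cref{lem:integral gap}). Without some such regularization, your Fermi-surface logarithms are cut off only by the lattice, giving $\ln(k_{\rm F}L)$, which is not uniform in $L$. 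The sum-to-integral replacement near $D=0$ is then also not controlled. (Minor: $\int W_\infty=4\pi/\ln(b/a)\approx 8\pi\delta$, not $4\pi\delta$. So $\hat W(p)$ must be replaced by $8\pi\delta$; otherwise the coefficient of $F$ is off by a factor $4$.)
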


The trial state $\Phi$ in \Cref{thm:bound small box} will be chosen as a unitary transformation \(R T\) of the Fock space vacuum \(\Omega\), where $R$ is the particle-hole transformation in \eqref{eq:part hole}, and \(T\) will be defined in the next subsection. Our task is thus to  compute  \(T^* R^* \mathcal{H}_{W} R T\) 
and then take its expectation value in the vacuum \(\Omega\).

Similarly as in \cite{giacomelli2024huangyangformulalowdensityfermi}, the unitary transformation $T$ 
can be viewed as a quasi-Bogoliu\-bov transformation 
implementing correlations among the particles,  taking into account the presence of the Fermi ball. 
It is actually substantially simpler than the one employed in  \cite{giacomelli2024huangyangformulalowdensityfermi}, where the correlations for both small and large length scales were implemented by quasi-bosonic Bogoliubov transformations, whereas the correlations on small length scales are already taken care of by the Jastrow factor of Step 2 in our approach.

Our method of proof is applicable to a more general class of soft potentials $W$ depending on a parameter $b$, not only the one introduced in \eqref{def:Vtilde}. In fact, what we shall need in the proof is that $W\geq 0$, 
\begin{equation}\label{int:W}
\int W \leq C \delta
\end{equation}
as well as 
\begin{equation}\label{need:W}
\frac 1{L^2} \sum_p \frac{|\hat W(p)|}{p^2 + \lambda}  \leq C \delta \ln (\lambda b^2)^{-1} 
\end{equation}
for small $\lambda b^2$ and $\lambda L^2  \gtrsim 1$.   
In our case,
\begin{equation}\label{eq:fourier transform of v}
		\hat W (p)=4\pi J_0(|p|b)/\ln(b/a)
 \end{equation} 
where $J_0$ is the zeroth spherical Bessel function, which is bounded and decays like $|p|^{-\frac 12}$ for large $p$, hence these conditions are all satisfied.

\subsection{Trial state}\label{sec:trial state}

Our trial state will be constructed with the aid of the following unitary transformation.

\begin{D}\label[D]{def:bosonic-transformations} For $\lambda\in \mathbb{R}$, let $T_{\lambda}  =  \ee^{\lambda(B - B^\ast)}$ with 
	\begin{align}\label{eq:def b}
		B =  \frac{1}{L^2}\sum_{p,r,r^\prime \in \Lambda^*}\hat{\eta}_{r,r^\prime}^\varepsilon(p) \hat{a}_{r+p,\uparrow}\hat{a}_{-r,\uparrow}
		\hat{a}_{r'-p,\downarrow}\hat{a}_{-r',\downarrow},
	\end{align}
	where 
	\begin{equation}\label{eq:def hat eta epsilon}
		\hat{\eta}^\varepsilon_{r,r^\prime}(p)=\hat {W}(p)
		\frac{\hat u_\uparrow (r+p) \hat v_\uparrow(r)\hat u_\downarrow (r'-p) \hat v_\downarrow(r')}{|r+p|^2 + |r'-p|^2 -|r|^2 - |r'|^2 +2\varepsilon} 
	\end{equation}
	and \(\varepsilon =\delta\varrho\).
	We shall write \(T\) for \(T_1\) for brevity.
\end{D}

The trial state is then
 given by 
\begin{equation}\label{def:Phi}
	\Phi=RT\Omega
\end{equation}
and fulfills \(\cN_\sigma \Phi=N_\sigma \Phi\) with \(N_\sigma=|B_\sigma|\), as can be seen from the fact that \(R\Phi\) has this property by construction, and that \(R^* B R\) commutes with \(\cN_\sigma\) for \(\sigma\in\{\uparrow,\downarrow\}\). 

\subsection{The correlation Hamiltonian}\label{sec:prel comp}

From the identity \(R\Omega = \psi_{\rm FFG}\) it follows that the energy of the ground state of the free Fermi gas equals  \(E_{\mathrm{FFG}}= \langle \psi_{\rm FFG},\cH_{W} \psi_{\rm FFG}\rangle=  \langle R\Omega, \mathcal{H}_{W} R\Omega\rangle\). We  compute \(R^*\cH_{W} R\) explicitly as in \cite{Falconi2021}, applying \eqref{eq:part hole} and normal ordering the resulting terms. The result is as follows.

\begin{Prop}
\label[Prop]{prop: fermionic transf}
	Let $\psi\in \mathcal{F}$ be a normalized state satisfying $\mathcal{N}_\sigma\psi = N_\sigma\psi$ for $\sigma\in\{\uparrow,\downarrow\}$. 
	Then
	\begin{equation}
		\langle \psi, \mathcal{H}_{W}\psi\rangle = E_{\mathrm{FFG}} + \langle R^\ast\psi,   \mathcal{H}_{\mathrm{corr}} R^\ast\psi\rangle
	\end{equation}
	where $E_{\mathrm{FFG}}$ is the energy of the free Fermi gas satisfying
	\begin{equation}\label{eq:energy ffg}
		\frac{E_\mathrm{FFG}}{L^2}= 2\pi(\varrho_\uparrow^2+\varrho_\downarrow^2)+\hat{W}(0)\varrho_\uparrow\varrho_\downarrow+O\left(\delta b^2\varrho^{3}\right)+O\left( {\varrho^{\frac{3}{2}}}{L^{-1}} \right)
	\end{equation}
	and   \(\mathcal{H}_{\mathrm{corr}}=~\mathbb{H}_0~+~\mathbb{X}~+~\sum_{i=1}^4\mathbb{Q}_i\) is the correlation Hamiltonian given by

	\begin{alignat}{2}
		 & \mathbb{H}_0 &                                                                                                                                                                                                                                                                         & = \sum_\sigma\sum_k ||k|^2 -(k_{\rm F}^\sigma)^2| \hat{a}_{k,\sigma}^\ast \hat{a}_{k,\sigma}, \nonumber                                                                                                                                                                                  \\
		 & \mathbb{X}   &                                                                                                                                                                                                                                                                         & = \sum_\sigma \int_{\Lambda^2}\dd x\dd y\, W(x-y)v_\sigma(x-y)\left(a^\ast_\sigma(u_x)a_\sigma(u_y) -a^\ast_\sigma(v_y)a_\sigma(v_x)\right)\nonumber   ,                                                                                                                                        \\
		 & \mathbb{Q}_1 &                                                                                                                                                                                                                                                                         & = \sum_{\sigma,\sigma^\prime}\int_{\Lambda^2} \dd x\dd y\, W(x-y) \Bigl[ a^\ast_\sigma(u_x)a^\ast_{\sigma}(v_x)a_{\sigma^\prime}(v_y)a_{\sigma^\prime}(u_y)\nonumber
		\\ & \	 &         & \qquad\qquad  + \tfrac{1}{2} a^\ast_\sigma(v_x)a^\ast_{\sigma^\prime}(v_y)a_{\sigma^\prime}(v_y)a_\sigma(v_x) - a^\ast_\sigma(u_x)a^\ast_{\sigma^\prime}(v_y)a_{\sigma^\prime}(v_y)a_\sigma(u_x)\Bigl]\nonumber,
		\\
		\label{eq: def H-corr}
		 & \mathbb{Q}_2 &                                                                                                                                                                                                                                                                         & = \frac{1}{2}\sum_{\sigma,\sigma^\prime} \int_{\Lambda^2} \dd x\dd y\, W(x-y) a^\ast_\sigma(u_x)a^\ast_{\sigma^\prime}(u_y)a^\ast_{\sigma^\prime}(v_y)a^\ast_{\sigma}(v_x) + \mathrm{h.c.}\nonumber,
		\\
		 & \mathbb{Q}_3 &          \nonumber                                                                                                                                                                                                                                                              & = \sum_{\sigma,\sigma^\prime} \int_{\Lambda^2} \dd x\dd y\, W(x-y) \Bigl[   a^\ast_\sigma(u_x) a^\ast_{\sigma^\prime}(v_y) a^\ast_{\sigma}(v_x)a_{\sigma^\prime}(v_y) \\  & \ & & \qquad \qquad - a^\ast_\sigma(u_x) a^\ast_{\sigma^\prime}(u_y) a^\ast_{\sigma}(v_x)a_{\sigma^\prime}(u_y) \Bigl] + \mathrm{h.c.}\nonumber,
		\\
		 & \mathbb{Q}_4 &                                                                                                                                                                                                                                                                         & = \frac{1}{2}\sum_{\sigma,\sigma^\prime}\int_{\Lambda^2} \dd x\dd y\, W(x-y)  {a}^\ast_\sigma(u_x){a}^\ast_{\sigma^\prime}(u_y){a}_{\sigma^\prime}(u_y){a}_\sigma(u_x) . 
	\end{alignat}
\end{Prop}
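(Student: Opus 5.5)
The plan is to compute $R^*\cH_W R$ directly by substituting the particle-hole transformation \eqref{eq:part hole} into the position-space form of $\cH_W$, normal ordering, and then taking expectation values in states with fixed particle numbers. For the kinetic part, write $-\Delta$ in momentum space. We have $R^*\hat a^*_{k,\sigma}R=\hat u_\sigma(k)\hat a^*_{k,\sigma}+\hat v_\sigma(k)\hat a_{-k,\sigma}$ (up to a harmless sign/reflection convention). Normal ordering then gives
\[
R^*\Big(\sum|k|^2\hat a^*_k\hat a_k\Big)R=\sum_\sigma\sum_{k\in B_\sigma}|k|^2+\sum_\sigma\sum_k|k|^2\big(\hat u_\sigma(k)-\hat v_\sigma(k)\big)\hat a^*_{k,\sigma}\hat a_{k,\sigma}.
\]
To turn this into $\mathbb H_0$ we need the particle-number constraint. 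Since $\cN_\sigma\psi=N_\sigma\psi$, one checks that $R^*\cN_\sigma R=N_\sigma+\sum_k(\hat u_\sigma-\hat v_\sigma)(k)\hat a^*_{k,\sigma}\hat a_{k,\sigma}$. The fluctuation part therefore has zero expectation in $R^*\psi$. This allows us to subtract $(k_F^\sigma)^2$ times it, which converts the quadratic kinetic term into $\sum_\sigma\sum_k\bigl||k|^2-(k_F^\sigma)^2\bigr|\hat a^*\hat a$.

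For the interaction, substitute $a^*_{x,\sigma}\mapsto a^*_\sigma(u_x)+a_\sigma(v_x)$ into the quartic term $\tfrac12\int W(x-y)a^*_{x\sigma}a^*_{y\sigma'}a_{y\sigma'}a_{x\sigma}$. This produces $2^4$ monomials, which we bring to normal order using the CAR. The relevant contractions are $\{a_\sigma(v_x),a^*_\sigma(v_y)\}=v_\sigma(x-y)$ and $\{a(u_x),a^*(v_y)\}=0$, since $\hat u\hat v=0$. Sorting the outcome by how many contractions occur gives the following:
\begin{itemize}
\item Terms with two contractions yield the constants $\tfrac12\sum_{\sigma,\sigma'}\int W(x-y)[\rho_\sigma\rho_{\sigma'}-\delta_{\sigma\sigma'}|v_\sigma(x-y)|^2]$, i.e.\ direct plus exchange. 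Added to the kinetic constant, these form $E_{\rm FFG}=\langle R\Omega,\cH_WR\Omega\rangle$.
\item Terms with one contraction yield quadratic operators. The exchange-type ones give $\mathbb X$. The direct-type ones give $\hat W(0)\rho_{\sigma'}$ times $R^*\cN_\sigma R-N_\sigma$, which vanish in expectation again by the number constraint.
\item Terms with no contraction are the quartic operators. Grouping them by their numbers of $u$ and $v$ factors and their creation/annihilation structure yields $\mathbb Q_1,\dots,\mathbb Q_4$. Among these, monomials odd in the total particle-number changes have already been accounted for as hermitian conjugates, and those not conserving $\cN_\sigma$ appropriately do not appear.
\end{itemize}
This bookkeeping follows \cite{Falconi2021} (and the analogous 3D computation in \cite{giacomelli2024huangyangformulalowdensityfermi}) line by line, with $V$ replaced by $W$. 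I will simply track the signs.

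The remaining work is the asymptotics \eqref{eq:energy ffg}. The kinetic energy is $\sum_\sigma\sum_{k\in B_\sigma}|k|^2$. We compare this lattice sum with the integral $\int_{|k|\le k_F}|k|^2\,\dd k\,L^2/(2\pi)^2$ via a standard lattice-point counting estimate in 2D, where the boundary error is $O(k_F\cdot k_F^2\cdot L)$ relative to $L^2k_F^4$. With $N_\sigma=|B_\sigma|$ and $k_F^2\approx4\pi\varrho_\sigma$ this gives $2\pi\varrho_\sigma^2L^2+O(\varrho^{3/2}L)$. The direct term is $\tfrac12\hat W(0)\sum_{\sigma\sigma'}\rho_\sigma\rho_{\sigma'}L^2$. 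Its same-spin pieces cancel against the exchange term at leading order: $\hat W(0)\rho_\sigma^2-\tfrac1{L^2}\sum_{k,k'\in B_\sigma}\hat W(k-k')/L^2$, and $\hat W(0)-\hat W(k-k')=O(\delta b^2|k-k'|^2)$ since $W$ is radial with mass $\sim\delta$ at radius $b$. This gives the $O(\delta b^2\varrho^3)$ error. What remains is the cross term $\hat W(0)\varrho_\uparrow\varrho_\downarrow L^2$.

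The main obstacle I expect is purely the sign-careful normal ordering, in particular correctly identifying which single-contraction terms combine into $\mathbb X$ versus those absorbed by the number constraint. The asymptotics are routine.
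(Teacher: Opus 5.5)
Your proposal is correct and follows the same route as the paper, which simply defers to the computation of Falconi et al.: normal-order $R^*\mathcal{H}_W R$, use the number constraint both to turn the quadratic kinetic term into $\mathbb{H}_0$ and to discard the direct one-contraction terms $\hat W(0)\varrho_{\sigma'}(R^*\mathcal{N}_\sigma R-N_\sigma)$, and then evaluate $E_{\mathrm{FFG}}$ via the lattice sum for the kinetic energy and $\hat W(p)=\hat W(0)+O(\delta b^2|p|^2)$ in the exchange term.

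Two small bookkeeping points. First, the three- and one-creation monomials also produce one-contraction pair terms, such as $\hat W(0)\varrho_{\sigma'}\int a^*_\sigma(u_x)a^*_\sigma(v_x)\,\dd x$ and $\int W(x-y)v_\sigma(x-y)\,a^*_\sigma(u_x)a^*_\sigma(v_y)\,\dd x\,\dd y$ (plus adjoints). These are not removed by the number constraint, but they vanish identically because in momentum space they have the form $\sum_k c(k)\hat u_\sigma(k)\hat v_\sigma(-k)\hat a^*_{k,\sigma}\hat a^*_{-k,\sigma}=0$. Second, careful sign tracking gives the exchange operator with the opposite overall sign to the displayed $\mathbb{X}$, which is harmless since only $|\langle\psi,\mathbb{X}\psi\rangle|$ is ever used.
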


\Cref{eq:energy ffg} is obtained via a straightforward calculation \cite{Falconi2021} that yields 
\begin{equation}\label{4.23}
	\frac{E_\mathrm{FFG}}{L^2}= \frac{1}{L^2} \sum_\sigma \sum_{k\in B_\sigma} |k|^2   
		+ \frac{1}{2}\sum_{\sigma,\sigma'} 
	\left(\hat {W}(0)\varrho_\sigma\varrho_{\sigma'}-\frac{\delta_{\sigma,\sigma'}}{L^4}\sum_{k,k'\in B_\sigma} \hat {W}(k-k')\right).
\end{equation} 
 From the explicit form of \(W\) in \eqref{def:Vtilde} 
 one  finds that \(\hat {W}(p)=\hat {W}(0)+O(\delta b^2 |p|^2)\) and thus
 \begin{equation}\label{eq:sum fermi ball}
	\frac{1}{L^4}\sum_{k,k'\in B_\sigma} \hat {W}(k-k')=
	\hat {W}(0)\varrho_\sigma^2 +O(\delta b^2 \varrho^3).
 \end{equation}
The sum in the first term on the right-hand side of \eqref{eq:sum fermi ball} is easily seen to equal $2\pi (\varrho_\uparrow^2 + \varrho_\downarrow^2) + O(\varrho^{\frac 32} L^{-1})$, hence leading to \eqref{eq:energy ffg}.

We further decompose $\mathbb{Q}_2 =  \mathbb{Q}_2^\parallel  +  \mathbb{Q}_2^{\uparrow\downarrow}$ with  $\mathbb{Q}_2^\parallel$ including only same-spin interactions, i.e., 
\begin{alignat}{2}
	 & \mathbb{Q}_2^\parallel            &  & =  \frac{1}{2}\sum_{\sigma} \int_{\Lambda^2} \dd x\dd y\, W(x-y) a^\ast_\sigma(u_x)a^\ast_{\sigma}(u_y)a^\ast_{\sigma}(v_y)a^\ast_{\sigma}(v_x) + \mathrm{h.c.}\nonumber                                   \\
	 & \mathbb{Q}_2^{\uparrow\downarrow} &  & = \frac{1}{2}\sum_{\sigma\neq\sigma^\prime} \int_{\Lambda^2} \dd x\dd y\, W(x-y) a^\ast_\sigma(u_x)a^\ast_{\sigma^\prime}(u_y)a^\ast_{\sigma^\prime}(v_y)a^\ast_{\sigma}(v_x) + \mathrm{h.c.} \label{q2ud}
\end{alignat}
We shall see in the following that we can restrict our attention to the 
effective correlation Hamiltonian $\mathbb{H}_0 + \mathbb{Q}_2^{\uparrow\downarrow}$ 
with only a negligible error. In fact, 
as in \cite[Proposition 3.3]{Falconi2021}, 
\begin{equation}\label{eq:err simpl corr}
	|\langle\psi,\bX \psi\rangle|\le C \hat {W}(0)\varrho \langle\psi,\cN\psi\rangle,\qquad
	|\langle\psi,\bQ_1 \psi\rangle|\le C \hat {W}(0)\varrho \langle\psi,\cN\psi\rangle, 
\end{equation}
and $\hat {W}(0) \leq C \delta$ in our case. Moreover, as in \cite{giacomelli2024huangyangformulalowdensityfermi}, \(\bQ_3\) and \(\bQ_2^\parallel\) vanish on the considered trial state:
\begin{equation}\label{q320}
\langle T \Omega , \bQ_3 T \Omega \rangle = \langle T \Omega , \bQ_2^\parallel T \Omega \rangle = 0\,.
\end{equation}
To see this note that  \(B\) creates or annihilates four particles, two with spin up and two with spin down. The operator
 \(\bQ_3\) creates/annihilates two particles, while \(\bQ_2^\parallel\) creates/annihilates four 
  particles of the same spin, hence \(\bQ_3T\Omega\) and \(\bQ_2^\parallel T\Omega\) are orthogonal to \(T\Omega\).

In \Cref{sec:q4} it will be shown via a more involved computation that the contribution of \(\bQ_4\) is small as well. In contrast, for the three-dimensional case considered in \cite{giacomelli2024huangyangformulalowdensityfermi}, \(\bQ_4\) is essential for the renormalization of the interaction at high momenta and negligible only for small momenta. Of course, the soft potential considered here should be thought of as already renormalized.
In our case, the operator  \(B\) is determined by  solving the commutator equation 
\begin{equation}\label{eq:commutator h0 q2}
	[\bH_0,B-B^*]\approx -\bQ_2^{\uparrow\downarrow},
\end{equation}
which corresponds to the second (low momentum) quasi-bosonic Bogoliubov transformation in \cite{giacomelli2024huangyangformulalowdensityfermi}, or the part of the Bogoliubov transformation acting on low momenta in \cite{Basti_2026}, respectively.

\subsection{Configuration space representation and $t$-integral bounds}
Following \cite{giacomelli2024huangyangformulalowdensityfermi}, it will be very useful to write
\begin{equation}\label{def:inte}
	\frac{1}{|r+p|^2 + |r'-p|^2 -|r|^2 - |r'|^2 +2\varepsilon}= \int_0^\infty \dd t\, \ee^{-2t\varepsilon}
	\ee^{-t|r+p|^2}\ee^{t|r|^2}\ee^{-t|r'-p|^2}\ee^{t|r'|^2}
\end{equation}
on the support of \(\hat u_\uparrow(p+r)\hat u_\downarrow(r'-p)\hat v_\uparrow(r)\hat v_\downarrow(r')\).
By applying this in \eqref{eq:def hat eta epsilon}, we obtain the configuration space representation 
\begin{align}\label{eq:b pos space}
	B & = \int_0^\infty\dd t\, \ee^{-2t\varepsilon} \int\dd x \dd y\,
	W(x-y) a_\uparrow(u_{t,x})a_\uparrow(v_{t,x})a_\downarrow(u_{t,y})a_\downarrow(v_{t,y})
\end{align}
for the operator \(B\) defined in \eqref{eq:def b}, with $u_{t,\sigma}$ and $v_{t,\sigma}$ defined in \eqref{eq:v-hat-t}. 
Using this representation, there often arises a need for bounds of the following type throughout the proof.

\begin{Le}
\label[Le]{lem:integration-t-new}
	For $0<\varepsilon<(k_{\rm F}^\sigma)^2$ and $L > k_{\rm F}^\sigma/\varepsilon$, 
	\begin{equation}\label{eq:lem:integration-t-1-new}
		  \int_0^\infty \dd t\, \ee^{-2\varepsilon t} \|u_{t,\sigma}\|_2 \|v_{t,\sigma}\|_2\le C  \ln \left( 1+ \tfrac{(k_{\rm F}^\sigma)^2}\varepsilon\right)  .
	\end{equation}
	Moreover, for $\lambda \geq 1$, 
	\begin{equation}\label{eq:lem:integration-t-1-new2}
		  \int_{(\lambda k_{\rm F}^\sigma)^{-2}}^\infty \dd t\, \ee^{2 t ( (k_{\rm F}^\sigma)^2 - \varepsilon)} \|u_{t,\sigma}\|_2^2  \le C\lambda  \ln \left( 1+ \tfrac{(k_{\rm F}^\sigma)^2}\varepsilon\right)  .
	\end{equation}
\end{Le}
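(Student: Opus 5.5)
Both bounds come down to estimating lattice sums by integrals. Write $k_{\rm F}=k_{\rm F}^\sigma$. By Parseval on the box,
\[
\|u_{t,\sigma}\|_2^2=\frac1{L^2}\sum_{|p|>k_{\rm F}}\ee^{-2t|p|^2},\qquad
\|v_{t,\sigma}\|_2^2=\frac1{L^2}\sum_{|p|\le k_{\rm F}}\ee^{2t|p|^2}.
\]
The first step is to bound these sums, uniformly in $t$, by the corresponding integrals up to lattice corrections. Since $L>k_{\rm F}/\varepsilon\ge 1/k_{\rm F}$, the lattice spacing $2\pi/L$ is small compared with $k_{\rm F}$. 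Comparing with integrals over unit cells of side $2\pi/L$ then gives:
\begin{itemize}
\item $\|v_{t,\sigma}\|_2^2\le C\,\ee^{2tk_{\rm F}^2}\min\{k_{\rm F}^2,\ t^{-1}+k_{\rm F}L^{-1}\}$. For small $t$ we use $\ee^{2t|p|^2}\le \ee^{2tk_{\rm F}^2}$ and count lattice points. For large $t$ we integrate $\ee^{2t(|p|^2-k_{\rm F}^2)}$ near the Fermi surface, which has width $\sim 1/(tk_{\rm F})$ in $|p|$. When this width is smaller than the spacing, the count is of order $k_{\rm F}L$ points times $L^{-2}$.
\item $\|u_{t,\sigma}\|_2^2\le C\,\ee^{-2tk_{\rm F}^2}\bigl(t^{-1}+k_{\rm F}L^{-1}\bigr)$, by the same argument. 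For small $t$ the bound $Ct^{-1}$ follows simply from $\int \ee^{-2t|p|^2}\dd p=\pi/(2t)$.
\end{itemize}

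For \eqref{eq:lem:integration-t-1-new}, the exponentials $\ee^{\pm tk_{\rm F}^2}$ cancel in the product. We split the $t$-integral into three regions.
\begin{itemize}
\item For $t\le k_{\rm F}^{-2}$: $\|u_t\|\|v_t\|\le C t^{-1/2}k_{\rm F}$, which integrates to $O(1)$.
\item For $k_{\rm F}^{-2}\le t\le L/k_{\rm F}$: the product is $\le C(t^{-1}+k_{\rm F}/L)$. With the factor $\ee^{-2\varepsilon t}$, the $t^{-1}$ part gives $C\ln(1+k_{\rm F}^2/\varepsilon)$.
\item The $k_{\rm F}/L$ part gives at most $C k_{\rm F}/(L\varepsilon)\le C$, by the assumption $L>k_{\rm F}/\varepsilon$.
\end{itemize}
Combined with $\varepsilon<k_{\rm F}^2$, so that the logarithm is bounded below, this gives the claim.

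For \eqref{eq:lem:integration-t-1-new2}, the weight $\ee^{2tk_{\rm F}^2}$ again cancels the decay of $\|u_t\|^2$. The integrand is therefore bounded by $C\ee^{-2t\varepsilon}(t^{-1}+k_{\rm F}/L)$. Integrating from $(\lambda k_{\rm F})^{-2}$, the $t^{-1}$ part gives $\ln(\lambda^2)+C\ln(1+k_{\rm F}^2/\varepsilon)$, and the lattice part gives $O(1)$ as before. Finally $\ln\lambda^2\le C\lambda$, and $\ln(1+k_{\rm F}^2/\varepsilon)\ge\ln 2$, which yields the stated $C\lambda\ln(\cdots)$.

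The main obstacle is the uniform lattice-versus-integral comparison near the Fermi surface for large $t$, where $\ee^{2t(|p|^2-k_{\rm F}^2)}$ varies on a scale smaller than the lattice spacing. I would handle this by counting lattice points in annuli $k_{\rm F}-s\le|p|\le k_{\rm F}$. The number of such points is at most $C(k_{\rm F}sL^2+k_{\rm F}L)$, by a crude Gauss-circle-type estimate (comparing with the area of the annulus thickened by the cell diameter). Summing dyadically in $s$ then produces the $t^{-1}+k_{\rm F}/L$ bound.
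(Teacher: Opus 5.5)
Your proof is correct, but it takes a different route from the paper's.

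\textbf{The paper's route.} The paper never bounds $\|u_{t,\sigma}\|_2$ and $\|v_{t,\sigma}\|_2$ pointwise in $t$. It applies Cauchy--Schwarz in $t$ with the weight $f(t(k_{\rm F}^\sigma)^2)$, where $f(t)=t^{1/2}\ee^{2t}/(1+t^{1/2})$, and then does the $t$-integral mode by mode. This turns both factors into time-independent lattice sums of resolvent type, roughly
$L^{-2}\sum_{|k|>k_{\rm F}^\sigma}\frac{k_{\rm F}^\sigma/|k|}{|k|^2-(k_{\rm F}^\sigma)^2+\varepsilon}$ and $L^{-2}\sum_{|k|\le k_{\rm F}^\sigma}\frac{1}{(k_{\rm F}^\sigma)^2-|k|^2+\varepsilon}$.
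The factor $t^{1/2}/(1+t^{1/2})$ in the weight makes the first sum converge at large $|k|$; without it one would get $\sum|k|^{-2}$, which diverges logarithmically in two dimensions. Each sum is then compared once with its integral. The hypothesis $L>k_{\rm F}^\sigma/\varepsilon$ controls the roughly $k_{\rm F}^\sigma L$ lattice points in the last shell below the Fermi circle, each of which carries weight at most $1/\varepsilon$. The second inequality is a one-line corollary of the first factor, using $f(t)\ge \ee^{2t}/(2\lambda)$ for $t\ge\lambda^{-2}$.

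\textbf{Your route.} You instead prove uniform bounds by counting lattice points in annuli:
$\|u_{t,\sigma}\|_2^2\le C\ee^{-2t(k_{\rm F}^\sigma)^2}(t^{-1}+k_{\rm F}^\sigma/L)$ and $\|v_{t,\sigma}\|_2^2\le C\ee^{2t(k_{\rm F}^\sigma)^2}\min\{(k_{\rm F}^\sigma)^2,\,t^{-1}+k_{\rm F}^\sigma/L\}$.
After that the $t$-integrals are elementary. The same hypothesis $L>k_{\rm F}^\sigma/\varepsilon$ enters through $\int_0^\infty\ee^{-2\varepsilon t}\,k_{\rm F}^\sigma/L\,\dd t\le 1/2$. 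I checked the annulus count $C(k_{\rm F}s L^2+k_{\rm F}L)$ and the dyadic summation, and both go through.

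One small correction: $L>1/k_{\rm F}^\sigma$ only makes the lattice spacing comparable to $k_{\rm F}^\sigma$ at worst, not small compared with it. Your counting only ever uses $1/L\lesssim k_{\rm F}^\sigma$, so nothing breaks.

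\textbf{What each buys.} Your approach needs more lattice bookkeeping, but it is more transparent. It also gives a sharper second estimate, $C\bigl(\ln\lambda+\ln(1+(k_{\rm F}^\sigma)^2/\varepsilon)\bigr)$, in place of the $C\lambda\ln(1+(k_{\rm F}^\sigma)^2/\varepsilon)$ forced by the crude lower bound on $f$. The paper's approach is more compact: a single weighted computation, with only two sum-to-integral comparisons, yields both inequalities at once.
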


\begin{proof}
Let $f(t) = t^{\frac 12} \ee^{2t}/ ( 1+ t^{\frac 12})$ for $t>0$. By Cauchy--Schwarz, 
	\begin{align}\nonumber
		  \int_0^\infty \dd t\, \ee^{-2\varepsilon t} \|u_{t,\sigma}\|_2 \|v_{t,\sigma}\|_2 & \le \left(  \int_0^\infty \dd t\, \ee^{-2\varepsilon t} \|u_{t,\sigma}\|_2^2 f( t (k_{\rm F}^\sigma)^2) 
		  \right)^{\frac 12} \\ & \quad  \times \left(  \int_0^\infty \dd t\, \ee^{-2\varepsilon t} \|v_{t,\sigma}\|_2^2 f( t (k_{\rm F}^\sigma)^2)^{-1} 
		  \right)^{\frac 12}\label{eq:lem:integration-t-1-newx}.
	\end{align}
For the first term on right-hand side, we can bound
\begin{equation}
  \int_0^\infty \dd t\, \ee^{-2(1+x) t}  f( t ) \leq \frac {C}{x\sqrt{1+x}} 
\end{equation}
for $x>0$, which yields
\begin{equation}\label{4:33}
 \int_0^\infty \dd t\, \ee^{-2\varepsilon t} \|u_{t,\sigma}\|_2^2 f( t (k_{\rm F}^\sigma)^2) \leq \frac{C}{L^2} \sum_{|k|> k_{\rm F}^\sigma} \frac{1}{ |k|^2-(k_\mathrm{F}^\sigma)^2+\varepsilon} 
 \frac  {k_{\rm F}^\sigma}{|k|}
 \leq  C  \ln \left( 1+ \tfrac{(k_{\rm F}^\sigma)^2}\varepsilon\right)  .
 \end{equation}
 The last estimate is obtained by bounding the sum via the corresponding integral, which can easily justified under the stated assumption $L > k_{\rm F}^\sigma/\varepsilon$. 
The bound \eqref{eq:lem:integration-t-1-new2} follows immediately from \eqref{4:33} since $f(t) \geq \ee^{2t} / (2\lambda) $ for $t\geq 1/\lambda^2$ and $\lambda \geq 1$.

For the second term in \eqref{eq:lem:integration-t-1-newx}, on the other hand, we have
\begin{align}\nonumber
 \int_0^\infty \dd t\, \ee^{-2\varepsilon t} \|v_{t,\sigma}\|_2^2 f( t (k_{\rm F}^\sigma)^2)^{-1}  &  =	\frac{1}{2L^2} \sum_{|k| \leq k_{\rm F}^\sigma } \frac{1 + \sqrt{\pi} \sqrt{ 1 - (|k|^2-\varepsilon)/(k_\mathrm{F}^\sigma)^2}}{(k_\mathrm{F}^\sigma)^2-|k|^2+\varepsilon} \\ & \leq C  \ln \left( 1+ \tfrac{(k_{\rm F}^\sigma)^2}\varepsilon\right)  ,
\end{align}
hence proving \eqref{eq:lem:integration-t-1-new}.
\end{proof}

For our choice of $\varepsilon = \delta\varrho$, we have $\ln( 1+ \tfrac{(k_{\rm F}^\sigma)^2}\varepsilon) \leq C \ln \delta^{-1}$.   

\subsection{Conjugation of $\cN$}

We start by  computing bounds for \(T^*\cN T\).

\begin{Prop}
\label[Prop]{prop:bound-n}
	For $\lambda\in [0,1]$  
	\begin{equation}\label{eq:conj n t2eg prop}
		\langle T_\lambda \Omega,\cN T_\lambda\Omega\rangle
		\le C L^2\varrho \left( \delta \ln \delta^{-1}\right)^2 \,.
	\end{equation}
\end{Prop}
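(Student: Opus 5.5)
We run a Grönwall-type argument in $\lambda$. Differentiating gives
\begin{equation*}
\frac{\dd}{\dd\lambda}\langle T_\lambda\Omega,\cN T_\lambda\Omega\rangle = \langle T_\lambda\Omega,[\cN,B-B^*]T_\lambda\Omega\rangle .
\end{equation*}
Since $B$ annihilates four particles, $[\cN,B]=-4B$ and $[\cN,B^*]=4B^*$. The derivative is therefore $-8\,\mathrm{Re}\langle T_\lambda\Omega,B T_\lambda\Omega\rangle$. It remains to bound $|\langle\xi,B\xi\rangle|$ for $\xi=T_\lambda\Omega$ in terms of $\langle\xi,\cN\xi\rangle$.

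To do so we use the configuration space representation \eqref{eq:b pos space}. For fixed $t,x,y$, the operators $a_\uparrow(v_{t,x})$ and $a_\downarrow(v_{t,y})$ are bounded, with norms $\|v_{t,\uparrow}\|_2$ and $\|v_{t,\downarrow}\|_2$ by \eqref{eq:norm ann}. The operators $a_\uparrow(u_{t,x})$ and $a_\downarrow(u_{t,y})$ we keep as annihilators and pair with the state. Cauchy--Schwarz in $x,y$ with weight $W(x-y)\ge0$ then gives
\begin{align*}
|\langle\xi,B\xi\rangle| &\le \int_0^\infty\dd t\,\ee^{-2t\varepsilon}\|v_{t,\uparrow}\|_2\|v_{t,\downarrow}\|_2 \int\dd x\dd y\,W(x-y)\,\|a_\uparrow(u_{t,x})\xi\|\,\|a_\downarrow(u_{t,y})\xi\| .
\end{align*}
Using $\int\dd x\,\|a_\sigma(u_{t,x})\xi\|^2\le\|u_{t,\sigma}\|_2^2\cdots$ is too lossy, since it would not give $\cN$. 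Instead we note that $\int\dd x\,\|a_\sigma(u_{t,x})\xi\|^2=\langle\xi,\sum_k\hat u_{t,\sigma}(k)^2\hat a^*_k\hat a_k\xi\rangle\le\langle\xi,\cN_\sigma\xi\rangle$, because $\hat u_{t}\le1$. Young's inequality in $x,y$ with $\int W=\hat W(0)\le C\delta$ from \eqref{int:W} then yields, for each $t$, a bound of the form $C\delta\,\langle\xi,\cN\xi\rangle$.

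This would leave the $t$-integral $\int\dd t\,\ee^{-2t\varepsilon}\|v_{t,\uparrow}\|\|v_{t,\downarrow}\|$, which diverges: $\|v_t\|$ grows like $\ee^{tk_F^2}$. So the $u$-factors must absorb the decay, and we cannot drop $\hat u_t\le 1$ entirely. The refined plan is to keep one $u$ norm. Write $a_\uparrow(u_{t,x})a_\uparrow(v_{t,x})$ and bound $\|a_\downarrow(u_{t,y})a_\downarrow(v_{t,y})\xi\|\le\|v_{t,\downarrow}\|_2\|a_\downarrow(u_{t,y})\xi\|$ on one side, and the spin-up pair on the other. Then apply Cauchy--Schwarz in $t$ with weights chosen so that each spin species contributes a factor $\int\dd t\,\ee^{-2t\varepsilon}\|u_{t,\sigma}\|\|v_{t,\sigma}\|$. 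By \Cref{lem:integration-t-new} each such factor is $\le C\ln\delta^{-1}$, while $\int\dd x\,\|a(u_{t,x})\xi\|^2$ is bounded by $\langle\xi,\cN\xi\rangle$ times weights controlled through \eqref{eq:lem:integration-t-1-new2}. The expected outcome is
\begin{equation*}
|\langle\xi,B\xi\rangle|\le C\delta\ln\delta^{-1}\,\|\cN^{1/2}\xi\|\,(L^2\varrho)^{1/2}.
\end{equation*}
The factor $(L^2\varrho)^{1/2}$ arises because one side carries no annihilator on $\xi$ beyond what bounded operators give: $\|v_t\|_2^2\sim\varrho$ summed against $L^2$ from the volume of the $y$-integral. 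Equivalently, we estimate $\|B^*\Omega\|$-type terms.

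Setting $f(\lambda)=\langle T_\lambda\Omega,\cN T_\lambda\Omega\rangle$, we get $f'\le C\delta\ln\delta^{-1}(L^2\varrho)^{1/2}f^{1/2}$ with $f(0)=0$. Integrating gives $f(\lambda)^{1/2}\le C\lambda\,\delta\ln\delta^{-1}(L^2\varrho)^{1/2}$, which is \eqref{eq:conj n t2eg prop}.

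\textbf{Main obstacle.} The hard step is the operator bound on $B$: we must distribute $u$- and $v$-factors and the $t$-weights so that both the volume factor $L^2\varrho$ and exactly one power of $\cN^{1/2}$ emerge, with only $\ln\delta^{-1}$ lost. If the crude splitting fails, we first try writing $\|B\xi\|$ directly in momentum space. Bounding $\frac1{L^2}\sum_p|\hat W(p)|\,\|\hat\eta(p)\|$ with the energy denominator via \eqref{need:W} at $\lambda\sim\varepsilon$ also gives $\delta\ln\delta^{-1}$ per factor.
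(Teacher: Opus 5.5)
\textbf{There is a gap at the key estimate on $B$.}

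Your overall architecture matches the paper's proof. The derivative is $-4\langle T_\lambda\Omega,(B+B^*)T_\lambda\Omega\rangle$. The target estimate $|\langle\xi,B\xi\rangle|\le C\delta\ln\delta^{-1}(L^2\varrho)^{1/2}\|\cN^{1/2}\xi\|$ is exactly the right one, and Gr\"onwall with $f(0)=0$ finishes. Your ``refined'' splitting is also the paper's: the spin-up pair is treated as a bounded operator of norm $\|u_{t,\uparrow}\|_2\|v_{t,\uparrow}\|_2$, and the spin-down pair gives $\|v_{t,\downarrow}\|_2\|a_\downarrow(u_{t,y})\xi\|$.

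The gap is that the one nontrivial step, controlling the $t$-integral, is only announced as an ``expected outcome''. The mechanism you sketch does not fit the terms that are actually present.
\begin{itemize}
\item After your splitting there is no factor $\|u_{t,\downarrow}\|_2$, because the spin-down $u$ acts on $\xi$. So the spin-down species cannot contribute $\int\dd t\,\ee^{-2t\varepsilon}\|u_{t,\downarrow}\|_2\|v_{t,\downarrow}\|_2$.
\item \eqref{eq:lem:integration-t-1-new2} concerns $\|u_{t,\sigma}\|_2^2$, not $\int\dd y\,\|a_\downarrow(u_{t,y})\xi\|^2$, so it does not supply the weights you want.
\item The only bound on $\int\dd y\,\|a_\downarrow(u_{t,y})\xi\|^2$ that you actually write down uses $\hat u_t\le 1$. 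With it, $\|v_{t,\downarrow}\|_2$ grows exponentially in $t$ and is uncompensated. Hence $\int\dd t\,\ee^{-2t\varepsilon}\|u_{t,\uparrow}\|_2\|v_{t,\uparrow}\|_2\|v_{t,\downarrow}\|_2=\infty$, and no weighted Cauchy--Schwarz in $t$ can repair this.
\item For the same reason, your heuristic ``$\|v_t\|_2^2\sim\varrho$'' for the volume factor is not right.
\end{itemize}
Separately, your first display is not a valid inequality. In $\langle\xi,B\xi\rangle$ the leftmost $a_\uparrow(u_{t,x})$ becomes $a^*_\uparrow(u_{t,x})$ when moved onto $\xi$. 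So a product $\|a_\uparrow(u_{t,x})\xi\|\,\|a_\downarrow(u_{t,y})\xi\|$ cannot arise.

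The missing idea is a cancellation that holds pointwise in $t$. You note that the $u$-factors must compensate the growth of $\|v_t\|_2$, but you never use the explicit decay. We have $\|v_{t,\downarrow}\|_2\le\ee^{t(k_{\rm F}^\downarrow)^2}\|v_\downarrow\|_2$ and $0\le\hat u_{t,\downarrow}(k)\,\ee^{t(k_{\rm F}^\downarrow)^2}\le1$. Cauchy--Schwarz in $y$ then gives, for every $t>0$,
\[
\|v_{t,\downarrow}\|_2\int\dd y\,\|a_\downarrow(u_{t,y})\xi\|\le L\,\|v_\downarrow\|_2\Big(\ee^{2t(k_{\rm F}^\downarrow)^2}\int\dd y\,\|a_\downarrow(u_{t,y})\xi\|^2\Big)^{1/2}\le L\,\varrho_\downarrow^{1/2}\,\|\cN^{1/2}\xi\| .
\]
So the spin-down side is bounded uniformly in $t$; this is an $L^\infty$ bound in $t$, with no splitting of the $t$-integral between species. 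The only remaining $t$-integral is $\int_0^\infty\dd t\,\ee^{-2t\varepsilon}\|u_{t,\uparrow}\|_2\|v_{t,\uparrow}\|_2\le C\ln\delta^{-1}$, by \eqref{eq:lem:integration-t-1-new}. The bound $\int W\le C\delta$ supplies the factor $\delta$. This is exactly the paper's argument, and once this step is inserted your Gr\"onwall conclusion goes through.
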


\begin{proof}
	We have $\langle\Omega,\cN\Omega\rangle=0$ and 
	\begin{equation}\label{tbug}
			\partial_\lambda\langle \psi  ,
			 T_{\lambda}^*  \cN T_{\lambda} \psi\rangle
			= \langle\psi, T_{\lambda}^*[\cN,B-B^*] T_{\lambda}\psi\rangle
			  = -4\langle \psi,  T_{\lambda}^* (B+B^*) T_{\lambda} \psi \rangle
	\end{equation}
for any normalized \(\psi\in\cF\).
	We will apply the representation \eqref{eq:b pos space} for $B$, 
	and proceed as in \cite[Prop.~3.5]{giacomelli2024huangyangformulalowdensityfermi}. 
		We  have 
		\begin{align}
		|\langle \psi, B \psi\rangle| & \le
		\int\dd t\,\ee^{-2t\varepsilon} \|u_{t,\uparrow}\|_2\|v_{t,\uparrow}\|_2\|v_{t,\downarrow}\|_2 \int \dd y
		\, \|a_\downarrow(u_{t,y}) \psi\|    \int W \,  .
	\end{align}
	Since $\|v_{t,\downarrow}\|_2 \leq \ee^{t (k_{\rm F}^\downarrow)^2} \|v_\downarrow\|_2$ and $0\leq \hat u_{t,\downarrow}(k) \ee^{t (k_{\rm F}^\downarrow)^2} \leq 1$, 
	the Cauchy--Schwarz inequality yields
	\begin{align}
		\|v_{t,\downarrow}\|_2\int \dd y
		\, \|a_\downarrow(u_{t,y}) \psi\| 
		 & \le   \|v_\downarrow\|_2 L \left( \int \dd y
		\, \|a_\downarrow(u_{t,y}) \psi\|^2 \ee^{2 t (k_{\rm F}^\downarrow)^2} \right)^{\frac 12} 
		 \le 
		 \|v_\downarrow\|_2 L  \|\cN^{\frac{1}{2}}\psi\|  \label{eq:n i small} .
	\end{align}
 	Using further that 
	$\int W \leq C \delta$ for our choice of parameters, 
	  \Cref{lem:integration-t-new} thus implies that 
	\begin{equation}\label{bound:b}
		|\langle \psi, B \psi\rangle| \le  C L\varrho^{\frac{1}{2}}\|\cN^{\frac{1}{2}}\psi\| \delta \ln \delta^{-1}  \,.
	\end{equation}
	\Cref{eq:conj n t2eg prop} then follows from \eqref{tbug} via Gr\"onwall's Lemma.
\end{proof}

\subsection{Conjugation of \(\bH_0\)}\label{sec: h0}
We shall now show that, up to a negligible error, the commutator of \(\bH_0\) with \(B-B^*\) satisfies \eqref{eq:commutator h0 q2}.
\begin{Prop}
\label[Prop]{prop:conj-h-0}
	We have  
	\begin{equation}
		\partial_\lambda T_{\lambda}^* \bH_0 T_{\lambda}= T_{\lambda}^* (-\bQ_{2}^{\uparrow\downarrow}+\cE_{\bH_0})T_{\lambda}
	\end{equation}
	with \(\cE_{\bH_0}=  2 \varepsilon (B + B^*)\) satisfying 
	\begin{equation}\label{4.42}
		|\langle \psi,\cE_{\bH_0}\psi\rangle|\le C L\varrho^{\frac{3}{2}} \|\cN^{\frac{1}{2}}\psi\| \delta^2 \ln \delta^{-1}
	\end{equation} 
	 for any normalized \(\psi\in\cF\).
\end{Prop}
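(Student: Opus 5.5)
The plan is to compute the commutator $[\bH_0, B-B^*]$ directly in momentum space. By the chain rule for $T_\lambda=\ee^{\lambda(B-B^*)}$ we have
\[
\partial_\lambda T_\lambda^*\bH_0T_\lambda = T_\lambda^*[\bH_0,B-B^*]T_\lambda,
\]
so the claim reduces to the operator identity $[\bH_0,B-B^*] = -\bQ_2^{\uparrow\downarrow} + 2\varepsilon(B+B^*)$, followed by the bound \eqref{4.42}.

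\textbf{Step 1: the commutator identity.} Since $\bH_0$ is quadratic and diagonal in momentum, with eigenvalue $\omega_\sigma(k)=||k|^2-(k_{\rm F}^\sigma)^2|$, we have $[\bH_0,\hat a_{k,\sigma}] = -\omega_\sigma(k)\hat a_{k,\sigma}$. Applied to each summand of $B$ in \eqref{eq:def b}, this gives
\[
[\bH_0,B] = -\frac1{L^2}\sum \hat\eta^\varepsilon_{r,r'}(p)\,\big(\omega(r+p)+\omega(-r)+\omega(r'-p)+\omega(-r')\big)\,\hat a\hat a\hat a\hat a .
\]
On the support of $\hat u_\uparrow(r+p)\hat v_\uparrow(r)\hat u_\downarrow(r'-p)\hat v_\downarrow(r')$ the Fermi momenta cancel in this sum of $\omega$'s, so it equals exactly $|r+p|^2+|r'-p|^2-|r|^2-|r'|^2 = D-2\varepsilon$, where $D$ is the denominator in \eqref{eq:def hat eta epsilon}. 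Hence
\[
[\bH_0,B] = -\frac1{L^2}\sum \hat W(p)\,\hat u\hat v\hat u\hat v\,\hat a\hat a\hat a\hat a + 2\varepsilon B .
\]
Next I compare the first term with $(\bQ_2^{\uparrow\downarrow})^*$. Writing $\bQ_2^{\uparrow\downarrow}$ from \eqref{q2ud} in momentum space and using that $\sigma\neq\sigma'$ appears twice (the $\uparrow\downarrow$ and $\downarrow\uparrow$ terms coincide after anticommuting, which cancels the factor $\tfrac12$), the annihilation part of $\bQ_2^{\uparrow\downarrow}$ should be exactly that term. This matching of the ordering and signs of the four fermionic operators is the only delicate bookkeeping in the argument. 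Taking adjoints, $[\bH_0,-B^*] = [\bH_0,B]^*$ because $\bH_0$ is self-adjoint, and adding the two pieces gives $[\bH_0,B-B^*] = -\bQ_2^{\uparrow\downarrow} + 2\varepsilon(B+B^*)$.

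\textbf{Step 2: the bound \eqref{4.42}.} I reuse the estimate \eqref{bound:b} from the proof of \Cref{prop:bound-n}:
\[
|\langle\psi,B\psi\rangle| \le C L\varrho^{1/2}\,\|\cN^{1/2}\psi\|\,\delta\ln\delta^{-1}.
\]
The same bound holds for $B^*$ by complex conjugation. Multiplying by $2\varepsilon = 2\delta\varrho$ gives
\[
|\langle\psi,\cE_{\bH_0}\psi\rangle| \le C L\varrho^{3/2}\,\|\cN^{1/2}\psi\|\,\delta^2\ln\delta^{-1},
\]
which is \eqref{4.42}.

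I expect the main obstacle to be the sign and ordering check identifying $[\bH_0,B]$ with the adjoint of the $\bQ_2^{\uparrow\downarrow}$ kernel, since the energy cancellation itself is exact by construction of $\hat\eta^\varepsilon$. Step 2 is immediate from earlier results.
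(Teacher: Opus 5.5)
Your proposal is correct and follows the same route as the paper. You compute $[\bH_0,B-B^*]$ exactly, using that the energy factor equals the denominator of $\hat\eta^\varepsilon$ minus $2\varepsilon$, so that $[\bH_0,B-B^*]=-\bQ_2^{\uparrow\downarrow}+2\varepsilon(B+B^*)$, and then multiply the bound \eqref{bound:b} on $B$ and $B^*$ by $2\varepsilon=2\delta\varrho$; the sign and ordering match you flag does hold, consistent with the configuration-space form \eqref{eq:b pos space} of $B$.
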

\begin{proof}
	A simple  computation shows that  
	\begin{equation}
		\partial_\lambda T_\lambda^* \bH_0T_\lambda= T_\lambda^*[\bH_0,B-B^*]T_\lambda=T_\lambda^* \left( -\bQ_2^{\uparrow\downarrow}+2\varepsilon(B+B^*) \right) T_\lambda \,.
	\end{equation}
	The expectation value of \(B+B^*\) has been estimated in \eqref{bound:b} already. With $\varepsilon = \delta\varrho$, this yields \eqref{4.42}. 
\end{proof}

\subsection{Conjugation of \(\bQ_2^{\uparrow\downarrow}\)}\label{sec:q2}
In this subsection, we compute the contribution due to \(\bQ_2^{\uparrow\downarrow}\), and prove that it is given by the correct constant up to sub-leading terms.
\begin{Prop}
\label[Prop]{prop:conj-bq2}
	We have 
	\begin{equation}\label{eq:prop conj q2}
		\partial_\lambda T_{\lambda}^* \bQ_2^{\uparrow\downarrow}T_{\lambda}= 
		-\frac{2}{L^4}\sum_{p,r,r'\in\Lambda^*}\hat W(p)
		 \hat \eta^\varepsilon_{r,r'}(p) \hat u_\uparrow(r+p)\hat u_\downarrow(r'-p)\hat v_\uparrow(r)\hat v_\downarrow(r')
		 +T_{\lambda}^*\cE_{\bQ_2}T_{\lambda}
	\end{equation}
	with $\cE_{\bQ_2}$ satisfying 
	\begin{equation}\label{eq:bound ce q-2less t2}
		|\langle\psi,\cE_{\bQ_2}\psi\rangle|\le C \varrho \langle\psi,\cN\psi\rangle\delta^{2} \ln \delta^{-1} 
		+	C \varrho^{\frac{1}{2}} \|\bQ_4^{\frac{1}{2}}\psi\| \|\cN^{\frac{1}{2}}\psi\|\delta^{\frac{3}{2}}\ln\delta^{-1}
	\end{equation}
	for any \(\psi\in\cF\).
\end{Prop}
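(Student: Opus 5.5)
Since $\partial_\lambda T_\lambda^*\bQ_2^{\uparrow\downarrow}T_\lambda = T_\lambda^*[\bQ_2^{\uparrow\downarrow},B-B^*]T_\lambda$, the task is to compute this commutator, extract its fully contracted (scalar) part, and bound the normal-ordered remainder. First I will rewrite $\bQ_2^{\uparrow\downarrow}$ in momentum space. Its creation part is
\begin{equation*}
\frac{1}{L^2}\sum_{p,r,r'}\hat W(p)\,\hat u_\uparrow(r+p)\hat v_\uparrow(r)\hat u_\downarrow(r'-p)\hat v_\downarrow(r')\,\hat a^*_{r+p,\uparrow}\hat a^*_{-r,\uparrow}\hat a^*_{r'-p,\downarrow}\hat a^*_{-r',\downarrow},
\end{equation*}
up to the sign conventions fixed by ordering, and likewise for the hermitian conjugate. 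The commutators $[\bQ_2^{\uparrow\downarrow},B]$ and $[\bQ_2^{\uparrow\downarrow},B^*]$ vanish or are negligible by particle-number counting, since both create or both annihilate four modes. The relevant terms are therefore $-[(\bQ_2^{\uparrow\downarrow})_{\rm cr},B^*]$ and its adjoint. These are commutators of a product of four creation operators with a product of four annihilation operators, which I will expand with the CAR. Because the $u$/$v$ supports of the two factors coincide mode by mode, the fully contracted term gives exactly the scalar $-\frac{2}{L^4}\sum\hat W(p)\hat\eta^\varepsilon_{r,r'}(p)\hat u\hat u\hat v\hat v$. The factor $2$ comes from the two hermitian-conjugate pieces, and the remaining permutations of the four contractions match the two up/two down indices. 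Everything else goes into $\cE_{\bQ_2}$, and these remaining terms are products with one, two or three contractions.

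Next I will organize $\cE_{\bQ_2}$ by the number of remaining operators, which is always even: six, four or two. Each such term will be written in configuration space using \eqref{eq:b pos space} for $B$ and the position representation of $\bQ_2^{\uparrow\downarrow}$, giving expressions like $\int\dd t\,\ee^{-2t\varepsilon}\int W(x-y)W(x'-y')$ times contraction kernels such as $u_{t}(x-x')$ or $v_{t}(x-x')$ (or their untransported versions) times normal-ordered monomials. Terms in which only $v$-operators remain, or in which one $u$ and several $v$'s remain, will be bounded by $C\varrho\langle\psi,\cN\psi\rangle$. For this I use $\|a(v_x)\|\le\|v\|_2\sim\varrho^{1/2}$ and the bounded norms of $a(u_{t,x})$, and I integrate out the $W$'s via $\int W\le C\delta$, which gives $\delta^2$. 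The $t$-integral then produces $\ln\delta^{-1}$ through \Cref{lem:integration-t-new}. The other estimates needed here are \eqref{eq:n i small} and the Cauchy--Schwarz estimate used in \Cref{prop:bound-n}.

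The main obstacle is the terms in which two $u$-operators on distinct points $x,y$ remain as $a_\uparrow(u_x)a_\downarrow(u_y)$ integrated against $W(x-y)$. A bound by $\int W\cdot\cN$ alone would lose a factor, so for these I will use the $\bQ_4$ structure instead. I will apply Cauchy--Schwarz in the form
\begin{equation*}
\Big|\int W(x-y)\langle\psi,\,X_{x,y}^*\,a(u_x)a(u_y)\psi\rangle\Big|\le\Big(\int W\|a(u_x)a(u_y)\psi\|^2\Big)^{1/2}\Big(\int W\|X_{x,y}\psi\|^2\Big)^{1/2}.
\end{equation*}
The first factor is $\le C\|\bQ_4^{1/2}\psi\|$. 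The second factor contains one $v_t$-type operator and one $u_t$-type operator together with the contraction kernel, and the key step is to show it is bounded by $C\varrho^{1/2}\delta^{1/2}\|\cN^{1/2}\psi\|$ after the $t$-integration. I would first try to get this by pulling one $\int W$ out and using $\|u_t\|_2\|v_t\|_2$ together with \Cref{lem:integration-t-new}. If that fails, I would fall back on the second bound of that lemma, splitting $t$ at $(\lambda k_F)^{-2}$ to control the short-time singularity of $u_t$. Combining the two factors yields the $\delta^{3/2}\ln\delta^{-1}$ term in \eqref{eq:bound ce q-2less t2}.
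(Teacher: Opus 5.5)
The constant term and the terms in which $a_\uparrow(u_x)a_\downarrow(u_y)$ sits against $W(x-y)$ (the paper's $\mathrm I_1,\mathrm I_2,\mathrm I_5$) are handled essentially as in the paper. There are two slips there. First, $B$ is the annihilating operator, so the relevant commutator is $[(\bQ_2^{\uparrow\downarrow})_{\rm cr},B]$ plus its adjoint, while $[(\bQ_2^{\uparrow\downarrow})_{\rm cr},B^*]=0$. Second, your second Cauchy--Schwarz factor must be of order $\varrho^{1/2}\delta^{3/2}\ln\delta^{-1}\|\cN^{1/2}\psi\|$, not $\varrho^{1/2}\delta^{1/2}\|\cN^{1/2}\psi\|$, to give the power of $\delta$ in \eqref{eq:bound ce q-2less t2}.

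The genuine gap is your claim that every term in which only $v$-operators survive, including the quadratic ones, is bounded by $C\varrho\delta^2\ln\delta^{-1}\langle\psi,\cN\psi\rangle$. You propose to get this using only $\int W\le C\delta$, $\|v_\sigma\|_2\le C\varrho^{1/2}$ and \Cref{lem:integration-t-new}.

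Take the term in which both $u$-modes are contracted (the paper's $\mathrm I_6$). Its kernel is $W(x-y)W(z-z')\,u_{t,\uparrow}(x-z)\,u_{t,\downarrow}(y-z')$, acting on $a^*_\uparrow(v_x)a^*_\downarrow(v_y)a_\downarrow(v_{t,z'})a_\uparrow(v_{t,z})$. If you integrate the $W$'s out in $L^1$, both $u_t$ kernels must be controlled in $L^2$. You are then left with $\int\dd t\,\ee^{-2t\varepsilon}\|u_{t,\uparrow}\|_2\|u_{t,\downarrow}\|_2\,\ee^{t((k_{\rm F}^\uparrow)^2+(k_{\rm F}^\downarrow)^2)}$. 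This diverges like $\int\dd t/t$ as $t\to0$, because $\|u_{t,\sigma}\|_2^2\sim t^{-1}$ there. The second bound of \Cref{lem:integration-t-new} only covers $t\geq(\lambda k_{\rm F}^\sigma)^{-2}$, so your fallback of splitting the $t$-integral leaves the short-time region uncontrolled.

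This is not an artifact of the method. In momentum space the coefficient of this term (external modes $r,r',s'$ in the Fermi balls) is a sum over $p$ of $\hat W(s'-r-p)\hat W(p)\,\hat u_\uparrow(r+p)\hat u_\downarrow(r'-p)$ divided by the energy denominator. That sum diverges logarithmically at large $p$ if $\hat W$ is constant, so no argument that sees $W$ only through $\int W$ can bound it. The same ultraviolet logarithm appears in the quadratic terms with Fermi-ball modes, $\omega_\sigma(k)$ for $|k|\le k_{\rm F}^\sigma$. There $p$ ranges over an unbounded set and one meets $L^{-2}\sum_p\hat W(p)^2(k_{\rm F}^{\sigma'})^2/(|p|^2+(k_{\rm F}^{\sigma'})^2)$.

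The missing ingredient is the Fourier decay of $\hat W(p)=4\pi J_0(|p|b)/\ln(b/a)$, encoded in \eqref{need:W}. The paper splits the $t$-integral at $k_{\rm F}^{-2}$, with $k_{\rm F}=\max_\sigma k_{\rm F}^\sigma$.

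For $t\geq k_{\rm F}^{-2}$, your $L^2$ bound works via \eqref{eq:lem:integration-t-1-new2}.

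For $t<k_{\rm F}^{-2}$, the paper expands both contraction kernels in Fourier series and bounds $0\le\hat u_{t,\sigma}(k)\le\ee^{-t|k|^2}$, which replaces them by the positive heat kernels $g_t$. Cauchy--Schwarz over the momenta then gives $C\varrho\,\ee^{t((k_{\rm F}^\uparrow)^2+(k_{\rm F}^\downarrow)^2)}\|W*g_t\|_2^2\langle\psi,\cN\psi\rangle$, with $\|W*g_t\|_2^2=L^{-2}\sum_k|\hat W(k)|^2\ee^{-2t|k|^2}$. Integrating in $t$ with \eqref{4.69} yields $L^{-2}\sum_k|\hat W(k)|^2/(|k|^2+k_{\rm F}^2)$. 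This is at most $C\delta^2\ln\delta^{-1}$ by $|\hat W|\le C\delta$ and \eqref{need:W}.

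For the quadratic terms, the paper writes them as $d\Gamma(\omega)$ and bounds $\max\omega_\sigma\le C\varrho\delta^2\ln\delta^{-1}$. It uses $C\min\{(k_{\rm F}^{\sigma'})^2/(|p|^2+(k_{\rm F}^{\sigma'})^2),\ln\delta^{-1}\}$ as a bound on the inner $r$-sum, together with \eqref{need:W}.

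Without this Fourier-side input your argument does not establish \eqref{eq:bound ce q-2less t2}.
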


\begin{proof}
	The derivative is given by  
	\begin{equation}
		\partial_\lambda T_{\lambda}^* \bQ_2^{\uparrow\downarrow}T_{\lambda} 
		=T_\lambda^*[\bQ_2^{\uparrow\downarrow},B]T_\lambda+\mathrm{h.c.}
	\end{equation}
	Recalling the definitions of \(\bQ_2^{\uparrow\downarrow}\) and \(B\) in in \eqref{q2ud}  and \eqref{eq:def b}, respectively, we see that the relevant commutator to be calculated is
	\begin{equation}
          [\hat{a}_{-s^\prime, \uparrow}^\ast\hat{a}_{s^\prime - k,\uparrow}^\ast\hat{a}_{-s,\downarrow}^\ast\hat{a}_{s+k,\downarrow}^\ast, \hat{a}_{r+p,\uparrow}\hat{a}_{-r,\uparrow}\hat{a}_{r^\prime - p, \downarrow}\hat{a}_{-r^\prime, \downarrow}]
          \end{equation}
 for $r',s \in B_\downarrow$, $r,s' \in B_\uparrow$, $s+k,r'-p \not\in B_\downarrow$ and $s'-k,r+p \not \in B_\uparrow$. Under these conditions, the commutator is given, in normal order, by a sum of $15$ terms, 
	 \begin{align}\nonumber
          &\delta_{s^\prime, r} \hat{a}^\ast_{s^\prime-k, \uparrow}  \hat{a}^\ast_{-s,\downarrow} \hat{a}_{s+k,\downarrow}^\ast  \hat{a}_{r+p,\uparrow}\hat{a}_{r^\prime -p,\downarrow}\hat{a}_{-r^\prime, \downarrow}
          +\delta_{s,r^\prime} \hat{a}_{-s^\prime, \uparrow}^\ast \hat{a}^\ast_{s^\prime-k, \uparrow} \hat{a}_{s+k,\downarrow}^\ast \hat{a}_{r+p,\uparrow} \hat{a}_{-r,\uparrow}  \hat{a}_{r^\prime -p,\downarrow}
          \\
         &  +  \delta_{s^\prime - k, r+p} \hat{a}_{-s^\prime, \uparrow}^\ast  \hat{a}^\ast_{-s,\downarrow} \hat{a}^\ast_{s+k,\downarrow}
         \hat{a}_{-r,\uparrow}  \hat{a}_{r^\prime - p,\downarrow}    \hat{a}_{-r^\prime, \downarrow}
        +   \delta_{s+k,r^\prime -p}           \hat{a}_{-s^\prime, \uparrow}^\ast \hat{a}^\ast_{s^\prime - k, \uparrow}
         \hat{a}^\ast_{-s,\downarrow} \hat{a}_{r+p,\uparrow}  \hat{a}_{-r,\uparrow}\hat{a}_{-r^\prime, \downarrow}
        \nonumber     
          \\ &  - \delta_{s^\prime, r} \delta_{s,r^\prime}   
         \hat{a}^\ast_{s^\prime-k, \uparrow}\hat{a}_{s+k,\downarrow}^\ast \hat{a}_{r+p,\uparrow}\hat{a}_{r^\prime -p,\downarrow}
                   -   \delta_{s^\prime -k, r+p}\delta_{s+k,r^\prime - p}
        \hat{a}_{-s^\prime, \uparrow}^\ast  
         \hat{a}^\ast_{-s,\downarrow} \hat{a}_{-r,\uparrow} \hat{a}_{-r^\prime, \downarrow}
          \nonumber         
          \\
         &   - \delta_{s,r'}  \delta_{s^\prime - k, r+p}  \hat{a}_{-s^\prime, \uparrow}^\ast  \hat{a}^\ast_{s+k,\downarrow} 
        \hat{a}_{-r,\uparrow} \hat{a}_{r^\prime - p,\downarrow}  
           - \delta_{s',r}   \delta_{s+k,r^\prime -p}\hat{a}^\ast_{s^\prime - k, \uparrow}
            \hat{a}^\ast_{-s,\downarrow}\hat{a}_{r+p,\uparrow} \hat{a}_{-r^\prime, \downarrow}
        \nonumber
          \\
         & -  \delta_{s,r'} \delta_{-k,p} \hat{a}_{-s^\prime, \uparrow}^\ast \hat{a}^\ast_{s^\prime - k, \uparrow}
        \hat{a}_{r+p,\uparrow} \hat{a}_{-r,\uparrow} -   \delta_{s',r} \delta_{- k, p} \hat{a}^\ast_{-s,\downarrow} \hat{a}^\ast_{s+k,\downarrow} 
          \hat{a}_{r^\prime - p,\downarrow} 
           \hat{a}_{-r^\prime, \downarrow}
                \nonumber
\\
         &   +  \delta_{s,r'}\delta_{s',r}  \delta_{ - k, p} \left( \hat{a}^\ast_{s+k,\downarrow} \hat{a}_{s+k,\downarrow}  
               +  \hat{a}^\ast_{r+p, \uparrow}\hat{a}_{r+p,\uparrow}
          +         \hat{a}^\ast_{-s,\downarrow} \hat{a}_{-s, \downarrow}
          +         \hat{a}_{-r, \uparrow}^\ast  \hat{a}_{-r,\uparrow} - 1 \right) 
            \end{align}
	 and we will split \([\bQ_2^{\uparrow\downarrow},B]=\sum_{i=1}^{15}\mathrm{I}_i\) accordingly. The constant term in \eqref{eq:prop conj q2} is the last one,  with the remaining terms constituting \(\cE_{\bQ_2}\).

	The following bounds are analogous to similar ones in the three-dimensional case in  \cite[Props.~4.6 \& 5.5]{giacomelli2024huangyangformulalowdensityfermi}. 
	For a given term \(\mathrm I_j\), using \eqref{eq:b pos space}, we will denote by \(\mathrm I^t_j \) an expression such that \(\mathrm I_j =\int_0^\infty\dd t\, \ee^{-2t\varepsilon}\, \mathrm I^t_j \). The first term to consider is 
	\begin{align}\nonumber
	\mathrm I_1^t & = \frac 1{L^2} \sum_r \hat v_{t,\uparrow}(r) \int \dd x \dd y \dd z \dd z' \, \ee^{\ii r\cdot(x-z)} W(x-y) W(z-z')  \\ & \qquad\qquad \times a_\uparrow^*(u_x) a_\downarrow^*(u_y) a_\downarrow^*(v_y) a_\downarrow(u_{t,z'}) a_\downarrow(v_{t,z'}) a_\uparrow(u_{t,z}) .
	\end{align}
	Using that $0\leq \hat v_{t,\uparrow}(r) \leq \ee^{t (k_{\rm F}^\uparrow)^2}$ the Cauchy--Schwarz inequality yields the bound
	\begin{align}\nonumber
	|\langle\psi,\mathrm{I}^t_1\psi\rangle|&  \le \ \ee^{t (k_{\rm F}^\uparrow)^2}  \left( \int \dd x \left\| \int  \dd y \, W(x-y)   a_\downarrow(v_y) a_\downarrow(u_y)  a_\uparrow(u_x)\psi  \right\|^2 \right)^{\frac 12} \\&  \quad \times  \left( \int \dd z   \left\|   \int  \dd z' \, W(z-z')  a_\downarrow(u_{t,z'}) a_\downarrow(v_{t,z'}) a_\uparrow(u_{t,z}) \psi  \right\|^2 \right)^{\frac 12}  .
	\end{align}
	The first factor on the right-hand side can be bounded, again via Cauchy--Schwarz, as 
	\begin{align}
	 \int \dd x \left\| \int  \dd y \, W(x-y)   a_\downarrow(v_y) a_\downarrow(u_y)  a_\uparrow(u_x)\psi  \right\|^2 \leq \| v_\downarrow\|_2^2  \langle \psi , \bQ_4 \psi \rangle \int W  .
	 \end{align}
	In the second factor we can bound  
	\begin{align}\label{use:I1}
	  \left\|   \int  \dd z' \, W(z-z')  a_\downarrow(u_{t,z'}) a_\downarrow(v_{t,z'}) a_\uparrow(u_{t,z}) \psi \right\| \leq  \|u_{t,\downarrow}\|_2 \| v_{t,\downarrow}\|_2  \left\|   a_\uparrow(u_{t,z}) \psi \right\| \int W .
	\end{align}
	In combination with the second inequality in \eqref{eq:n i small}, this shows that 
	\begin{equation}
	|\langle\psi,\mathrm{I}^t_1\psi\rangle| \leq \|u_{t,\downarrow}\|_2 \| v_{t,\downarrow}\|_2 \| v_\downarrow\|_2 \|\bQ_4^{\frac{1}{2}}\psi\|  \|\cN^{\frac{1}{2}}\psi\| \left( \int W\right)^{3/2}.
	\end{equation}
	With the aid of \Cref{lem:integration-t-new} and \eqref{int:W}, this finally yields
	\begin{equation}\label{I1_final}
	|\langle\psi,\mathrm{I}_1\psi\rangle| \leq C \varrho^{\frac 12}  \|\bQ_4^{\frac{1}{2}}\psi\|  \|\cN^{\frac{1}{2}}\psi\| \delta^{\frac 3 2}\ln \delta^{-1} \,. 
	\end{equation}
	The term $\mathrm I_2$ can be bounded in the same way, simply exchanging $\uparrow$ and $\downarrow$.
	
	For $\mathrm I_3$, we similarly have
	\begin{align}\nonumber
	\mathrm I_3^t & = \frac 1{L^2} \sum_r \hat u_{t,\uparrow}(r) \int \dd x \dd y \dd z \dd z' \, \ee^{\ii r\cdot(x-z)} W(x-y) W(z-z') \\ & \qquad \qquad \ \times  a_\uparrow^*(v_x) a_\downarrow^*(u_y) a_\downarrow^*(v_y) a_\downarrow(u_{t,z'}) a_\downarrow(v_{t,z'}) a_\uparrow(v_{t,z}) .
	\end{align}
	Using that $0\leq \hat u_{t,\uparrow}(r) \leq \ee^{-t (k_{\rm F}^\uparrow)^2}$ and the Cauchy--Schwarz inequality, its expectation value can be bounded by 
	\begin{align}\nonumber
	|\langle\psi,\mathrm{I}^t_3\psi\rangle| &\le \ee^{-t (k_{\rm F}^\uparrow)^2}  \left( \int \dd x \left\| \int  \dd y \, W(x-y)  a_\downarrow(v_y) a_\downarrow(u_y) a_\uparrow(v_x) \psi \right\|^2 \right)^{\frac 12}\\ & \quad \times  \left( \int \dd z   \left\|   \int  \dd z' \, W(z-z')  a_\downarrow(u_{t,z'}) a_\downarrow(v_{t,z'}) a_\uparrow(v_{t,z}) \psi \right\|^2 \right)^{\frac 12}  .
	\end{align}
	The first factor can be bounded, again via Cauchy--Schwarz, as 
	\begin{equation}
	 \int \dd x \left\| \int  \dd y \, W(x-y)  a_\uparrow(v_x) a_\downarrow(u_y) a_\downarrow(v_y) \psi \right\|^2 \leq \left ( \int W \right)^2 \| v_\uparrow\|_2^2 \| v_\downarrow\|_2^2 \langle \psi, \cN \psi\rangle .
	 \end{equation}
	 For the second we can proceed as for $\mathrm I_1$ above, using \eqref{use:I1} with $v_{t,z}$ in place of $u_{t,z}$, as well as 
	 \begin{equation}\label{4.59}
	 \ee^{-2t (k_{\rm F}^\uparrow)^2}  \int dz  \left\|   a_\uparrow(v_{t,z}) \psi \right\|^2 \leq \langle \psi , \cN \psi\rangle.
	 \end{equation}
	Applying again \Cref{lem:integration-t-new}, we conclude that
	\begin{equation}\label{I3_final}
	|\langle\psi,\mathrm{I}_3\psi\rangle| \leq C \varrho \langle \psi, \cN \psi \rangle \delta^{2}\ln \delta^{-1}\,.  
	\end{equation}
	The term $\mathrm I_4$ is bounded in the same way,  exchanging $\uparrow$ and $\downarrow$.

	We proceed with $\mathrm I_5$, for which we have
	\begin{equation}
		\mathrm{I}^{t}_{5} \!=\! \int \dd x\dd y\dd z \dd z'\, W(x-y) W(z-z') v_{t,\downarrow}(y-z')v_{t,\uparrow}(x-z) a^*_\uparrow (u_x)a^*_\downarrow(u_y) a_\downarrow(u_{t,z'})a_\uparrow(u_{t,z}).
	\end{equation}
	Again we can use the  Cauchy--Schwarz inequality to bound it as  
	\begin{equation}\label{eq:i 5 t less}
		|\langle\psi,\mathrm{I}^t_{5}\psi\rangle|\le C \delta^{\frac{3}{2}} \varrho^{\frac{1}{2}}\|\bQ_4^{\frac{1}{2}}\psi\| \|\cN^{\frac{1}{2}}\psi\| \|u_{t,\uparrow}\|_2\|v_{t,\uparrow}\|_2,
	\end{equation}
	and hence \Cref{lem:integration-t-new} shows that also this term satisfies the bound \eqref{I1_final}. 
	
	The term $\mathrm I_6$ has a similar structure, but with interchanged roles of $u$ and $v$, i.e.,
	\begin{equation}
		\mathrm{I}^{t}_{6} \!=\!  \int \dd x\dd y\dd z \dd z'\, W(x-y) W(z-z') u_{t,\downarrow}(y-z')u_{t,\uparrow}(x-z)   a^*_\uparrow (v_x)a^*_\downarrow(v_y) a_\downarrow(v_{t,z'})a_\uparrow(v_{t,z}).
	\end{equation}
	We shall utilize two different bounds, depending on the magnitude of $t$. On the one hand, a simple Cauchy--Schwarz inequality shows that 
	\begin{equation}\label{eq:i 5 t less2}
		|\langle\psi,\mathrm{I}^t_{6}\psi\rangle|\le  C \min\{\varrho_\uparrow,\varrho_\downarrow\} \left( \int W\right)^2  \|u_{t,\uparrow}\|_2  \|u_{t,\downarrow}\|_2  
		\ee^{t (k_{\rm F}^\uparrow)^2+t (k_{\rm F}^\downarrow)^2} \langle \psi, \cN \psi\rangle .
	\end{equation}
	Let $k_{\rm F} = \max\{k_{\rm F}^\uparrow, k_{\rm F}^\downarrow\}$. We shall integrate \eqref{eq:i 5 t less2}  over $t \geq k_{\rm F}^{-2}$ using \Cref{lem:integration-t-new}. In fact, with a Cauchy--Schwarz inequality the bound \eqref{eq:lem:integration-t-1-new2} implies that 
	\begin{equation}
	\int_{k_{\rm F}^{-2}}^\infty \dd t \,  \|u_{t,\uparrow}\|_2  \|u_{t,\downarrow}\|_2 \ee^{t (k_{\rm F}^\uparrow)^2+t (k_{\rm F}^\downarrow)^2} \leq C \left( \frac{k_{\rm F}} { \min\{ k_{\rm F}^\uparrow,k_{\rm F}^\downarrow\}}\right)^{\frac 12}  \ln \delta^{-1} 
	\end{equation}
	which, after multiplication by $\min\{\varrho_\uparrow,\varrho_\downarrow\}$, is bounded by $\varrho \ln \delta^{-1}$. 
	On the other hand, 
	writing 
	\begin{align}\nonumber
		\mathrm{I}^{t}_{6}  = \frac 1{L^4} \sum_{r,r'} \hat u_{t,\uparrow}(r) \hat u_{t,\downarrow}(r') \int &\dd x\dd y\dd z \dd z'\, W(x-y) W(z-z') \ee^{\ii r \cdot (x-z)} \ee^{\ii r' \cdot (y-z')} \\ &   \times  a^*_\uparrow (v_x)a^*_\downarrow(v_y) a_\downarrow(v_{t,z'})a_\uparrow(v_{t,z})
	\end{align}
	we can bound $0\leq \hat u_{t,\sigma}(k) \leq \ee^{-t |k|^2}$ and thus 
	\begin{align}\nonumber
		|\langle\psi,\mathrm{I}^t_{6}\psi\rangle|  \le  \frac 1{L^4} \sum_{r,r'}& \ee^{-t (|r|^2 + |r'|^2)} \left\|  \int \dd x\dd y\, W(x-y) \ee^{-\ii r \cdot x} \ee^{-\ii r' \cdot y}  a_\downarrow(v_y) a_\uparrow (v_x) \psi \right\| 
		\\ \quad &\times \left\|  \int \dd z \dd z'\,  W(z-z') \ee^{-\ii r \cdot z} \ee^{-\ii r' \cdot z'}   a_\downarrow(v_{t,z'})a_\uparrow(v_{t,z})\psi \right\| .
	\end{align}
	The function $g_t$ with Fourier coefficients $\hat g_t(k) = \ee^{-t |k|^2}$ is the periodization of a Gaussian, hence non-negative and integrates to $1$, independently of $t$. 
	Using the Cauchy--Schwarz inequality for the sum over $r$ and $r'$, it suffices  to bound 
	\begin{align}\nonumber
		&   \frac 1{L^4} \sum_{r,r'} \ee^{-t (|r|^2 + |r'|^2)} \left\|  \int \dd x\dd y\, W(x-y) \ee^{-\ii r \cdot x} \ee^{-\ii r' \cdot y}  a_\downarrow(v_y) a_\uparrow (v_x) \psi \right\|^2 
		\\ \nonumber &=  \int \dd x\dd y \dd x' \dd y' \, W(x-y)W(x'-y') g_t(x-x') g_t(y-y') \left\langle \psi , a^*_\uparrow(v_{x'}) a^*_\downarrow(v_{y'}) a_\downarrow(v_y) a_\uparrow (v_x) \psi \right\rangle
		\\ & \leq  \|v_\downarrow\|_2^2  \int \dd x\dd y \dd x' \dd y' \, W(x-y)W(x'-y') g_t(x-x') g_t(y-y')   \| a_\uparrow(v_{x'})\psi\|  \| a_\uparrow (v_x) \psi \| \nonumber
		\\ & \leq  \|v_\downarrow\|_2^2   \| W* g_t \|_2^2 \langle \psi, \cN \psi\rangle
	\end{align}
	where the last step follows from $2 \| a_\uparrow(v_{x'})\psi\|  \| a_\uparrow (v_x) \psi \|  \leq   \| a_\uparrow(v_{x'})\psi\|^2 +  \| a_\uparrow(v_{x'})\psi\|^2$ and \eqref{4.59}.   An analogous bound holds with $v_t$ in place of $v$, and we conclude that
	\begin{align}\nonumber
	|\langle\psi,\mathrm{I}^t_{6}\psi\rangle| & \le 
		C \varrho\, \ee^{ t (k_{\rm F}^\downarrow)^2} \ee^{ t (k_{\rm F}^\uparrow)^2} \| W* g_t \|_2^2 \langle \psi, \cN \psi\rangle  \\ & = C \varrho\,  \langle \psi, \cN \psi\rangle \frac 1{L^2} \sum_k | \hat W(k)|^2 \ee^{-2t |k|^2} \ee^{ t (k_{\rm F}^\downarrow)^2} \ee^{ t (k_{\rm F}^\uparrow)^2} .
	\end{align}
	For the integration over $0< t <  k_{\rm F}^{-2}$ we shall use that 
	\begin{equation}\label{4.69}
	\int_0^{k_{\rm F}^{-2}} \dd t \, \ee^{-t (2 |k|^2 + 2\varepsilon - (k_{\rm F}^\downarrow)^2- (k_{\rm F}^\uparrow)^2) } 
	\leq \frac {C}{|k|^2 + k_{\rm F}^2}
	\end{equation}
	which can be obtained from an explicit calculation, using $\varepsilon>0$ and $k_{\rm F}^\sigma \leq k_{\rm F}$. 
	With $|\hat W(k)| \leq C \delta$ and \eqref{need:W}, we thus conclude that $\rm I_6$ satisfies the same bound as $\rm I_3$ in \eqref{I3_final}.  
	
	For the term $\mathrm I_7$, we find
	\begin{equation}
		\mathrm{I}^{t}_{7}= \int \dd x\dd y\dd z \dd z'\, W(x-y) W(z-z') v_{t,\downarrow}(y;z')u_{t,\uparrow}(x;z) a^*_\uparrow (v_x)a^*_\downarrow(u_y) a_\downarrow(u_{t,z'})a_\uparrow(v_{t,z}).
	\end{equation}
	It can be bounded as 
	\begin{equation}\label{eq:i 7 t less}
		|\langle\psi,\mathrm{I}^t_{7}\psi\rangle|\le  \left( \int W \right)^2   \|u_{t,\uparrow}\|_2 \|v_{t,\downarrow}\|_2  \| v_\uparrow\|_2^2  \langle \psi , \cN \psi\rangle 
	\end{equation}
	and hence satisfies \eqref{I3_final} after integration over $t$. The same applies to $\mathrm I_8$. 
	The term  $\mathrm I_9$ has a somewhat different structure, and can be written as
	\begin{equation}
		\mathrm{I}^{t}_{9}= \int \dd x\dd y\dd z \dd z'\, W(x-y) W(z-z') v_{t,\downarrow}(y;z')u_{t,\uparrow}(y;z') a^*_\uparrow (v_x)a^*_\uparrow(u_x) a_\uparrow(u_{t,z})a_\uparrow(v_{t,z}).
	\end{equation}
	With Cauchy--Schwarz, it can bounded in exactly the same was as \eqref{eq:i 7 t less}, however. 
	The same bound applies to $\mathrm I_{10}$, interchanging $\uparrow$ and $\downarrow$. 
	
	We are left with studying the quadratic terms which, by translation-invariance, are necessarily of the form $d\Gamma(\omega)$ for a  (spin-dependent) Fourier multiplier $\omega$. Explicitly, 
	\begin{equation}
	\mathrm I_{11} + \mathrm I_{12} + \mathrm I_{13} +\mathrm  I_{14} = \sum_{k,\sigma} \omega_\sigma(k)  \hat a_{k,\sigma}^* \hat a_{k,\sigma} 
	\end{equation}
	with 
	\begin{align}\nonumber
	\omega_\sigma(k) & = \frac 1{L^4} \sum_{p,r}  \hat W(p)^2 \left(  \frac{   \hat v_{\sigma'}(r) \hat u_{\sigma'}(r-p)  \hat v_\sigma(k- p) \hat u_\sigma(k) }{ |r-p|^2 - |r|^2 + |k|^2 - |k-p|^2+2 \varepsilon}  \right. \\ & \qquad \quad \left.  
	\qquad\qquad\quad	
	+ 
	 \frac{  \hat v_{\sigma'}(r) \hat u_{\sigma'}(r-p)  \hat u_\sigma(k+ p) \hat v_\sigma(k)  }{ |r-p|^2 - |r|^2 + |k+p|^2 - |k|^2 + 2 \varepsilon} \right)
	\end{align}
	(where $\sigma' \neq \sigma$). They can thus be bounded by $\cN\max_{k,\sigma} \omega_{\sigma}(k) $. Let us show that $\max_{k,\sigma} \omega_{\sigma}(k) \leq C \varrho \delta^2 \ln \delta^{-1}$. We can drop the positive terms $|k|^2 - |k-p|^2$ and $|k+p|^2 - |k|^2$ in the denominators, yielding
	\begin{equation}
	\omega_\sigma(k) \leq \frac 1{L^4} \sum_{p,r}  \hat W(p)^2  \frac{   \hat v_{\sigma'}(r) \hat u_{\sigma'}(r-p)  }{ |r-p|^2 - |r|^2 +2 \varepsilon}.
	\end{equation}
	By proceeding similarly  as in the proof of \Cref{lem:integration-t-new}, bounding the sum by the corresponding integral, and distinguishing between the cases $|p| \gtrsim k_{\rm F}^{\sigma'}$ and  $|p| \lesssim k_{\rm F}^{\sigma'}$,  one readily checks that 
	\begin{equation}
	 \frac 1{L^2} \sum_{r}  \frac{   \hat v_{\sigma'}(r) \hat u_{\sigma'}(r-p)  }{ |r-p|^2 - |r|^2 +2 \varepsilon} \leq  C \min\left\{ \frac {(k_{\rm F}^{\sigma'})^2}{|p|^2 + (k_{\rm F}^{\sigma'})^2} , \ln \delta^{-1}\right\}.
	\end{equation}
	In combination with \eqref{int:W} and \eqref{need:W}, this implies the result.
	
	The last term, $\mathrm I_{15}$, gives the constant first term on the right-hand side of \eqref{eq:prop conj q2}. This completes the proof. 
\end{proof}

\subsection{Conjugation of \(\bQ_4\)}\label{sec:q4}
In this subsection, we show that \(\bQ_4\) does not contribute to the energy to the relevant order of \(\delta^2\varrho^2\). We will follow analogous bounds in the three-dimensional case in \cite[Props.~4.5 \& 5.3]{giacomelli2024huangyangformulalowdensityfermi}, but again there will be notable differences. For later use, we shall formulate the next proposition for more general operators that have the same structure as $\bQ_4$,  but $W$ is replaced by a general function $w$. 

\begin{Prop}
\label[Prop]{prop:conj-q4}
Let \(\bQ_4|_w\) denote the same operator as \(\bQ_4\) in \eqref{eq: def H-corr}, but with \(W_\infty\) replaced by an even and integrable function \(w : \bR^2\to \bR_+\) with support contained inside \(\Lambda\). Then, for any \(\lambda\in[0,1]\),  
	\begin{equation}\label{eq:bound eq4}
		|\langle T_\lambda \Omega,{\bQ_4|_w}T_\lambda\Omega\rangle|\le 
		CL^2 \left( \int w \right) \varrho^2 \left( \delta \ln \delta^{-1} \right)^2\,.
	\end{equation}
\end{Prop}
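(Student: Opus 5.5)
We plan to prove \eqref{eq:bound eq4} by a Gr\"onwall argument in $\lambda$, as in the proof of \Cref{prop:bound-n}. Since $\bQ_4|_w$ contains only $u$-type operators, it annihilates $\Omega$, so $\langle\Omega,\bQ_4|_w\Omega\rangle=0$. Moreover,
\begin{equation*}
\partial_\lambda\langle T_\lambda\Omega,\bQ_4|_w T_\lambda\Omega\rangle=2\,\mathrm{Re}\,\langle T_\lambda\Omega,[\bQ_4|_w,B^*]T_\lambda\Omega\rangle .
\end{equation*}
We write $B^*$ in the configuration space representation \eqref{eq:b pos space} and compute the commutator. Only the $u$-operators of $B^*$ fail to anticommute with the annihilators in $\bQ_4|_w$.

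The commutator has two kinds of terms. The first kind has a single contraction: one $a_\sigma(u_x)$ from $\bQ_4|_w$ is contracted with $a^*_\sigma(u_{t,y})$, which produces a kernel $u_{t,\sigma}*u_\sigma=u_{t,\sigma}$. The remaining operator keeps one $w$-interaction among $u$-particles. The second kind has a double contraction, where both $u$-annihilators of $\bQ_4|_w$ (one of each spin, or equal spins) are contracted with the two $u$-creators of $B^*$.

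For single-contraction terms we use Cauchy--Schwarz as for $\mathrm I_1$ in \Cref{prop:conj-bq2}. One factor is $\langle\psi,\bQ_4|_w\psi\rangle^{1/2}(\int w)^{1/2}$, with the $w$-weight split between the two sides. The other factor contains the $B^*$ creation operators $a^*(v_{t,\cdot})$ and $a^*(u_{t,\cdot})$ together with the integral of $W$. Using $\|a(f)\|=\|f\|_2$ and \eqref{eq:n i small}, we expect a bound of the form
\begin{equation*}
C\|(\bQ_4|_w)^{1/2}\psi\|\Big(\int w\Big)^{1/2}\|\cN^{1/2}\psi\|\,\varrho^{1/2}\,\delta\ln\delta^{-1}
\end{equation*}
after the $t$-integration with \Cref{lem:integration-t-new}.

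For the double-contraction term, the remaining operator is $a^*(v)a^*(v)$ times a kernel $\int dx\,dy\,w(x-y)W(\cdot)\,u_{t}(x-\cdot)u_t(y-\cdot)$. This is the dangerous term, because no $\bQ_4$ factor survives to absorb it. We plan to bound it by $(\int w)$ times the norm of $\sum_p \hat W(p)\hat\eta$-type quantities acting on $v$-pairs. Concretely, we will show that
\begin{equation*}
\|\text{term}\,\psi\|\le C\Big(\int w\Big)L\varrho\,\delta\ln\delta^{-1}\cdot\sup|\hat u|\,\|\cN^{1/2}\psi\|\,\varrho^{-1/2}-\text{type}.
\end{equation*}
To get this, we bound $|\hat w|\le\int w$ and control the remaining momentum sum over $u$-momenta by \eqref{eq:lem:integration-t-1-new}. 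We expect this to give $|\cdot|\le C(\int w)L\varrho^{3/2}\delta\ln\delta^{-1}\|\cN^{1/2}\psi\|$.

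We then combine the estimates with \Cref{prop:bound-n}, which gives $\|\cN^{1/2}T_\lambda\Omega\|\le CL\varrho^{1/2}\delta\ln\delta^{-1}$. Writing $q(\lambda)=\langle T_\lambda\Omega,\bQ_4|_wT_\lambda\Omega\rangle$, this yields
\begin{equation*}
q'\le C q^{1/2}\Big(\int w\Big)^{1/2}L\varrho\,\delta\ln\delta^{-1}+C\Big(\int w\Big)L^2\varrho^2(\delta\ln\delta^{-1})^2 .
\end{equation*}
Integrating gives $q\le C(\int w)L^2\varrho^2(\delta\ln\delta^{-1})^2$.

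The main obstacle is the double-contraction term. It must be estimated without losing $\ln\delta^{-1}$ factors beyond two, and without using $\int W$ where only $\hat W$ at high momenta is available. If the crude bound fails, we will use the cutoff in $t$ as for $\mathrm I_6$: split at $t=k_{\rm F}^{-2}$ and invoke \eqref{need:W} for small $t$.
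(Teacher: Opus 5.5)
Your Gr\"onwall scheme and the split of the commutator into single and double contractions are the right skeleton, but neither of your two key estimates goes through as described.

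\emph{Single contractions.} The analogy with $\mathrm I_1$ breaks down. In $\mathrm I_1$ the contraction kernel $v_{t,\uparrow}(x-z)$ links a vector depending only on $x$ to one depending only on $z$, so its Fourier-multiplier norm suffices. In the single-contraction terms of $[\bQ_4|_w,B]$ the situation is different: the contracted variable $z$ is tied by $w(x-z)$ to a spectator $x$ that sits on both sides of the inner product ($a^*_\sigma(u_x)$ on one side, $a_\sigma(u_x)$ on the other). After Cauchy--Schwarz you therefore need a uniform operator bound on $\int\dd z'\,K_t(z-z')\,a_\uparrow(v_{t,z'})\mathcal B_{z'}$, where $\mathcal B_{z'}=\int\dd y\,W(y-z')a_\downarrow(u_{t,y})a_\downarrow(v_{t,y})$. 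That is, you need a bound on $\|K_t\|_1$; the multiplier trick would produce $w^2$ or $\|w\|_\infty$ instead of $\int w$.

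With your kernel $K_t=u_{t,\uparrow}$ such a bound is not available. $\hat u_{t,\uparrow}$ jumps at the Fermi surface, so in two dimensions $u_{t,\uparrow}(x)$ decays only like $|x|^{-3/2}$, and $\|u_{t,\uparrow}\|_1$ grows like $(k_{\rm F}L)^{1/2}$. Moreover, the leftover factor $\|v_{t,\uparrow}\|_2\sim\ee^{t(k_{\rm F}^\uparrow)^2}$ is not absorbed by \Cref{lem:integration-t-new}. You also cannot use \eqref{eq:n i small}, because both copies of $\psi$ are already spent on $\bQ_4^{1/2}$ and $\cN^{1/2}$.

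The missing idea is the paper's smooth momentum cutoff:
\begin{enumerate}
\item Split $u_\sigma=u_\sigma^<+u_\sigma^>$ via \eqref{def:zeta}.
\item Bound all terms containing $u^<$ by $\bQ_4^>|_w+C\varrho\cN\int w$, using $\|u^<_\sigma\|_2\le C\varrho^{1/2}$ and \Cref{prop:bound-n}.
\item Run Gr\"onwall only for $\bQ_4^>|_w$.
\end{enumerate}
The contraction kernel is then $\zeta^>_t$, with $\|\zeta^>_t\|_1\le C$ uniformly in $t$. In addition, $|p+s|\ge 5k_{\rm F}$ forces the other $u$-momentum of $B$ to satisfy $|p-r|\ge 3k_{\rm F}$. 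Hence $u_{t,\downarrow}$ can be replaced by $\eta_t$, and \eqref{inte4} absorbs the factor $\ee^{tk_{\rm F}^2}$ without any logarithm.

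\emph{Double contraction.} You have misidentified the operator. Contracting the two annihilators of $\bQ_4|_w$ with the two $u$-creators of $B^*$ leaves the two $u$-creators of $\bQ_4|_w$, together with their weight $w(x-y)$. The remaining operator is of the form $a^*(u)a^*(u)a^*(v_t)a^*(v_t)$, not $a^*(v_t)a^*(v_t)$, so a $\bQ_4$ factor does survive, and it is indispensable.

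In fact, the bound you aim for, $C(\int w)L\varrho^{3/2}\delta\ln\delta^{-1}\|\cN^{1/2}\psi\|$, is false for general integrable $w$. Take $\psi=\alpha\Omega+\beta D\Omega/\|D\Omega\|$, with $D$ the double-contraction operator. Your bound would then force $\|D\Omega\|\le C(\int w)L\varrho^{3/2}\delta\ln\delta^{-1}$. But for $w$ supported near the origin, $\|D\Omega\|^2$ is roughly of order $L^2\varrho^2(\delta\ln\delta^{-1})^2\int w^2$. This is not controlled by $\int w$ and is much larger for concentrated $w$ such as $1-f^2$. Correspondingly, bounding $|\hat w(k)|\le\int w$ leaves uncontrolled the sum over the momentum transfer $k$, which enters only through the two leftover creators.

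What works is to pair $a(u^>_z)a(u^>_{z'})\psi$ against the $v$-pair. This gives
\[
\Big(\int w\Big)^{1/2}L\,\|(\bQ_4^>|_w)^{1/2}\psi\|\,\|v_{t,\uparrow}\|_2\|v_{t,\downarrow}\|_2\,\sup_{z,z'}\int W(x-y)\,|\zeta^>_t(x-z)|\,|\zeta^>_t(y-z')|.
\]
The last factor is bounded in two regimes:
\begin{enumerate}
\item for $t\ge k_{\rm F}^{-2}$, by $(\int W)\|\zeta^>_t\|_2^2$;
\item for $t<k_{\rm F}^{-2}$, by writing $\zeta^>_t=g_{t-s}*\zeta^>_s$ and using $\|\zeta^>_s\|_1\le C$, by $CL^{-2}\sum_p|\hat W(p)|\ee^{-t|p|^2}$, which is then integrated using \eqref{4.69} and \eqref{need:W}.
\end{enumerate}
Together with the single contractions, this yields the homogeneous inequality $q'\le Cq^{1/2}(\int w)^{1/2}L\varrho\,\delta\ln\delta^{-1}$ for $q=\langle T_\lambda\Omega,\bQ_4^>|_wT_\lambda\Omega\rangle$. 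This integrates to \eqref{eq:bound eq4}, with no inhomogeneous term needed.
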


\begin{proof}
As in \cite[Prop.~5.3]{giacomelli2024huangyangformulalowdensityfermi}, the first step will be to replace the functions $\hat u_\sigma$ in the definition of $\mathbb{Q}_4|_w$ 
by smooth functions with smaller support. Let again $k_{\rm F}= \max\{k_{\rm F}^\uparrow,k_{\rm F}^\downarrow\}$, and let $\hat\zeta^\gtrless: \mathbb{R}^2 \to [0,1]$ be smooth functions with $\hat \zeta^< + \hat \zeta^> = 1$ such that
\begin{equation}\label{def:zeta}
\hat \zeta^<(k) = \begin{cases} 1 & \text{if $|k| \leq 5 k_{\rm F}$} \\ 0 & \text{if $|k| \geq 6 k_{\rm F}$} \end{cases} 
\end{equation}
Let $\zeta^\gtrless$ be the corresponding functions on $\Lambda$ with Fourier coefficients $\hat \zeta^\gtrless$, and also $u_\sigma^\gtrless$ with Fourier coefficients $\hat u_\sigma^\gtrless(k) = \hat u_\sigma(k) \zeta^\gtrless(k)$. Note that $u_\sigma^> = \zeta^>$. Since $u_\sigma = u_\sigma^> + u_\sigma^<$, we can split
\begin{equation}
\mathbb{Q}_4|_w = \mathbb{Q}^>_4|_w + \mathbb{Q}^<_4|_w
\end{equation}
with 
\begin{equation}\label{4.80}
\mathbb{Q}^>_4|_w   = \frac{1}{2}\sum_{\sigma,\sigma^\prime}\int_{\Lambda^2} \dd x \dd y \, w(x-y)  {a}^\ast_\sigma(u^>_x){a}^\ast_{\sigma^\prime}(u^>_y){a}_{\sigma^\prime}(u^>_y){a}_\sigma(u^>_x)  
\end{equation}
and $\mathbb{Q}^<_4|_w$ containing all the remaining terms. Using $\|u_\sigma^<\|_2 \leq C \varrho^{\frac 12}$, the Cauchy--Schwarz inequality implies that 
\begin{equation}\label{4.62}
\mathbb{Q}^<_4|_w \leq \mathbb{Q}^>_4|_w + C \varrho \cN  \int w  .
\end{equation}
The second term on the right-hand side can be bounded with the aid of \Cref{prop:bound-n}, and we are hence left with bounding $\mathbb{Q}^>_4|_w$. 

We shall do this via Gr\"onwall's lemma, and start by computing the derivative, 
\begin{equation}
	     \partial_\lambda  T^\ast_{\lambda} \mathbb{Q}^>_4|_w T_{\lambda} 
		        =  T^\ast_{\lambda}[\mathbb{Q}^>_4|_w, B] T_{\lambda} + \mathrm{h.c.}
\end{equation}
By \Cref{def:bosonic-transformations} and \eqref{4.80},  we thus need to compute the commutator
\begin{equation}
[\hat{a}^\ast_{s+k,\sigma}\hat{a}^\ast_{s^\prime - k,\sigma^\prime}\hat{a}_{s^\prime, \sigma^\prime}\hat{a}_{s,\sigma},
                \hat{a}_{p+r,\uparrow}\hat{a}_{-r,\uparrow} \hat{a}_{-p+r',\downarrow}\hat{a}_{-r',\downarrow}  ] 
\end{equation}
in case $s,s+k \not\in B_\sigma$, $s', s'-k\not\in B_{\sigma'}$, $p+r \not \in B_\uparrow$, $r\in B_\uparrow$, $r'-p\not\in B_\downarrow$, $r'\in B_\downarrow$. After normal ordering, we obtain $6$ terms, given by
    \begin{align}\nonumber
         &  - \delta_{s +k, r+p}\delta_{\sigma, \uparrow}\hat{a}^\ast_{s^\prime -k,\sigma^\prime}\hat{a}_{r^\prime - p,\downarrow}\hat{a}_{-r^\prime, \downarrow}\hat{a}_{-r,\uparrow}\hat{a}_{s^\prime, \sigma^\prime}\hat{a}_{s,\sigma}
         +\delta_{s^\prime - k, r+p}\delta_{\sigma^\prime,\uparrow}\hat{a}_{s+k,\sigma}^\ast\hat{a}_{r^\prime - p, \downarrow}\hat{a}_{-r^\prime, \downarrow}\hat{a}_{-r,\uparrow}\hat{a}_{s^\prime, \sigma^\prime}\hat{a}_{s,\sigma} 
         \\ \nonumber & - \delta_{s^\prime -k, r^\prime -p}
       \delta_{\sigma^\prime, \downarrow}\hat{a}^\ast_{s+k,\sigma}\hat{a}_{r+p,\uparrow}\hat{a}_{-r^\prime, \downarrow}\hat{a}_{-r,\uparrow}\hat{a}_{s^\prime, \sigma^\prime}\hat{a}_{s,\sigma}
         + \delta_{s+k,r^\prime - p}\delta_{\sigma,\downarrow}\hat{a}^\ast_{s^\prime - k, \sigma^\prime}\hat{a}_{r+p,\uparrow}\hat{a}_{-r^\prime, \downarrow}\hat{a}_{-r,\uparrow}\hat{a}_{s^\prime, \sigma^\prime}\hat{a}_{s,\sigma} 
         \\
         &+\delta_{s+k,r+p}\delta_{s^\prime - k, r^\prime -p}\delta_{\sigma,\uparrow}\delta_{\sigma^\prime,\downarrow}\hat{a}_{-r^\prime, \downarrow}\hat{a}_{-r,\uparrow}\hat{a}_{s^\prime, \sigma^\prime}\hat{a}_{s,\sigma}
         - \delta_{s+k,r^\prime - p}\delta_{s^\prime -k,r+p}\delta_{\sigma,\downarrow}\delta_{\sigma^\prime, \uparrow}\hat{a}_{-r^\prime, \downarrow}\hat{a}_{-r,\uparrow}\hat{a}_{s^\prime, \sigma^\prime}\hat{a}_{s,\sigma} 
    \end{align}
	Exploiting the reflection symmetry of \(\hat w(k)\) and \(\hat \eta^\varepsilon_{r,r'}(p)\), we can combine 
	the first two terms,  the third and the forth as well as the last two
	by a suitable change of variables. 
	Accordingly, we shall write 
	\begin{equation}
		[\mathbb{Q}^>_4|_w, B]= \sum_{i=1}^3 \mathrm{J}_i \,.
	\end{equation}
	Similarly as in the proof of \Cref{prop:conj-bq2}, we shall denote by \(\mathrm J^t_j \) an expression such that \(\mathrm {J}_j =\int_0^\infty\dd t\, \ee^{-2t\varepsilon}\,\mathrm{J}^t_j \). 
	
	The first term to consider is 
	\begin{align}\nonumber
	\mathrm J_1^t & = \sum_\sigma \int \dd x \dd y \dd z \dd z' \, \zeta_{t}^>(z-z') w(x-z) W(y-z')  \\ & \qquad\qquad \times a_\sigma^*(u^>_x) a_\downarrow(u_{t,y}) a_\downarrow(v_{t,y}) a_\uparrow(v_{t,z'})  a_\uparrow(u^>_{z}) a_\sigma(u^>_{x})
	\end{align}
	where we introduced the function $\zeta^>_t$ with Fourier coefficients $\hat \zeta^>(k) \ee^{-t |k|^2}$. 
	We in fact have 
	\begin{equation}
	\mathrm J_1 =  \sum_\sigma \int \dd x  \dd z \,  w(x-z)  a_\sigma^*(u^>_x) \mathcal{O}_z   a_\uparrow(u^>_{z}) a_\sigma(u^>_{x})
	\end{equation}
	with
	\begin{align}
	\mathcal{O}_z &
	=  \frac 1{L^3} \sum_{p,r,s} \hat W(p) \frac{ \hat \zeta^>(p+s) \hat u_{\downarrow}(p-r) \hat v_{\downarrow}(r) \hat v_{\uparrow}(s)}{ |p-r|^2 + |p+s|^2 - |r|^2 - |s|^2 + 2 \varepsilon}  \hat  a_{p-r,\downarrow} \hat a_{r,\downarrow} \hat a_{s,\uparrow}  \ee^{\ii z \cdot(s+p)}.
	\end{align}
	Since $|p+s|\geq 5 k_{\rm F}$ for all the summands, and also $|r|\leq k_{\rm F}^\downarrow$ and $|s| \leq k_{\rm F}^\uparrow$, we have $|p-r| \geq 3 k_{\rm F}$, and hence we can replace $\hat u_\downarrow(p-r)$ by the characteristic function of $|p-r| \geq 3 k_{\rm F}$, which we shall denote by $\hat \eta$. For $t>0$, let us introduce 
	\begin{equation}\label{def:eta}
	\hat \eta_t (k ) = \begin{cases} \ee^{-t |k|^2} & \text{for $|k| \geq 3 k_{\rm F}$} \\ 0 & \text{for $ |k| < 3 k_{\rm F}$} \end{cases}
	\end{equation}
	 The conclusion is thus that we can equivalently write
	\begin{align}\nonumber
	\mathrm J_1^t & =  \sum_\sigma\int \dd x \dd y \dd z \dd z' \, \zeta_{t}^>(z-z') w(x-z) W(y-z') \\ & 
	\qquad\qquad \times a_\sigma^*(u^>_x) a_\downarrow(\eta_{t,y}) a_\downarrow(v_{t,y}) a_\uparrow(v_{t,z'})  a_\uparrow(u^>_{z}) a_\sigma(u^>_{x})
	\end{align}
	and hence
	\begin{equation}
	|\langle \psi, \mathrm J_1^t \psi \rangle|  \le \left( \int w\right)^{\frac 12}\left( \int W\right) \| \zeta_t^>\|_1 \|\eta_t\|_2 \| v_{t,\downarrow}\|_2 \| v_{t,\uparrow}\|_2  \|\bQ^>_4|_w^{\frac{1}{2}}\psi\| \|\cN^{\frac{1}{2}}\psi\| 
	\end{equation}
	by Cauchy--Schwarz. We have \(\|\zeta^>_t\|_1 \leq C\) uniformly in $t>0$, as shown in \cite[Lemma~A.6]{giacomelli2024huangyangformulalowdensityfermi}. Moreover, by proceeding as in the proof of \Cref{lem:integration-t-new}, one readily shows that 
\begin{equation}\label{inte4}
\int_0^\infty \dd t\,  \ee^{-2t\varepsilon}	\|\eta_t\|_2 \| v_{t,\sigma}\|_2 \ee^{t k_{\rm F}^2} \leq C  .
\end{equation}	
(Using $\|\eta_t\|_2 \leq \|u_{t,\sigma}\|_2$ an application of \eqref{eq:lem:integration-t-1-new} would lead to an additional factor $\ln \delta^{-1}$ on the right-hand side; this results from momenta close to the Fermi sphere, hence does not appear here since $\hat\eta$ is supported well outside the Fermi balls, however.) 
This leads to the bound  
\begin{equation}
	|\langle \psi, \mathrm J_1 \psi \rangle|  \le  C \varrho^{\frac 12} \left( \int w\right)^{\frac 12}\left( \int W\right) \|\bQ_4^>|_w^{\frac{1}{2}}\psi\| \|\cN^{\frac{1}{2}}\psi\|  .
	\end{equation} 
	The term $\mathrm J_2$ is actually the same, only with spins reversed. 

We are left with $\mathrm J_3$, for which we have 
\begin{equation}
\mathrm J_3^t \! =\! \int \dd x \dd y \dd z \dd z' \, W(x-y) w(z - z') \zeta^>_t (x-z) \zeta^>_t (y-z') a_\downarrow(v_{t,x}) a_\uparrow(v_{t,y}) a_\downarrow(u^>_z) a_\uparrow(u^>_{z'}) .
\end{equation} 
With Cauchy--Schwarz, we can bound it as
\begin{align}\nonumber
	|\langle \psi, \mathrm J_3^t \psi \rangle| & \le \left( \int w\right)^{\frac 12} L  \| v_{t,\downarrow}\|_2 \| v_{t,\uparrow}\|_2  \|\bQ^>_4|_w^{\frac{1}{2}}\psi\| \\ & \quad \times \max_{z,z'} \int  \dd x\dd y\, W(x-y) |\zeta_t^>(x-z)| | \zeta_t^>(y-z')|  \,. \label{J3t}
\end{align}
For the last factor, we shall give two different bounds, useful for large and small $t$, respectively. On the one hand, we can simply bound it is
\begin{equation}\label{4.76}
 \int  \dd x\dd y\, W(x-y) |\zeta_t^>(x-z)| | \zeta_t^>(y-z')|  \le \left( \int W \right) \| \zeta^>_t \|_2^2
\end{equation}
and the resulting expression can be integrated over $t \geq k_{\rm F}^{-2}$. In fact, by proceeding as in the proof of \Cref{lem:integration-t-new} one finds 
\begin{equation}\label{inte5}
\int_{k_{\rm F}^{-2}}^\infty \dd t\,  \ee^{-2t\varepsilon} \|\zeta^>_t\|_2^2 \ee^{2 t k_{\rm F}^2} \leq C .
\end{equation}	
On the other hand, we can write $\zeta_t^> = g_{t-s} * \zeta_{s}^> $ where $g_t$ denotes the function with Fourier coefficients $\hat g_t(k) = \ee^{-t |k|^2}$, as already introduced in the proof of \Cref{prop:conj-bq2}.  
Using that $g_t \geq 0$ and $\| \zeta^>_s\|_1 \leq C$ uniformly in $s>0$, we can bound 
\begin{equation}\label{4.77}
\int  \dd x\dd y\, W(x-y) |\zeta_t^>(x-z)| | \zeta_t^>(y-z') | \leq  \frac {C^2 } {L^2} \sum_p \hat W(p) \ee^{-2 t |p|^2} .
\end{equation}
The right-hand side can now be integrated over $0< t < k_{\rm F}^{-2}$. 
With \eqref{4.69} and \eqref{need:W}, this leads to the conclusion that
\begin{equation}
|\langle \psi, \mathrm J_3 \psi \rangle|  \le \left(\int w\right)^{\frac 12} L \varrho \| \bQ_4^> |_w^{\frac 12} \psi \|  \delta  \ln \delta^{-1}.
\end{equation}

By combining the estimates above with Gr\"onwall's lemma, \eqref{4.62} and \Cref{prop:bound-n}, this completes the proof.
\end{proof}

\subsection{Conclusion of \Cref{thm:bound small box}}\label{sec: conclusion of bound on small box}

We now collect all the bounds in the previous subsections and proceed with the proof of \Cref{thm:bound small box}. Recall the choice of the trial state in \eqref{def:Phi}. With \Cref{prop: fermionic transf}, \eqref{int:W}, \eqref{eq:err simpl corr}, \eqref{q320} as well as \Cref{prop:bound-n,prop:conj-q4}, 
we have
\begin{align}\nonumber
L^{-2} \langle \Phi, \cH_W \Phi\rangle & = 2\pi(\varrho_\uparrow^2+\varrho_\downarrow^2)+\hat{W}(0)\varrho_\uparrow\varrho_\downarrow + L^{-2} \left\langle T \Omega , \left( \bH_0 + \bQ^{\uparrow\downarrow}_2 \right) T \Omega \right\rangle 
\\
& \quad + O\left(\delta b^2\varrho^{3}\right)+O\left( {\varrho^{\frac{3}{2}}}{L^{-1}} \right) 
+ O \left( \varrho^2  \delta^3  \left(\ln \delta^{-1} \right)^2  \right)  \label{concl}.
\end{align}
For the conjugation of \(\bH_0\) and \(\bQ_2^{\uparrow\downarrow}\) with \(T\), we can apply  \Cref{prop:conj-h-0,prop:conj-bq2}, which imply that
	\begin{align}\nonumber
		\langle T\Omega,(\bH_0+\bQ_2^{\uparrow\downarrow})T\Omega\rangle 
		=&\int_0^1\dd \lambda \, \lambda\partial_\lambda \langle T_\lambda\Omega,\bQ_2^{\uparrow\downarrow}T_\lambda\Omega\rangle 
		+\int_0^1\dd \lambda\, \langle T_\lambda\Omega,\cE_{\bH_0}T_\lambda\Omega\rangle
		\\=& \nonumber
		-\frac{1}{L^4}\sum_{p,r,r'\in\Lambda^*}\hat W(p)
		 \hat \eta^\varepsilon_{r,r'}(p) \hat u_\uparrow(r+p)\hat u_\downarrow(r'-p)\hat v_\uparrow(r)\hat v_\downarrow(r')\\ &
		 +\int_0^1\dd \lambda\, \langle T_\lambda\Omega,(\cE_{\bH_0}+\lambda\cE_{\bQ_2})T_\lambda\Omega\rangle \label{sumtobr}
	\end{align}
with the error terms satisfying 
	\begin{equation}
		\int_0^1\dd \lambda\, \langle T_\lambda\Omega,(\cE_{\bH_0}+\cE_{\bQ_2^{\uparrow\downarrow}})T_\lambda\Omega\rangle\le C L^2 \varrho^2 \delta^{3} \left( \ln \delta^{-1} \right)^2 
	\end{equation}
for small $\delta$, using again \Cref{prop:bound-n,prop:conj-q4}.

The sum on the right-hand side of \eqref{sumtobr} can be replaced by an integral, up to an error that is certainly smaller than the \(O( \varrho^{\frac{3}{2}} L^{-1})\) already present on the right-hand side of \eqref{concl}, coming from the analogous replacement in the kinetic energy.  In the resulting integral, we still have to remove the $\varepsilon$ in order to arrive at the desired term. This is achieved in the following lemma.

\begin{Le}
\label[Le]{lem:integral gap}
	For \(\varepsilon =\delta\varrho\), we have that
	\begin{equation}
		\begin{split}
			\int & \dd p \dd  k \dd r \, \hat W(p)^2 \frac{\hat u_\uparrow(r+p)\hat u_\downarrow(k-p)\hat v_\uparrow(r)\hat v_\downarrow(k)}{|r+p|^2-|r|^2+|k-p|^2-|k|^2+2\varepsilon} \\
			&=    
			\int \dd p \dd  k \dd r\,  \hat W(p)^2 \frac{\hat u_\uparrow(r+p)\hat u_\downarrow(k-p)\hat v_\uparrow(r)\hat v_\downarrow(k)     }{|r+p|^2-|r|^2+|k-p|^2-|k|^2}                         
			     + O(\delta^3\varrho^2).
		\end{split}
	\end{equation}
\end{Le}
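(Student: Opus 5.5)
The plan is to write the difference of the two integrals explicitly and bound it by a positive quantity. Set $D = |r+p|^2-|r|^2+|k-p|^2-|k|^2$, which is $\geq 0$ on the support of $\hat u_\uparrow(r+p)\hat u_\downarrow(k-p)\hat v_\uparrow(r)\hat v_\downarrow(k)$. Then
\begin{equation*}
\frac{1}{D}-\frac{1}{D+2\varepsilon} = \frac{2\varepsilon}{D(D+2\varepsilon)} ,
\end{equation*}
so the error term is non-negative. We must show
\begin{equation*}
2\varepsilon \int \dd p\, \hat W(p)^2 \, G(p) \leq C \delta^3 \varrho^2 ,
\qquad G(p) = \int \dd k\,\dd r\, \frac{\hat u_\uparrow(r+p)\hat u_\downarrow(k-p)\hat v_\uparrow(r)\hat v_\downarrow(k)}{D(D+2\varepsilon)} .
\end{equation*}
Since $\hat W(p)^2 \leq C\delta^2$ and $\varepsilon = \delta\varrho$, it suffices to prove $\int \dd p\, G(p) \leq C\varrho$, uniformly in $\delta$. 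This bound should be scale invariant: rescaling momenta by $k_{\rm F}\sim \varrho^{1/2}$ turns it into the statement that $\int G$ stays bounded as $\varepsilon/k_{\rm F}^2 \to 0$, with the ratio $\varrho_\downarrow/\varrho_\uparrow$ fixed.

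For large $|p| \gg k_{\rm F}$ we have $D \geq c|p|^2$. Hence $G(p) \leq C\varrho^2 |p|^{-4}$, and the region $|p|\geq Ck_{\rm F}$ contributes $O(\varrho)$. This is also where the absolute convergence in the original integral comes from; note that $\hat W(p)^2\sim\delta^2$ is not integrable on its own, but the $1/D^2$ makes the error convergent there.

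For $|p|\lesssim k_{\rm F}$ we use the $t$-representation from \eqref{def:inte}, or more directly the splitting $D = D_\uparrow + D_\downarrow$ with $D_\uparrow = |r+p|^2-|r|^2 \geq 0$ and $D_\downarrow=|k-p|^2-|k|^2\ge0$. We then bound
\begin{equation*}
\frac{1}{D(D+2\varepsilon)} \leq \frac{1}{(D_\uparrow+\varepsilon)^{1/2}\, D_\uparrow^{1/2}\,(D_\downarrow+\varepsilon)^{1/2}\, D_\downarrow^{1/2}} ,
\end{equation*}
which follows from $D\geq 2\sqrt{D_\uparrow D_\downarrow}$ and a similar AM-GM step for $D+2\varepsilon$. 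This factorizes $G(p) \leq C g_\uparrow(p) g_\downarrow(p)$, where
\begin{equation*}
g_\sigma(p) = \int \dd r\, \frac{\hat v_\sigma(r)\hat u_\sigma(r\pm p)}{\left( (|r\pm p|^2-|r|^2)(|r\pm p|^2-|r|^2+\varepsilon)\right)^{1/2}} .
\end{equation*}
The integrals $g_\sigma$ can be computed in elliptic-type coordinates. For fixed $p$, the region of $r$ inside the Fermi ball with $r+p$ outside has area about $k_{\rm F}|p|$ for small $|p|$, and $|r+p|^2-|r|^2 = 2r\cdot p+|p|^2$ ranges linearly over it. This gives $g_\sigma(p)\lesssim k_{\rm F}^2\,\min\{1, k_{\rm F}/|p|\}$ times at most a logarithm in $\varepsilon$. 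The $\varepsilon$-dependence is integrable near $D_\sigma=0$, since $\int_0 \dd x\, x^{-1/2}(x+\varepsilon)^{-1/2}\sim \ln(1/\varepsilon)$ only if the measure is uniform. Because of this, I would instead keep $1/(D(D+2\varepsilon))$ jointly and integrate $p$ first: the region where both $D_\uparrow$ and $D_\downarrow$ are small has measure in $(k,r,p)$ of order $D^{1/2}$-ish near $p=0$, giving a convergent integral $\int_0 \dd D\, D^{1/2}/(D(D+\varepsilon))\sim \varepsilon^{-1/2}$.

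This scaling step is the main obstacle, and I worry it fails. Dimensional counting gives $\int G\sim \varrho\,\cdot(\text{something in } \varepsilon/\varrho)$, and the key is to show that this something is bounded or at most grows like $\ln\delta^{-1}$. If the honest bound only gives $\int G\leq C\varrho\ln\delta^{-1}$, I would accept an error $\delta^3\varrho^2\ln\delta^{-1}$, which is still within the theorem's error term. To settle this I would compute the small-$p$ phase-space volume $\int \dd k\,\dd r\,\dd p\,\mathbbm{1}\{D\leq s\}\lesssim k_{\rm F}^2 s^2$, using that near $p=0$ both constraints are linear in $p$ and $p$ is two-dimensional. With this estimate, $\int_0 \dd s\, s/(s(s+\varepsilon))$ diverges only logarithmically, while the exact measure should be $s^2$ times a factor $k_{\rm F}^{-2}$, giving $\int \dd s\, s/(s+\varepsilon)\lesssim k_{\rm F}^2$. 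Hence $\int G\leq C\varrho$, and the claim would follow.
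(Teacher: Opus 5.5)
There is a genuine gap. Your reduction is the same as the paper's: write the difference as $2\varepsilon\int \hat W(p)^2\,\chi/(D(D+2\varepsilon))$, use $|\hat W(p)|\le C\delta$ and $\varepsilon=\delta\varrho$, and reduce everything to $\int G(p)\,\dd p\le C\varrho$ uniformly in $\varepsilon$. Your treatment of $|p|\gg k_{\rm F}$ via $D\ge c|p|^2$ is also fine. The problem is the region $|p|\lesssim k_{\rm F}$, the only place where something has to be proved, and none of your three attempts there is completed.

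\begin{itemize}
\item You abandon the factorization $D\ge 2\sqrt{D_\uparrow D_\downarrow}$ on the basis of a wrong assessment of $g_\sigma$. The bound $k_{\rm F}^2\min\{1,k_{\rm F}/|p|\}$ cannot be right, since $g_\sigma$ is dimensionless, and the feared $\ln(1/\varepsilon)$ is not actually there.
\item The ``measure $\sim D^{1/2}$'' heuristic produces $\varepsilon^{-1/2}$. That diverges as $\varepsilon\to0$; it is not a convergent bound.
\item The volume bound $|\{D\le s\}|\lesssim k_{\rm F}^2 s^2$ is stated without proof. As you note yourself, it only yields $\int G\lesssim \varrho\ln(k_{\rm F}^2/\varepsilon)\sim\varrho\ln\delta^{-1}$.
\end{itemize}
Your closing claim that the ``exact measure'' removes the logarithm is a dimensionally inconsistent assertion, not an argument. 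So what you obtain is at best $O(\delta^3\varrho^2\ln\delta^{-1})$, and even that rests on an unproven volume estimate. This is weaker than the lemma.

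The missing observation is that the $\varepsilon$ is not needed at all. The paper uses $D+2\varepsilon\ge D$ and bounds $\int \chi/D^2\le C\varrho$ by adapting the three-dimensional argument. Your own first route gives this directly:
\begin{itemize}
\item Since $D^2\ge 4D_\uparrow D_\downarrow$, one has $\int\chi/D^2\le \frac14\int \dd p\, h_\uparrow(p)h_\downarrow(-p)$, where $h_\sigma(p)=\int \dd r\,\hat v_\sigma(r)\hat u_\sigma(r+p)/(|r+p|^2-|r|^2)$. Note that $h_\sigma\ge g_\sigma$.
\item On the crescent $\{|r|\le k_{\rm F}^\sigma<|r+p|\}$, the set $\{D_\sigma\le s\}$ is the intersection of two annuli of width $\sim s/k_{\rm F}^\sigma$ meeting at an angle $\sim |p|/k_{\rm F}^\sigma$. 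Its area is $\sim s^2/(k_{\rm F}^\sigma|p|)$.
\item So the distribution of $D_\sigma$ vanishes at $0$ rather than being uniform, and $1/D_\sigma$ is integrable with no logarithm.
\item Concretely, slicing perpendicular to $p$ gives $h_\sigma(p)=\pi/2$ for $|p|\le 2k_{\rm F}^\sigma$, and $h_\sigma(p)\le 2\pi (k_{\rm F}^\sigma)^2/|p|^2$ beyond. This is the two-dimensional static Lindhard function.
\end{itemize}
Hence $\int \chi/D^2\le C\int \min\{1,(k_{\rm F}^\uparrow/|p|)^2\}\min\{1,(k_{\rm F}^\downarrow/|p|)^2\}\,\dd p\le C\varrho$. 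This closes the argument uniformly in $\varepsilon$.
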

\begin{proof}
We emphasize that the integrals above are indeed finite due to the decay of \(\hat W(p)\) for large \(|p|\), compare with \eqref{need:W}. 
	With the notation \(\lambda_{r,p}=|r+p|^2-|r|^2\) 
	we have
	\begin{align}\nonumber
 &			\int \dd p \dd  k \dd r \,\hat W(p)^2 \left(\frac{\hat u_\uparrow(r+p)\hat u_\downarrow(k-p)\hat v_\uparrow(r)\hat v_\downarrow(k)}{\lambda_{r,p}+\lambda_{k,-p}+2\varepsilon}-\frac{\hat u_\uparrow(r+p)\hat u_\downarrow(k-p)\hat v_\uparrow(r)\hat v_\downarrow(k)}{\lambda_{r,p}+\lambda_{k,-p}}\right)\\
			& =
			2\varepsilon\int \dd p \dd  k \dd r\, \hat W(p)^2 \frac{\hat u_\uparrow(r+p)\hat u_\downarrow(k-p)\hat v_\uparrow(r)\hat v_\downarrow(k)}{(\lambda_{r,p}+\lambda_{k,-p})(\lambda_{r,p}+\lambda_{k,-p}+2\varepsilon)} 	.
	\end{align}
	On the right-hand side, we can use \(| \hat W(p)| \le C\delta\) and $\varepsilon>0$ for an upper bound. The result then follows from
	\begin{equation}
	\int \dd p \dd  k \dd r\,  \frac{\hat u_\uparrow(r+p)\hat u_\downarrow(k-p)\hat v_\uparrow(r)\hat v_\downarrow(k)}{(\lambda_{r,p}+\lambda_{k,-p})^2 } \leq C \varrho
	\end{equation} 
	which one can readily prove by 
	adapting the  proof in the three-dimensional case in  \cite[Lemma C.1]{giacomelli2024huangyangformulalowdensityfermi}. 
\end{proof}

Note that 
\begin{equation}
		\hat W(0)= \int_{\bR^2} W_\infty =\frac{4\pi}{\ln(b/a)}.
\end{equation}
Moreover, for small $b^2\varrho$
\begin{equation}	
\frac 1{(2\pi)^2}	\int_{|p|\ge 2\ee^{-\gamma} \varrho^{\frac 12}}\dd p\ \frac{\hat W(p)^2}{2|p|^2}=\frac{ 4\pi}{(\ln(b/a))^2} \left( - \ln(b \varrho^{\frac 12}) +O \left( b^2 \varrho\right)\right) 
\end{equation}
	where we used \eqref{eq:fourier transform of v}
 and the fact that
	\begin{equation}
		\int_x^\infty J_0(p)^2 \frac{\dd p}{p}= \ln(2/x)-\gamma+O(x^2)
	\end{equation}
as $x\to 0$, which follows from a combination of the identities $(9.1.12)$, $(11.1.20)$ and $(11.4.43)$ in  \cite{AS}.
Since 
\begin{equation}
\frac{1}{\ln(b/a)} + \frac{\ln(b\varrho^{\frac 12})}{(\ln(b/a))^2} = 	- \frac{1}{\ln(a\varrho^{\frac 12})}\left(1+O\left(\frac{\ln(b\varrho^{\frac 12})}{\ln(b/a)}\right)^2\right)
\end{equation}
we thus have 
	\begin{equation}\label{eq:scat len rep}
		\hat W(0)-\frac 1{(2\pi)^2}\int_{|p|\ge 2\ee^{-\gamma}/R}\dd p\ \frac{\hat W(p)^2}{2|p|^2}
		= - \frac{4\pi}{\ln(a\varrho^{\frac 12})}\left(1+O\left(\frac{\ln(b\varrho^{\frac 12})}{\ln(b/a)}\right)^2\right).
	\end{equation}

In conclusion, with the choice $b = \varrho^{-\frac 12} \delta^{\gamma_b}$ for $\gamma_b \geq 1$, 
\begin{equation}\label{ad0}
\frac 1{L^{2}} \langle \Phi, \cH_W \Phi\rangle = 2\pi(\varrho_\uparrow^2+\varrho_\downarrow^2)+ \frac{4\pi \varrho_\uparrow\varrho_\downarrow }{|\ln(a\varrho^{\frac 12})|}+O\left( {\varrho^{\frac{3}{2}}}{L^{-1}} \right) + \mathrm{A} 
+ O \left( \varrho^2  \delta^3  \left(\ln \delta^{-1} \right)^2  \right) 
\end{equation}
with $\mathrm{A}$ given by  
\begin{align}\nonumber
\mathrm{A} & = -
  \frac 1{(2\pi)^6}\int  \dd p \dd  k \dd r\, \hat W(p)^2 \hat v_\uparrow(r)\hat v_\downarrow(k)  \\ & \qquad \qquad\qquad  \times \left(\frac{\hat u_\uparrow(r+p)\hat u_\downarrow(k-p)}{|r+p|^2-|r|^2+|k-p|^2-|k|^2}
		-\frac{\mathbbm{1}_{\{|p|\ge 2\ee^{-\gamma}\varrho^{\frac 12}\}}}{2|p|^2}
		\right) \label{ad} .
\end{align}
Finally, to arrive at the desired result \eqref{eq:bound small box}, we shall use that for our choice of $b$ we have \(\hat W(0)=  8\pi\delta  + O (\delta^{2} \ln \delta^{-1}) \) and, as already used to obtain \eqref{eq:sum fermi ball},  \(\hat {W}(p)=\hat {W}(0)+O(\delta b^2 |p|^2)\). From this one readily concludes that the $\hat W(p)^2$ in \eqref{ad} can be replaced by $(8\pi \delta)^2$ with an error that is negligible compared to the one of order $\varrho^2 \delta^3 (\ln \delta^{-1})^2$ already present in \eqref{ad0}. A simple rescaling shows that the resulting expression equals ${16\pi} \delta^2 F(\varrho_\downarrow/\varrho_\uparrow)\varrho_\uparrow\varrho_\downarrow$ with $F$ defined in \eqref{eq:def F mein thm}. 
This completes the proof of \Cref{thm:bound small box}.

\section{Step 4. Bound on error terms from Step 2}\label{sec:errJ}

In this section we shall bound the expectation values in our trial state \eqref{def:Phi} of the error terms from the introduction of the Jastrow factor in \Cref{sec:3}.

Let $\mathcal{R}_2 =\sum_{1\le i<j\le N}  (1-f_{ij}^2)$ denote the error term in \Cref{lem:norm  jastrow}.  

\begin{Prop}
\label[Prop]{prop:conj-q4-new}
For the trial state $\Phi$ in \eqref{def:Phi}, 
\begin{equation}\label{eq:bound error small boxes jastrow}
		\langle\Phi, \cR_2\Phi\rangle \le C\| 1-f^2\|_1 L^2 \varrho^2.
	\end{equation}
\end{Prop}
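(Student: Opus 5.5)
The plan is to write $\cR_2$ in second quantization as a two-body operator with the non-negative, even, integrable pair potential $w=1-f^2$. This $w$ is supported in the periodized ball of radius $b$, which lies inside $\Lambda$ since $b\ll L$. We then reuse the machinery of \Cref{sec:setup} with $W$ replaced by $w$. Since $\cR_2$ commutes with $\cN_\sigma$, the normal-ordering computation behind \Cref{prop: fermionic transf} (without the kinetic term) gives
\begin{equation}
\langle \Phi,\cR_2\Phi\rangle = E_w + \langle T\Omega, \big(\bX|_w+\textstyle\sum_{i=1}^4\bQ_i|_w\big) T\Omega\rangle .
\end{equation}
Here $\bX|_w,\bQ_i|_w$ are the operators of \eqref{eq: def H-corr} with $W$ replaced by $w$. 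The constant $E_w$ is the analogue of the interaction part of \eqref{4.23}:
\begin{equation}
E_w=\frac{L^2}{2}\sum_{\sigma,\sigma'}\Big(\hat w(0)\varrho_\sigma\varrho_{\sigma'}-\frac{\delta_{\sigma\sigma'}}{L^4}\sum_{k,k'\in B_\sigma}\hat w(k-k')\Big).
\end{equation}
Using $|\hat w(k)|\le \|w\|_1$, this gives $E_w\le C\|1-f^2\|_1L^2\varrho^2$, which is the main term in \eqref{eq:bound error small boxes jastrow}.

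Next I bound the correlation terms one by one. First, \eqref{eq:err simpl corr} holds for any potential $w$, with $\hat W(0)$ replaced by $\int w$. Combined with \Cref{prop:bound-n}, the $\bX|_w$ and $\bQ_1|_w$ contributions are at most $C\|w\|_1\varrho\, L^2\varrho(\delta\ln\delta^{-1})^2$, which is negligible. Second, the parity argument behind \eqref{q320} uses only the particle-number structure of $B$, so $\bQ_3|_w$ and $\bQ_2^\parallel|_w$ again have vanishing expectation in $T\Omega$. Third, \Cref{prop:conj-q4} was formulated precisely for general $w$, and it gives $\langle T\Omega,\bQ_4|_w T\Omega\rangle\le C L^2\|w\|_1\varrho^2(\delta\ln\delta^{-1})^2$.

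The remaining, and main, term is $\bQ_2^{\uparrow\downarrow}|_w$. Unlike $\bQ_2^{\uparrow\downarrow}$ for $W$, it is not cancelled by the commutator with $\bH_0$, so it must be bounded directly. I would use Cauchy--Schwarz with weight $w(x-y)\ge 0$, splitting it as
\begin{equation}
|\langle\psi,\bQ_2^{\uparrow\downarrow}|_w\psi\rangle|\le C\Big(\int \dd x\dd y\, w(x-y)\|a(u_y)a(u_x)\psi\|^2\Big)^{\frac12}\Big(\int \dd x\dd y\, w(x-y)\|a^*(v_y)a^*(v_x)\psi\|^2\Big)^{\frac12}.
\end{equation}
The first factor is bounded by $C\|\bQ_4|_w^{1/2}\psi\|$. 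In the second factor, \eqref{eq:norm ann} gives $\|a^*(v_x)\|=\|v\|_2\le C\varrho^{1/2}$, so it is at most $C\varrho L\|w\|_1^{1/2}$. With $\psi=T\Omega$ and \Cref{prop:conj-q4}, the product is at most $C\|w\|_1L^2\varrho^2\,\delta\ln\delta^{-1}$, which is bounded by $C\|w\|_1L^2\varrho^2$.

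Summing the four contributions proves \eqref{eq:bound error small boxes jastrow}. I expect the only point needing care to be checking that the assumptions of \Cref{prop:conj-q4} (even, integrable, non-negative $w$ supported in $\Lambda$) hold for $1-f^2$. This is immediate from $0\le f\le 1$, radial symmetry, and $f=1$ outside $B_b$.
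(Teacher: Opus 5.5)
Your argument is correct, but it takes a longer route than the paper. You normal-order $R^*\cR_2R$ as in \Cref{prop: fermionic transf}, which forces you to do three separate things: use the number constraint to drop the direct terms, use the parity structure of $T\Omega$ to kill $\bQ_3|_w$ and $\bQ_2^\parallel|_w$, and bound $\bQ_2^{\uparrow\downarrow}|_w$ by its own Cauchy--Schwarz estimate. The paper avoids all of this and proves the operator inequality \eqref{CS2} directly. It writes $\cR_2=\frac12\sum_{\sigma,\sigma'}\int w(x-y)\,(a_{y,\sigma'}a_{x,\sigma})^*(a_{y,\sigma'}a_{x,\sigma})$ as a sum of squares and substitutes $R^*a_{y,\sigma'}a_{x,\sigma}R=(a_{\sigma'}(u_y)+a^*_{\sigma'}(v_y))(a_\sigma(u_x)+a^*_\sigma(v_x))$. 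Expanding gives four terms $A_j$, and one applies $(\sum_j A_j)^*(\sum_j A_j)\le 4\sum_j A_j^*A_j$ together with $\|a^*(v_x)\|\le C\varrho^{1/2}$. This yields $R^*\cR_2R\le C(\|w\|_1\varrho^2L^2+\|w\|_1\varrho\,\cN+\bQ_4|_w)$ on all of Fock space, and \Cref{prop:bound-n,prop:conj-q4} finish the proof.

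That route needs only $w\ge 0$. It uses no particle-number constraint, no parity argument, and no separate treatment of $\bQ_2^{\uparrow\downarrow}|_w$ or $\bQ_3|_w$: those appear as cross terms that the squaring inequality absorbs. It is also the form that carries over directly to the three-body term in \eqref{CS3}. Your decomposition, in exchange, identifies the leading contribution explicitly as the free-Fermi-gas value $E_w$. It also shows that the correlation corrections are smaller by a factor $\delta\ln\delta^{-1}$, which is more than the statement requires.

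One small imprecision: the direct terms $\hat w(0)\varrho_{\sigma'}R^*(\cN_\sigma-N_\sigma)R$ drop out because $\Phi$ is an eigenvector of each $\cN_\sigma$. That $\cR_2$ commutes with $\cN_\sigma$ is not by itself the reason.
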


\begin{proof}
Let $w = 1 - f^2$. With the notation introduced in \Cref{prop:conj-q4}, the Cauchy--Schwarz inequality and $\| v_\sigma\|_2 \leq C \varrho^{\frac 12}$, we can bound
\begin{equation}\label{CS2}
R \cR_2 R \leq C \left(   \left( \int w\right) \varrho^2 L^2 + \left( \int w\right) \varrho \cN +  \bQ_4|_w   \right).
\end{equation}
Hence the claim follows immediately from \Cref{prop:bound-n,prop:conj-q4}.
\end{proof}

Recall the definition \eqref{def:R3} of the three-body error term $\cR_3$   in \Cref{lem:restrsoft pot}.

\begin{Prop}\label[Prop]{lem:error small boxes jastrow}
	For the trial state $\Phi$ in \eqref{def:Phi}, 
	\begin{equation}\label{eq:bound error small boxes jastrow2}
		| \langle\Phi, \cR_3\Phi\rangle| \le C\|f\nabla f\|_1^2 L^2 \varrho^3.
	\end{equation}
\end{Prop}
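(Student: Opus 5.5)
The plan is to reduce $\cR_3$ to a positive three-body multiplication operator and then estimate its expectation value in $\Phi=RT\Omega$ with the same tools used for $\bQ_4$ and $\cN$. Since $0\le f\le 1$, pointwise
\begin{equation*}
\Bigl|\frac{\nabla f_{ij}}{f_{ij}}\cdot\frac{\nabla f_{ik}}{f_{ik}}\prod_{l<m}f_{lm}^2\Bigr|\le g(x_i-x_j)\,g(x_i-x_k),\qquad g:=|f\nabla f| .
\end{equation*}
Hence $|\langle\Phi,\cR_3\Phi\rangle|\le\langle\Phi,\cG\Phi\rangle$. Here $\cG$ is the second quantization of $\sum'_{i,j,k}g_{ij}g_{ik}$, namely
\begin{equation*}
\cG=\sum_{\sigma,\sigma',\sigma''}\int \dd x\,\dd y\,\dd z\; g(x-y)g(x-z)\,a^*_{x,\sigma}a^*_{y,\sigma'}a^*_{z,\sigma''}a_{z,\sigma''}a_{y,\sigma'}a_{x,\sigma}.
\end{equation*}
This operator is nonnegative and has $\|g\|_1^2$ as its natural size. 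The goal is the bound $\langle T\Omega,R^*\cG R\,T\Omega\rangle\le C\|g\|_1^2L^2\varrho^3$.

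Next, I will conjugate with the particle-hole transformation, replacing each $a_{x,\sigma}$ by $a_\sigma(u_x)+a^*_\sigma(v_x)$, and treat the resulting terms by Cauchy--Schwarz, in the same way as \eqref{CS2} was derived for $\cR_2$. Every occurrence of $a(v_\cdot)$ or $a^*(v_\cdot)$ is bounded in norm by $\|v_\sigma\|_2\le C\varrho^{1/2}$, using \eqref{eq:norm ann}.
\begin{itemize}
\item The fully contracted term, with all six operators of $v$-type, gives the constant $C\|g\|_1^2L^2\varrho^3$.
\item Terms where exactly one particle carries a $u$-type pair reduce to $C\|g\|_1^2\varrho^2\cN$, which \Cref{prop:bound-n} controls by $L^2\varrho^3(\delta\ln\delta^{-1})^2\|g\|_1^2$.
\item Terms with two $u$-type pairs become, after Cauchy--Schwarz, $\varrho\,\bQ_4|_{w}$ with $w$ equal to $g$ convolved with $g$ (or $g$ times $\|g\|_1$). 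These are handled by \Cref{prop:conj-q4}, since $\int w\le\|g\|_1^2$, together with \Cref{prop:bound-n}.
\item Cross terms with unbalanced numbers of creation and annihilation operators are split by $2|\langle A\psi,B\psi\rangle|\le \|A\psi\|^2+\|B\psi\|^2$ into the same types.
\end{itemize}
As in \Cref{prop:conj-q4}, I would first cut $u_\sigma$ into $u^<_\sigma+u^>_\sigma$, so that the low-momentum part is also absorbed with $\|u^<_\sigma\|_2\le C\varrho^{1/2}$. After this step, only the purely high-momentum term with six $u^>$ operators remains.

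The main obstacle is this remaining six-$u^>$ operator $\cG^>$, since no $v$-factor is available to supply powers of $\varrho$. I would treat it with Gr\"onwall along $T_\lambda$, exactly as $\bQ^>_4|_w$ was treated. The derivative $[\cG^>,B]+\mathrm{h.c.}$ contracts one $u^>$ operator of $\cG^>$ with a $u$-operator of $B$. This produces terms with four $u^>$ operators and $v_t$-factors, which look like $\mathrm J_1$ and $\mathrm J_3$. By Cauchy--Schwarz these are bounded by $\|(\cG^>)^{1/2}\psi\|$ times quantities controlled either by $\bQ_4^>|_{w}$ (with $w$ built from $g$) or by $\cN$. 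The $t$-integrals are then handled by \eqref{inte4}, \eqref{inte5}, \eqref{4.69} and \eqref{need:W}.

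Starting from $\langle\Omega,\cG^>\Omega\rangle=0$, Gr\"onwall gives $\langle T\Omega,\cG^>T\Omega\rangle\le C\|g\|_1^2L^2\varrho^3(\delta\ln\delta^{-1})^2$. This is even better than the claimed bound. The delicate point is checking that each commutator term carries enough factors of $v_t$ to produce the needed $\varrho^{3/2}$ scaling, and that $\|\zeta^>_t\|_1\le C$ is again usable when both $g$-factors are attached to the same particle.
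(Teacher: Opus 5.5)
Your proposal is correct and follows the paper's proof essentially step by step: you bound $\cR_3$ by the positive three-body operator with kernel built from $|f\nabla f|$, and after the particle--hole transformation Cauchy--Schwarz reduces it to a constant, an $\cN$-term, $\bQ_4$-type terms with kernels $|f\nabla f|*|f\nabla f|$ and $\|f\nabla f\|_1|f\nabla f|$, and the six-$u$ operator $\bS_6$; then come the cut $u_\sigma=u^<_\sigma+u^>_\sigma$ and a Gr\"onwall argument for $\bS_6^>$ starting from $\langle\Omega,\bS_6^>\Omega\rangle=0$. The only imprecisions are minor: $[\bS_6^>,B]$ contains both single contractions (order-$8$ terms with five $u^>$-operators, one $\eta_t$ and two $v_t$, bounded like $\mathrm J_1$ via $\bQ_4^>$) and double contractions (order-$6$ terms, bounded like $\mathrm J_3$ via $\cN$), whereas your sentence describes only the former; and your factor $\varrho$ in front of the $\bQ_4$-terms is the dimensionally correct one, which \eqref{CS3} omits.
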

\begin{proof}
	We start by taking absolute values and bound
	\begin{equation}
		|\cR_3|  \le  \sum_{i,j,k}'  w(x_i-x_k)w(x_i-x_j)
	\end{equation}
	with \(w(x) = |f(x)\nabla f(x)|\). Similarly as for \eqref{CS2}, we can use the Cauchy--Schwarz inequality to further bound this as 
\begin{equation}\label{CS3}
R \cR_3 R \leq C \left(   \left( \int w\right)^2 \varrho^3 L^2 + \left( \int w\right)^2 \varrho^2 \cN  +  \bQ_4|_{w*w} + \left( \int w\right) \bQ_4|_w + \bS_6  \right)
\end{equation}	
with
	\begin{equation}\label{def:s6}
		\bS_6\!=\!\sum_{\sigma,\sigma',\sigma''}
		\int_{\Lambda^3} \dd x\dd y\dd z\, w(x-y)w(x-z)
		a^*_\sigma(u_x)a^*_{\sigma'}(u_y)a^*_{\sigma''}(u_z)a_{\sigma''}(u_z)a_{\sigma}(u_y)a_{\sigma}(u_x).
	\end{equation}
The first term on the right-hand side of \eqref{CS3} is the main term, the expectation value of the second, third and forth term in the state $T \Omega$  can be bounded with the aid of \Cref{prop:bound-n,prop:conj-q4}. We are thus left with showing that 
	\begin{equation}\label{tos6}
		|\langle T\Omega,\bS_6T\Omega\rangle|\le C L^2\varrho^3 \left( \int w \right)^2.
	\end{equation}

In order to prove \eqref{tos6}, we shall proceed similarly as for \(\bQ_4\) in \Cref{prop:conj-q4}. 
As a first step, we shall replace the functions $u_\sigma$ appearing in \eqref{def:s6} by $u_\sigma^>$. They are defined via their Fourier coefficients $\hat u_\sigma^> (k) = \hat u_\sigma(k) (1-\hat\zeta^<(k))$ 
with $\hat\zeta^<(k)$ defined in \eqref{def:zeta}. Denoting the resulting expression by $\bS_6^>$, we can use the Cauchy--Schwarz inequality to show that 
$\bS_6 - \bS_6^>$ satisfies the same bound as in \eqref{CS3} above (with $\bS_6^>$ in place of $\bS_6$ on the right-hand side), and hence we only need to consider $\bS_6^>$. 
	
In the following we shall prove that 
	\begin{equation}
		|\partial_\lambda \langle T_{\lambda}\Omega,\bS_6^> T_{\lambda}\Omega\rangle|\le
C L \varrho^{\frac 32}  \left(\int w\right) \| ( \bS_6^>)^{\frac 12}T_\lambda\Omega\| 
	\end{equation}
from which the desired bound \eqref{tos6} readily follows via Gr\"onwall's Lemma. For the derivative, we need to compute the commutator $[\bS_6^>,B]$. As in the previous section, we shall do this is momentum space, which requires the computation of 
\begin{equation}
[ 			\hat a^*_{p+l+k,\sigma} \hat a^*_{q-k,\sigma'}  \hat a^*_{r-l,\sigma''}
			\hat a_{r,\sigma''} 	\hat a_{q,\sigma'} 		\hat a_{p,\sigma},
			\hat a_{s+k',\uparrow}		\hat a_{-k',\uparrow}
			\hat a_{-s+l',\downarrow}	\hat a_{-l',\downarrow}
		]
\end{equation}
for $l'\in B_\downarrow$, $k'\in B_\uparrow$, while all the other $8$ indices appearing are outside the respective Fermi ball. In normal order, this results in $12$ terms; explicitly, 
\begin{align}\label{eq:comm s6 b1}
		  & -\delta_{p+l+k,s+k'}\delta_{\sigma,\uparrow}
		\hat a^*_{q-k,\sigma'}  \hat a^*_{r-l,\sigma''}
		\hat a_{r,\sigma''} 	\hat a_{q,\sigma'} 		\hat a_{p,\sigma}
		\hat a_{-k',\uparrow}
		\hat a_{-s+l',\downarrow}	\hat a_{-l',\downarrow}         \nonumber                     \\
		  & +\delta_{q-k,s+k'}\delta_{\sigma',\uparrow }
		\hat a^*_{p+l+k,\sigma}  \hat a^*_{r-l,\sigma''}
		\hat a_{r,\sigma''} 	\hat a_{q,\sigma'} 		\hat a_{p,\sigma}
		\hat a_{-k',\uparrow}\nonumber
		\hat a_{-s+l',\downarrow}	\hat a_{-l',\downarrow}                                       \\
		  & -\delta_{r-l,s+k'}\delta_{\sigma'',\uparrow}
		\hat a^*_{p+l+k,\sigma} \hat a^*_{q-k,\sigma'}
		\hat a_{r,\sigma''} 	\hat a_{q,\sigma'} 		\hat a_{p,\sigma}
		\hat a_{-k',\uparrow}
		\hat a_{-s+l',\downarrow}	\hat a_{-l',\downarrow}                \nonumber              \\
		  & +\delta_{p+l+k,-s+l'}\delta_{\sigma,\uparrow}
		\hat a^*_{q-k,\sigma'}  \hat a^*_{r-l,\sigma''}
		\hat a_{r,\sigma''} 	\hat a_{q,\sigma'} 		\hat a_{p,\sigma}
		\hat a_{s+k',\uparrow}		\hat a_{-k',\uparrow}
		\hat a_{-l',\downarrow}                                                \nonumber            \\
		  & -\delta_{q-k,-s+l'}\delta_{\sigma',\uparrow }
		\hat a^*_{p+l+k,\sigma}   \hat a^*_{r-l,\sigma''}
		\hat a_{r,\sigma''} 	\hat a_{q,\sigma'} 		\hat a_{p,\sigma}
		\hat a_{s+k',\uparrow}		\hat a_{-k',\uparrow}
		\hat a_{-l',\downarrow}                                                \nonumber            \\
		  & +\delta_{r-l,-s+l'}\delta_{\sigma'',\uparrow}
		\hat a^*_{p+l+k,\sigma} \hat a^*_{q-k,\sigma'}
		\hat a_{r,\sigma''} 	\hat a_{q,\sigma'} 		\hat a_{p,\sigma}
		\hat a_{s+k',\uparrow}		\hat a_{-k',\uparrow}
		\hat a_{-l',\downarrow}                                                  \nonumber          \\
		  & - \delta_{p+l+k,s+k'}\delta_{\sigma,\uparrow}\delta_{q-k,-s+l'}\delta_{\sigma',\downarrow }
		\hat a^*_{r-l,\sigma''}
		\hat a_{r,\sigma''} 	\hat a_{q,\sigma'} 		\hat a_{p,\sigma}
		\hat a_{-k',\uparrow}
		\hat a_{-l',\downarrow}                                              \nonumber              \\
		  & + \delta_{p+l+k,s+k'}\delta_{\sigma,\uparrow}\delta_{r-l,-s+l'}\delta_{\sigma'',\downarrow}
		\hat a^*_{q-k,\sigma'}
		\hat a_{r,\sigma''} 	\hat a_{q,\sigma'} 		\hat a_{p,\sigma}
		\hat a_{-k',\uparrow}
		\hat a_{-l',\downarrow}                                                \nonumber            \\
		  & + \delta_{q-k,s+k'}\delta_{\sigma',\uparrow } \delta_{p+l+k,-s+l'}\delta_{\sigma,\downarrow}
		\hat a^*_{r-l,\sigma''}
		\hat a_{r,\sigma''} 	\hat a_{q,\sigma'} 		\hat a_{p,\sigma}
		\hat a_{-k',\uparrow}
		\hat a_{-l',\downarrow}                                                      \nonumber      \\
		  & - \delta_{q-k,s+k'}\delta_{\sigma',\uparrow } \delta_{r-l,-s+l'}\delta_{\sigma'',\downarrow}
		\hat a^*_{p+l+k,\sigma}
		\hat a_{r,\sigma''} 	\hat a_{q,\sigma'} 		\hat a_{p,\sigma}
		\hat a_{-k',\uparrow}
		\hat a_{-l',\downarrow}                                                    \nonumber        \\
		  & - \delta_{r-l,s+k'}\delta_{\sigma'',\uparrow}\delta_{p+l+k,-s+l'}\delta_{\sigma,\downarrow}
		\hat a^*_{q-k,\sigma'}
		\hat a_{r,\sigma''} 	\hat a_{q,\sigma'} 		\hat a_{p,\sigma}
		\hat a_{-k',\uparrow}
		\hat a_{-l',\downarrow}                                                 \nonumber           \\
		  & + \delta_{r-l,s+k'}\delta_{\sigma'',\uparrow} \delta_{q-k,-s+l'}\delta_{\sigma',\downarrow }
		\hat a^*_{p+l+k,\sigma}
		\hat a_{r,\sigma''} 	\hat a_{q,\sigma'} 		\hat a_{p,\sigma}
		\hat a_{-k',\uparrow}
		\hat a_{-l',\downarrow},
	\end{align}
and correspondingly
	\begin{equation}
		\partial_\lambda T_{\lambda}^*\bS_6^> T_{\lambda}= T_{1,\lambda}^*[\bS_6^> ,B]T_{\lambda}+\mathrm{h.c.}=\sum_{j=1}^{12} \mathrm{K}_j+\mathrm{h.c.}
	\end{equation}
The terms naturally fall into two groups (the first six and the last six terms, or order $8$ and order $6$, respectively) where the representatives of each group can all be bounded similarly. 
In fact, a simple change of variables shows that $\mathrm{K}_1 = \mathrm{K}_4$, and also $\mathrm{K}_2 = \mathrm{K}_3 = \mathrm{K}_5 = \mathrm{K}_6$. We shall demonstrate how to bound $\mathrm{K}_1$ and $\mathrm{K}_7$. As in the previous section, we  adopt the notation $\mathrm{K}_j^t$ for an expression such that $\mathrm{K}_j = \int_0^\infty \dd t\, \ee^{-2 t \varepsilon} \mathrm{K}_j^t$. 

For the first term, we have
	\begin{align}\nonumber
		\mathrm{K}_1^t & =
		\sum_{\sigma',\sigma''}
	\int \dd x\dd y\dd z \dd y'\dd z'\,
		w(x-y)w(x-z) \zeta^>_t(z'-x)W(y'-z')         \\
		             & \qquad\qquad\times
		a^{*}_{\sigma'} ( u_z^>)a^{*}_{\sigma''}( u_y^>)
		a_{\sigma''}( u_y^>)a_{\sigma'}(  u_z^>) a_{\uparrow}( u_x^>) a_\uparrow(v_{t,z'})
		a_\downarrow(\eta_{t, y'})a_\downarrow(v_{t,y'}) 
	\end{align}
	where, as in the proof of \Cref{{prop:conj-q4}}, we have introduced the function $\zeta^>_t$ with Fourier coefficients $\hat \zeta^>(k) \ee^{-t|k|^2}$ (see \eqref{def:zeta}) and $\eta_t$ is defined in \eqref{def:eta}.
	It can be bounded with the aid of the Cauchy--Schwarz inequality as 
	\begin{equation}
		|\langle\psi,\mathrm{K}_1^{t}\psi\rangle| \le C\varrho^{\frac{1}{2}}\ee^{t (k_\mathrm{F}^\uparrow)^2}
		\left( \int W\right)  \|\zeta^>_t\|_1\|v_{t,\downarrow}\|_2
		\|\eta_t\|_2
		\|( \bQ^>_4|_{w*w})^{\frac{1}{2}}\psi\|
		\|( \bS_6^>)^{\frac{1}{2}}\psi\| .
	\end{equation}
	Again $\|\zeta^>_t\|_1$ is bounded uniformly in $t>0$, and for the remaining $t$-integration we can use \eqref{inte4}. 
	Using also \eqref{eq:bound eq4} (which also holds for $\bQ_4^>$ in place of $\bQ_4$, as the proof demonstrates), we conclude that 
	\begin{equation}\label{eq:i1s6t2}
		|\langle T_\lambda \Omega ,\mathrm{K}_1 T_\lambda \Omega\rangle| \le C L  \varrho^{\frac{3}{2}} \left( \int w \right) \| (\bS_6^>)^{\frac{1}{2}}T_\lambda \Omega\| \delta^2  \ln \delta^{-1}.
	\end{equation}
	For the other terms of order $8$, \(\|( \bQ^>_4|_{w*w})^{\frac 12}\psi\|\) is replaced by \( (\int w)^{\frac 12}\|( \bQ^>_4|_{w})^{\frac 12}\psi\|\), which, in view of \Cref{prop:conj-q4}, does not change the final result; the bound \eqref{eq:i1s6t2} also holds for those terms. 

	Similarly, we have for $\mathrm{K}_7$ 
	\begin{align}\nonumber
		\mathrm{K}_7^{t} &= \sum_{\sigma''} \int \dd x\dd y\dd z\dd y'\dd z'\,
		w(x-y)w(x-z)  W(y'-z') \zeta^>_t(z'-z)\zeta^>_t(y'-x) \\
		                   & \qquad\qquad\times
		\hat a_{\sigma''}^*( u^>_y)a_{\sigma''}( u^>_y)a_\downarrow( u^>_z)a_\uparrow( u^>_x)
		a_\uparrow(v_{t,z'})a_\downarrow(v_{t,y'})
	\end{align}
	An application of the Cauchy--Schwarz  yields 
	\begin{align}\nonumber
		|\langle\psi,\mathrm{K}_7^t\psi\rangle| &  \le
		 \left( \int w\right) \| v_{t,\downarrow}\|_2 \| v_{t,\uparrow}\|_2  
		\|\cN^{\frac{1}{2}}\psi\| \| (\bS_6^>)^{\frac{1}{2}}\psi\|
			\\ &                                       \label{eq:i2s6t2}
		\quad \times \sup_{x,z \in \Lambda} \int \dd y'\dd z'\, W(y'-z') | \zeta^>_t(z'-z)||\zeta^>_t(y'-x)|. 
	\end{align}
	\Cref{prop:bound-n} can be used in order to bound $\|\cN^{\frac{1}{2}}\psi\|$.  
	The remaining terms are of the exact same form as for $\mathrm{J}_3^t$ in \eqref{J3t} and can thus be bounded in the same way, with the result that, after integration over $t$, 
	\begin{equation}
	|\langle T_\lambda\Omega, \mathrm K_7 T_\lambda \Omega \rangle|  \le C L \varrho^{\frac 32}  \left(\int w\right) \|(\bS_6^>)^{\frac{1}{2}}T_\lambda \Omega\|  \left( \delta  \ln \delta^{-1}\right)^2.
	\end{equation}

	The remaining terms of order $6$  can be treated the same way. 
	This completes the proof.
\end{proof}

\section{Step 5. Conclusion of \Cref{thm:main-res}}\label{sec:proof main res}
We now have all the tools in hand in order to give the 
proof of \Cref{thm:main-res}. In the first step in \Cref{sec:2}, we have shown that it suffices to consider finite systems of size $\ell$ instead of taking the thermodynamic limit, see Eq.~\eqref{eq:step1}. This results in two error terms for the energy density of the order $\varrho^2 (\varrho \ell^2)^{- \frac 13}$ and $\varrho^2 (a/\ell)^{\frac {\varepsilon_0}{1+\varepsilon_0}}$, respectively. 

In the second step in \Cref{sec:3}, we have shown that instead of the original interaction potential $V$, one can work instead with the soft potential $W$ (defined in \eqref{def:Vtilde}). According to \Cref{lem:restrsoft pot}, the price to pay is an additive error of order $\varrho^2 (\ell/b)^2 (a/b)^{\varepsilon_0}$, as well as one resulting from $\cR_3$, which is estimated in \Cref{lem:error small boxes jastrow}. For its application, we shall need that 
\begin{equation}
\| f \nabla f\|_1 \leq C \frac b{\ln(b/a)} 
\end{equation}
which was shown in \cite[Eq.~(2.14)]{LY2d}. Hence the resulting error term is of order $\varrho^3 b^2 \delta^2$. Moreover, according to \Cref{lem:norm  jastrow}, we also have to take into account the change in norm due to the introduction of the Jastrow factor. This was bounded in \Cref{prop:conj-q4-new}, and we shall need that
\begin{equation}
\| 1- f^2 \|_1 \leq C \frac {b^2}{\ln(b/a)} 
\end{equation}
which follows from \cite[Eq.~(2.12)]{LY2d}. This adds another  error term of order $\varrho^2 (b \ell \varrho)^2 \delta  $. 
 
We shall choose $\ell \sim \varrho^{-\frac 12} \delta^{-\gamma_\ell}$ and $b \sim \varrho^{-\frac 12} \delta^{\gamma_b}$. All the above mentioned error terms are thus bounded by $\varrho^2 \delta^3$ if $\gamma_\ell \geq \frac 92$ and $\gamma_b \geq \gamma_\ell + 1$.

Finally, we still have to consider the constraint of filled Fermi balls in \Cref{thm:bound small box}, i.e., the assumption that $N_\sigma = |B_\sigma|$.  In order to fulfill it, it may be necessary to add more particles to the system, leading to a slightly increased density. It is easy to see that resulting densities are bounded by $\varrho_\sigma ( 1 + O (N_\sigma^{-\frac 12}))$, leading to an error term of the order $\varrho^{3/2} \ell^{-1}$, which is negligible for our choice of $\ell$ above, however.

We hence obtain the desired bound in \eqref{eq:main thm} as  a consequence of \Cref{thm:bound small box}.


\bigskip

\noindent
{\it Acknowledgments.}  This work was partially funded by the Deutsche Forschungsgemeinschaft (DFG, German Research Foundation) via TRR 352 --- Project-ID 470903074.



\end{document}